\newcommand{\Hrzline}{\hrule height.8pt depth0pt\kern2pt}
\def\whp{with high probability\xspace}
\def\tO{\tilde{O}}
\def\tOmega{\tilde{\Omega}}
\newcommand{\REMOVE}[1]{}
\newcommand{\eps}{\varepsilon}
\newcommand{\tuple}[1]{{\langle{#1}\rangle}}
\newcommand{\minn}[1]{\min\{#1\}}
\newcommand{\aset}[1]{\{#1\}}
\def\OPT{\mathrm{OPT}}
\newcommand{\expec}{{\mathbb{E}}}
\newcommand{\prob}{{\rm Pr}}
\def\Schaffer{Sch{\"a}ffer}
\def\LD2S{{\sc ld2s}\xspace}
\def\DkS{{\sc d$k$s}\xspace}
\def\SmES{{\sc s$m$es}\xspace}
\def\smes-lp{\hyperlink{smes-lp}{\textbf{SmES-LP}_q}}
\def\algA{algorithm ${\mathcal A}$\xspace}
\newcommand{\mdnote}[1]{}%
\newcommand{\enote}[1]{}%
\newtheorem{theorem}{Theorem}[section]
\newtheorem{lemma}[theorem]{Lemma}
\newtheorem{prop}[theorem]{Proposition}
\newtheorem{claim}[theorem]{Claim}
\newtheorem{corollary}[theorem]{Corollary}
\newtheorem{hypothesis}[theorem]{Hypothesis}
\theoremstyle{remark}
\newtheorem{remark}[theorem]{Remark}
\theoremstyle{definition}
\newtheorem{definition}[theorem]{Definition}
\def\compactify{\itemsep=0pt \topsep=0pt \partopsep=0pt \parsep=0pt}
\def\Lovasz{Lov\'asz\xspace}
\newcommand{\E}{\mathop{\mathbb E}}
\newcounter{this-list}
\newcounter{par-list}
\newlength{\parlistlength}
\begin{document}

\title{Everywhere-Sparse Spanners via Dense Subgraphs}

\author{
Eden Chlamt\'a\v{c}%
\thanks{Research supported in part
by an ERC Advanced grant.
Email: \texttt{chlamtac@post.tau.ac.il}
}
\\ Tel Aviv University
\and
Michael Dinitz%
\thanks{%
Work supported in part by an Israel Science Foundation grant \#452/08,
a US-Israel BSF grant \#2010418, and by a Minerva grant.
Email: \texttt{\{michael.dinitz,robert.krauthgamer\}@weizmann.ac.il}
}
\\ The Weizmann Institute
\and
Robert Krauthgamer%
\footnotemark[2]
\\ The Weizmann Institute
}

\maketitle

\begin{abstract}
  The significant progress in constructing graph spanners that are sparse (small number of edges) or light (low total weight) has skipped spanners that are everywhere-sparse (small maximum degree).  This disparity is in line with other network design problems, where the maximum-degree objective has been a notorious technical challenge.  Our main result is for the Lowest Degree 2-Spanner (LD2S) problem, where the goal is to compute a 2-spanner of an input graph so as to minimize the maximum degree.  We design a polynomial-time algorithm achieving approximation factor $\tO(\Delta^{3-2\sqrt{2}}) \approx \tO(\Delta^{0.172})$, where $\Delta$ is the maximum degree of the input graph.  The previous  $\tO(\Delta^{1/4})$--approximation was proved nearly two decades ago by Kortsarz and Peleg [SODA 1994, SICOMP 1998].

  Our main conceptual contribution is to establish a formal connection between LD2S and a variant of the Densest k-Subgraph (DkS) problem.  Specifically, we design for both problems strong relaxations based on the Sherali-Adams linear programming (LP) hierarchy, and show that ``faithful'' randomized rounding of the DkS-variant can be used to round LD2S solutions.  Our notion of faithfulness intuitively means that all vertices and edges are chosen with probability proportional to their LP value, but the precise formulation is more subtle.

  Unfortunately, the best algorithms known for DkS use the Lov\'asz-Schrijver LP hierarchy in a non-faithful way [Bhaskara, Charikar, Chlamtac, Feige, and Vijayaraghavan, STOC 2010].  Our main technical contribution is to overcome this shortcoming, while still matching the gap that arises in random graphs by planting a subgraph with same log-density.
\end{abstract}

\section{Introduction}

The significant progress made over the years in constructing graph spanners
shares, for the most part, two features: (1) the objective
is to minimize the total number/weight of edges; and (2) the techniques are
primarily combinatorial.  This second feature has started to change
recently, with the use of Linear Programming (LP) in several results~\cite{BGJRW09,BRR10,DK11a,BBMRY11}.  One of the earliest uses of
linear programming for spanners, though, was also one of the few
examples of a different objective function: in 1994, Kortsarz and Peleg~\cite{KP98}
considered the {\sc Lowest Degree $2$-Spanner} (\LD2S) problem,
where the goal is to find a $2$-spanner of an input graph that minimizes the maximum degree,
and used a natural LP relaxation to devise a polynomial-time algorithm
achieving approximation factor $\tO(\Delta^{1/4})$
(%
where $\Delta$ is the maximum degree).
They also showed that it is NP-hard to approximate \LD2S within a factor
smaller than $\Omega(\log n)$.
We make the first progress on approximating \LD2S since then,
by designing a new approximation algorithm with an improved approximation factor.

\begin{theorem} \label{thm:main}
For an arbitrarily small fixed $\eps>0$,
the \LD2S problem can be approximated in polynomial time within factor
$\tO(\Delta^{3-2\sqrt{2}+\eps}) \leq \tO(\Delta^{0.172})$.
\end{theorem}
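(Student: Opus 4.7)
The plan is to give a rounding algorithm for a strong LP relaxation of \LD2S, reducing its analysis to a rounding algorithm for a structured variant of \DkS. First, after guessing the optimal maximum degree $D$ of the spanner via binary search, I write the natural LP for \LD2S — variables $x_{uv}$ for including edge $uv$ in the spanner, and variables $z_{uv,w}$ for choosing $w$ as a $2$-path center witnessing edge $uv$, subject to degree bound $D$ and the constraint $\sum_w z_{uv,w}\ge 1$ for every $uv\in E$ — and strengthen it by lifting to $O(1/\eps)$ rounds of the Sherali--Adams hierarchy. The claim will be that this relaxation can be rounded losing a factor of only $\tO(\Delta^{3-2\sqrt{2}+\eps})$ in the degree.

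The heart of the approach is a reduction to a variant of \DkS. Fixing any vertex $w$, the LP locally encodes a dense-subgraph instance on $w$'s neighbors in $G$: pick a small subset $S_w\subseteq N(w)$ of spanner-neighbors so as to cover, via the pairs inside $S_w$, as many of $w$'s fractionally assigned edges $uv$ as possible. I would define the appropriate \SmES (small $m$-edge subgraph) variant and prove that any rounding algorithm for it that is \emph{faithful} — meaning that every vertex and edge is selected with marginal probability proportional to its LP value — can be composed across the choices at different $w$'s to yield a randomized rounding of the \LD2S LP. Faithfulness is what allows standard Chernoff and union-bound arguments to simultaneously certify (i) that every $uv\in E$ is $2$-spanned by some chosen $w$, and (ii) that every vertex $v$'s spanner-degree concentrates near $D\cdot\tO(\Delta^{3-2\sqrt{2}+\eps})$.

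The main technical obstacle, and the step I expect to be hardest, is designing a faithful approximation algorithm for this \DkS-variant that matches the BCCFV log-density threshold. The Bhaskara et al.\ Lov\'asz--Schrijver based algorithm for \DkS proceeds by conditioning on high-degree vertices, which destroys marginals and is therefore not faithful, so I cannot simply plug it in. Instead, I would base the \DkS-variant algorithm on $O(1/\eps)$ rounds of Sherali--Adams and perform rounding via a random-restriction/conditioning scheme that respects the LP marginals — conditioning only on events whose LP probabilities can be charged away, and otherwise drawing each variable according to its conditional marginal. The exponent $3-2\sqrt{2}$ should then arise from optimizing the trade-off between two regimes: an ``edge-count'' rounding bound coming from $x$-marginals, and a ``center-coverage'' bound coming from the density of the subgraph induced on $w$'s neighborhood in the planted hard instance.

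Finally I would assemble the pieces: solve the Sherali--Adams LP once; at each vertex $w$ apply the faithful \SmES-variant rounding; take the union of the selected edges as the spanner; and verify via concentration that w.h.p.\ every edge has a center and every degree is within $\tO(\Delta^{3-2\sqrt{2}+\eps})\cdot D$. Matching the log-density gap of random planted instances here is crucial — it shows that any further improvement would require an improved algorithm for dense subgraphs on those instances, making the reduction tight in a natural sense.
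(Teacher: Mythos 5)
Your high-level plan coincides with the paper's central conceptual move: reformulate \LD2S locally at each vertex $w$ as a small-edge dense-subgraph problem on $\Gamma(w)$, observe that a \emph{faithful} rounding (marginals proportional to LP values) for that subproblem is what makes the local solutions composable, identify that the Bhaskara et al.\ \DkS algorithm is not faithful because it conditions away the original marginals, and aim for the log-density threshold $n^{3-2\sqrt{2}}$. This is genuinely the right skeleton.

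However, there are two significant gaps. First, your reduction step is not quite right as stated. You claim that faithfulness plus Chernoff/union bounds will ``simultaneously certify that every $uv \in E$ is $2$-spanned by some chosen $w$,'' but a single pass cannot give high-probability coverage of \emph{every} demand when the per-demand coverage probabilities are only constant. The paper instead runs an iterative scheme: each round covers an $\tilde\Omega(1)$ fraction of the remaining demands in expectation (this is what faithfulness buys — property~\ref{def:expectedE-LB} for total coverage and property~\ref{def:faithful_prob_edge} to prevent over-covering a few demands), then the satisfied demands are removed, the LP is re-solved on the residual demand set, and after $\tO(1)$ rounds all demands are covered. Related to this, the LD2S relaxation is not simply the natural LP lifted by Sherali--Adams; it is a ``container'' LP that embeds a separate bipartite, nearly-regular \SmES relaxation for each neighborhood $G_w$ and each guessed regularity tuple (see Figure~\ref{fig:ld2s-lp}), which is also needed for the degree accounting in constraints \eqref{eq:LD2S_LP_decomp}--\eqref{eq:LD2S_LP_z_decomp}.

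Second and more importantly, your description of the faithful \SmES rounding — ``conditioning only on events whose LP probabilities can be charged away, and otherwise drawing each variable according to its conditional marginal'' — does not actually resolve the obstacle you correctly identified, and it is too vague to assess. The whole problem with conditioning is that after even one conditioning step, $y_{S^* \cup \{u\}}/y_{S^*}$ may have nothing to do with $y_{\{u\}}$, so ``drawing from the conditional marginal'' is precisely the non-faithful thing \cite{BCCFV10} does. The paper's way around this is structurally quite different and constitutes its main technical contribution: it (a) reduces to nearly-regular bipartite instances via Lemma~\ref{lem:regularity_reduction}, (b) tracks caterpillar prefixes exactly as in BCCFV but runs an elaborate multi-stage \emph{bucketing} of caterpillars (Section~\ref{sec:bucketing}) so that all relevant conditional LP values are uniform up to polylog factors within the chosen bucket, (c) proves via the regularity constraints \eqref{eq:smes1}--\eqref{LP:d-bound} and Sherali--Adams consistency (Lemma~\ref{lem:regularity}) that within the bucket no vertex or edge is selected with disproportionately high probability (Lemmas~\ref{lem:prob-bound-U} ff.), and (d) relates these uniform cardinality ratios back to the original $z$-values to establish skewed proportionality (Lemma~\ref{lem:uniformity-to-faithful}) and hence faithfulness (Lemma~\ref{lem:simplified-goal}). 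Without the bucketing-for-uniformity mechanism and the reversed counting argument that crucially uses Sherali--Adams (rather than Lov\'asz--Schrijver), there is no reason the conditional marginals you sample from will be controlled by the unconditioned ones, and the argument breaks exactly where you flagged it as ``the step I expect to be hardest.''
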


Degree bounds have a natural mathematical appeal and are also useful in many applications.
For example, one common use of spanners is in compact routing schemes (e.g.~\cite{TZ01r,Din07}), which store small routing tables at every node.  If we route on a spanner with large maximum degree, then \emph{a priori} the node of large degree will have a large table, even if the total number of edges is small.  Similarly, the maximum degree (rather than the overall number of edges) is what determines local memory constraints when using spanners to construct network synchronizers~\cite{PU89} or for efficient broadcast~\cite{ABP92}.  The literature on approximation algorithms includes recent exciting work on sophisticated LP rounding for network design problems involving degree bounds (e.g.~\cite{LNSS09,SL07}).

\paragraph{Dense subgraphs.}
Our central insight involves the relationship between
\emph{sparse} spanners and finding \emph{dense} subgraphs.
Such an informal relationship has been
folklore in the distributed computing and approximation algorithms communities;
for instance, graph spanners are mentioned as the original motivation for
introducing the {\sc Densest $k$-Subgraph} (\DkS) problem~\cite{KP93}, %
in that case in the context of minimizing the total number of edges in the spanner.
Surprisingly, we show that there is a natural connection between \DkS and the more challenging task of constructing spanners that have small maximum degree.  We prove that certain types of ``faithful'' approximation algorithms for a variant of \DkS which we call {\sc Smallest $m$-Edge Subgraph} (or \SmES) imply approximation algorithms for \LD2S,
and then show how to construct such an algorithm for \SmES;
combining these two together yields our improved approximation for \LD2S.

We seem to be the first to formally define and study \SmES, although it has been used in previous work (sometimes implicitly) as the natural minimization version of \DkS, see e.g.~\cite{Nut10,GHNR07,AGGN10}.
A straightforward argument shows that an $f$-approximation for \SmES implies an $\tO(f^2)$-approximation for \DkS. In the other direction, all that was known was that an $f$-approximation for \DkS implies an $\tO(f)$-approximation for \SmES.  One contribution of this paper is a non-black box improvement: while the best-known approximation for \DkS is $O(n^{1/4 + \eps})$, we give an $O(n^{3-2\sqrt{2} + \eps})$-approximation for \SmES.
This improvement is key to our main result about approximating \LD2S.

\paragraph{LP hierarchies.}
The log-density framework introduced in~\cite{BCCFV10} in the context of \DkS
(see Section~\ref{sec:DkS-overview}) predicts, when applied to \SmES,
that current techniques would hit a barrier at $n^{3-2\sqrt{2}}$,
precisely the factor achieved by our algorithm.
Here, the use of strong relaxations (namely LP hierarchies) is crucial, since simple relaxations have large integrality gaps. For example, one can show that the natural SDP relaxation for \SmES has an $\Omega(n^{1/4})$ integrality gap (for $G=G(n,n^{-1/2})$ and $m=n^{1/2}$), similarly to the $\Omega(n^{1/3})$-gap shown for \DkS by Feige and Seltser~\cite{FS97}.

While we borrow some of the algorithmic techniques developed for \DkS by \cite{BCCFV10}, the crucial need for a ``faithful'' approximation required us to develop new tools which represent a significant departure from previous work both in terms of the algorithm and its analysis. %
For example, our algorithm and analysis rely on the existence of consistent high-moment variables arising from the Sherali-Adams~\cite{SA90} hierarchy (see, e.g.\ Lemma~\ref{lem:regularity}) and not present in the \Lovasz-Schrijver~\cite{LS91} LP hierarchy (which was sufficient for~\cite{BCCFV10}).

\paragraph{Basic terminology.}
We denote the (undirected)\footnote{Our algorithm for \LD2S also works for the directed case, though for simplicity we focus on undirected graphs.} input graph by $G=(V,E)$, and let $n=|V|$.
For a vertex $v\in V$, let $\Gamma_G(v)=\aset{u:\ \aset{u,v}\in E}$
denote its neighbors in $G$.  If the graph $G$ is clear from context then we will drop the subscript and simply refer to $\Gamma(v)$.
Recall that the maximum degree of vertices in $G$ is denoted $\Delta$.
We suppress polylogarithmic factors by using the notation $\tO(f)$
as a shorthand for $f\cdot (\log n)^{O(1)}$.

As usual, a \emph{$2$-spanner} of $G$ is a subgraph $H=(V,E_H)$
such that every $u,v\in V$ that are connected by an edge in $G$
are also connected in $H$ by a path of length at most $2$.  This is a special case of the more general notion of a \emph{$k$-spanner}, which was introduced by Peleg and \Schaffer~\cite{PS89}
and has been studied extensively;
see also Section~\ref{sec:related}.

\subsection{LP-based approach for \LD2S}

The LP relaxation of \LD2S used by Kortsarz and Peleg \cite{KP98} is
very natural: for each edge $\{u,v\} \in E$ it has a variable $x_{\{u,v\}}\in[0,1]$,
plus additional variables $x_{\{u,v\};w}\in[0,1]$ for every $w\in\Gamma(u)\cap\Gamma(v)$
(i.e., whenever $u,v,w$ form a triangle in $G$).
The objective is to minimize $\lambda$, subject to a degree constraint
\begin{align}
  \textstyle \label{eq:lp-KP-first}
  \sum_{v \in\Gamma(u)} x_{\{u,v\}} \leq \lambda
  \qquad\qquad & \forall u\in V,
\intertext{and the constraints that every edge in $G$ (i.e.~demand pair)
is covered by either a $1$-path or a $2$-path in the spanner (subgraph):}
  \textstyle %
  x_{\{u,v\}} + \sum_{w \in \Gamma(u) \cap \Gamma(v)} x_{\{u,v\};w} \geq 1
  \qquad & \forall \{u,v\} \in E.
\\ \textstyle \label{eq:lp-KP-last}
  x_{\{u,v\};w} \leq \minn{x_{\{u,w\}},x_{\{v,w\}}}
  \qquad & \forall \{u,v\} \in E,\ w \in \Gamma(u) \cap \Gamma(v).
\end{align}

This LP relaxation seems like a natural place to start,
but it is actually quite weak, having integrality gap $\Omega(\sqrt{\Delta})$.
Indeed, let $G$ be a clique of size $\Delta+1$;
observe that every $2$-spanner of this $G$ must have
maximum degree at least $\sqrt{\Delta}$,
while the LP has value $\lambda \leq 1$
(by setting all $x$ variables to $1/\Delta$).
The same argument works for a disjoint union of $n/(\Delta+1)$ such cliques.  Kortsarz and Peleg \cite{KP98} nevertheless managed to achieve
$\tO(\Delta^{1/4})$ approximation (in polynomial-time).
Their algorithm combines a relatively simple rounding of this LP
with another partial solution that does not use the LP,
and whose analysis relies on a combinatorial lower bound on the optimum.

Our approach is to look at the Kortsarz-Peleg LP above from the perspective of a single vertex $w$.  Consider an integral solution $H$ to the LP above, i.e.~a valid $2$-spanner.
From the viewpoint of $w$, incident edges are included in $H$ for two possible reasons: either to span an edge connecting two neighbors of $w$ (i.e., including the edges $\{u,w\}$ and $\{v,w\}$ in order to span the edge $\{u,v\}$), or to span the edge itself.
It's reasonable to focus on the case where $H$ has significantly fewer edges than $G$,
and therefore many edges in $H$ are included because of the first reason.
Let $G_w$ be the subgraph of $G$ induced by the neighbors of $w$, and let $S$ be the subset of vertices of $G_w$ that are adjacent to $w$ in $H$.
Then from the perspective of $w$, including the edges between $w$ and $S$ in $H$ ``covers'' every demand formed by an edge (of $G_w$) that connects two vertices in $S$,
namely $E'=\aset{\aset{u,v}\in E:\ u,v\in S}$.
We can look at each neighborhood this way, and reinterpret \LD2S
as the problem of covering every demand in at least one neighborhood $G_w$,
while minimizing the maximum degree.

This viewpoint naturally suggests an LP-based algorithm for LD2S: solve the Kortsarz-Peleg LP above (or some other relaxation),
and for every $w\in V$, interpret $\sum_{\{u,v\} \in E : u,v \in \Gamma(w)} x_{\{u,v\};w}$ as the amount of ``demand'' that $w$ is supposed to cover locally, and $\sum_{u \in \Gamma(w)} x_{\{w,u\}}$ as $w$'s ``budget''.  Then for each $w \in V$ run a subroutine that covers the required amount of demand within the budget.  Since in \LD2S every demand \emph{must} be covered, this subroutine should cover the required amount of demand but is free to somewhat violate the budget constraint; the amount of violation will correspond to the \LD2S approximation guarantee.
We thus need to solve the {\sc Smallest $m$-Edge Subgraph} (\SmES) problem:
given a graph (in our case $G_w$) and a value $m$, choose as few vertices as possible subject to covering at least $m$ edges, where an edge is covered if both its endpoints are chosen.
Unfortunately, this reduction from \LD2S to \SmES does not work.  There are two main issues with it.
First, if $w$ chooses to add an edge to $u$ (i.e.~the \SmES algorithm at $G_w$ includes $u\in\Gamma(w)$) then this increases the degree of \emph{both} $w$ and $u$.
So even if $u$ stays within its own budget when $G_u$ is processed, many of its neighbors might decide to add their edge to $u$, and the degree at $u$ will be very large compared to its budget.
Second, since we run a \SmES algorithm at each vertex \emph{separately}, they might make poorly-correlated choices as to which demands they cover. This may cause a high degree of overlap in the demands covered by different vertices, leading to much less total demand covered. %
 Both of these problems stem from the same source: while we used the LP to define the total demand and budget at each vertex, we did not require the \SmES algorithm to act in a way consistent with the LP. %
 If we could force the \SmES subroutine to make decisions that actually correspond to the fractional solution, then both of these problems would be solved. This is our motivation for defining faithfulness.

\subsection{Faithful rounding}

While our formal notion of faithfulness is somewhat technical and depends on the exact problem that we want to solve, the intuition behind it is natural and can apply to many problems.  Suppose that we have an LP in which there are variables $\{x_e\}_{e \in U}$ (where $U$ is a universe of elements) as well as variables $\{x_{e, e'}\}_{e, e' \in U}$.  In our case, each $e$ is a vertex in a \SmES instance (i.e.~an edge in \LD2S) and each pair $\{e, e'\}$ is an edge in a \SmES instance (a $2$-path in \LD2S).  A standard way of interpreting fractional LP values is as \emph{probabilities}, i.e.~we think of $x_e$ as the probability that $e$ should be in the solution.
This interpretation naturally leads to independent randomized rounding, where we take $e$ into our solution with probability proportional to $x_e$.
By this interpretation, $x_{e, e'}$ should be the probability that both $e$ and $e'$ are in the solution.  But now we have a problem, since the natural constraints to force this type of situation in an integral setting,
namely constraints such as $x_{e,e'} \leq \minn{x_e,x_{e'}}$, correspond poorly to the probabilities obtained by independent randomized rounding.
For example, if $x_e = x_{e'} = x_{e, e'}$, then the LP ``believes'' that the probability that both $e$ and $e'$ are in the solution is $x_{e,e'}$, but under independent randomized rounding this event happens with probability $x_e\cdot x_{e'}=x_{e,e'}^2$, which could be much smaller. In a faithful rounding this does not happen: roughly speaking, faithfulness requires every element and pair of elements to be included in the solution with probability that is proportional to its LP value.

Many algorithms are naturally faithful, and indeed we suspect that one reason this notion has not been defined previously (to the best of our knowledge) is that in most cases it either falls out from the analysis ``for free'' or it is unnecessary.
The connection we show between \LD2S and faithful rounding for \SmES
might give one hope that the recent algorithmic breakthrough for \DkS by
Bhaskara, Charikar, Chlamtac, Feige and Vijayaraghavan~\cite{BCCFV10} could
imply better approximations for \LD2S. %
However, their result heavily uses hierarchies, which creates a formidable obstacle for faithful rounding, as we discuss in Section~\ref{sec:hierarchies}.

\subsection{LP hierarchies and faithful rounding}\label{sec:hierarchies}

Following the lead of Bhaskara et al.~\cite{BCCFV10},
we employ a strong LP relaxation for \SmES, which can be viewed as part of an LP hierarchy. %
In this context, a hierarchy is a sequence of increasingly %
tight relaxations to a 0-1 program,
usually obtained via a general mechanism that works for any 0-1 program. Such hierarchies (for both LPs and SDPs) have been suggested by Sherali and Adams~\cite{SA90}, \Lovasz and Schrijver~\cite{LS91}, and Lasserre~\cite{Lasserre02} (in our case, we use the Sherali-Adams hierarchy). A key property shared by these hierarchies is that they are locally integral;
that is, the $q$-th relaxation in the hierarchy coincides exactly with the convex hull of feasible 0-1 solutions, when both are projected onto any $q$-dimensional subspace corresponding to $q$ variables in the program.\footnote{Consequently,
if $N$ denotes the number of initial 0-1 variables, 
then the $N$-th relaxation is exactly the convex hull of all 0-1 solutions,
i.e., corresponds to solving the 0-1 optimization problem exactly.
The $q$-th relaxation in the sequence
can be written explicitly as a (linear) program of size $N^{O(q)}$,
and thus solved in time $N^{O(q)}$.
} %
Specifically for Sherali-Adams, the $q$-th relaxation for a given 
0-1 linear program
with variables $x_1,\ldots,x_N\in \aset{0,1}$, is obtained by extending the
0-1 program to include a variable $x_S$ for every $S\subseteq\aset{1,\ldots,N}$, $|S|\leq q$,
and then writing a ``locally integral'' relaxation for this extended 0-1 program
to guarantee that $x_S=\prod_{i\in S} x_i$ (by convention $x_\emptyset=1$).
For more details, see the survey \cite{CT12}.

There has been a recent surge of interest in the study of hierarchies of LPs (or other convex programs),
especially in connection with approximation algorithms for
combinatorial optimization problems.
Specifically, such strong relaxations can potentially lead to
progress on problems whose approximability has persistent gaps,
such as {\sc Vertex-Cover} and {\sc Minimum-Bisection}.
This line of attack was probably first described explicitly in \cite{ABL02}. %
However,
designing rounding procedures for these relaxations is often quite challenging.
Indeed, relatively few papers have managed to improve over
state-of-the-art approximation algorithms using hierarchies.  The few papers that do give improved approximation bounds using hierarchies
include~\cite{Chlamtac07, CS08, BCG09, CKR10, BCCFV10}.%
\footnote{There are also papers that recover known approximation bounds, say a PTAS,
while other ones show the limitations of these hierarchies 
by exhibiting integrality gaps for certain problems and hierarchies.
}
In particular, the last paper
designs a rounding procedure for an LP hierarchy for \DkS, %
which we adapt for \SmES.

Our plan is to leverage the success of \cite{BCCFV10},
but as mentioned before,
we face a serious obstacle --- their rounding procedure is not faithful.
They essentially condition on a small set of events,
for instance that the solution includes
a small set $S^*$ of \emph{carefully chosen} elements,
and then they use only the LP variables for sets containing this $S^*$,
namely, a variable $x_{S^*\cup \{u\} }$ is now thought of
as the LP variable for singleton $u$.
But clearly that variable might have very little to do with the actual $x_u$,
which is the quantity with respect to which we are trying to be faithful.

Our main technical contribution is to overcome this and design
a faithful rounding for \SmES based on Sherali-Adams.
Our algorithm is loosely based on the \DkS algorithm of~\cite{BCCFV10}, but numerous technical difficulties have to be resolved to make it faithful.  This, together with our reduction from \LD2S to faithful \SmES, gives our new approximation algorithm for \LD2S.  We believe that our notion of faithful rounding is of independent interest, and might prove useful for other approximation algorithms, especially in the context of using hierarchies such as Sherali-Adams.

For comparison, we mention that recent algorithmic results,
due to \cite{BRS11,GS11},
design rounding schemes for the Lasserre~\cite{Lasserre02} hierarchy.
Their rounding %
 appears to be faithful
(at least at an informal level), but it is not applicable to our context.
First, their analysis holds only for expander-like graphs,
and second, their rounding technique applies to problems such
as constraint satisfaction and graph partitioning, with no connection to \DkS.

\subsection{Related work}
\label{sec:related}

Graph spanners, first introduced by Peleg and Sch{\"a}ffer~\cite{PS89} and Peleg and Ullman~\cite{PU89}, have been studied extensively, with applications ranging from routing in networks (e.g. \cite{AP95,TZ05}) to solving linear systems (e.g. \cite{ST04a,EEST08}).  The foundational result on spanners is due to Alth\"ofer, Das, Dobkin, Joseph and Soares \cite{ADDJS93}, who gave an algorithm that, given a graph and an integer $k \geq 1$, constructs a $(2k-1)$-spanner with $n^{1+1/k}$ edges.  Unfortunately this result obviously does not give anything nontrivial for $2$-spanners, and indeed it is easy to see that there exist graphs for which every $2$-spanner has $\Omega(n^2)$ edges,
thus nontrivial absolute bounds on the size of a $2$-spanner are not possible.
Kortsarz and Peleg~\cite{KP94} were the first to consider relative bounds
for spanners. They gave a greedy $O(\log |E|/|V|)$-approximation algorithm for the problem of finding a $2$-spanner with the minimum number of edges.  This was then extended to variants of $2$-spanners, e.g.~\emph{client-server $2$-spanner}~\cite{EP01} and \emph{fault-tolerant $2$-spanner}~\cite{DK11a,DK11b} (for which only $O(\log \Delta)$ is known).  All of these bounds are basically optimal,
assuming $\mathrm{P}\neq\mathrm{NP}$,
due to a hardness result of Kortsarz~\cite{Kortsarz01}.

\subsection{Outline}
We begin in Section~\ref{sec:overview} by giving a high-level overview of our reduction from \LD2S to \SmES and our faithful rounding for \SmES. This overview is not technically accurate, but provides a simplified algorithm and analysis in order to provide the intuition behind our approach. In Section~\ref{sec:LPs} we give a formal description of the LP relaxations for both problems. In Section~\ref{sec:reduction} we give the details behind our reduction. Section~\ref{sec:SmES-algorithm} contains preliminaries for our \SmES rounding algorithm, while the algorithm itself can be found in Section~\ref{sec:SmES-details}, and the analysis is in Section~\ref{sec:SmES-analysis}. Finally, we conclude with a discussion of future directions Section~\ref{sec:discussion}.

\section{Overview of our approach}\label{sec:overview}

\subsection{Overview of LP relaxation for \LD2S and reduction to \SmES}

In this section we give an LP relaxation for \LD2S that uses a relaxation of \SmES as a black box, as well as an algorithm that shows how to use a faithful rounding for \SmES to approximate \LD2S.  Both the relaxation and the algorithm presented here are simplifications that ignore some technical details; the full relaxation and algorithm, as well as all proofs, can be found in Sections~\ref{sec:LPs} and~\ref{sec:reduction}.

We will actually give a relaxation for a slightly more general version of \LD2S in which instead of spanning \emph{all} edges we are given a subset $\hat E \subseteq E$ and are only required to span edges in $\hat E$.  Note that the optimal solution for demands $\hat E \subseteq E$ has maximum degree that is at most the maximum degree of the optimal solution to the original \LD2S problem (where all edges are demands).  This will allow us to cover some demands, re-solve the LP with only the remaining demands, and repeat.

Our relaxation is a feasibility LP, so we will guess the optimal degree bound $\lambda$ and use it as a constant in the LP.
For each $u \in V$, let $G_u=(V_u,E_u)$ be the induced subgraph of $(V, \hat E)$ on $\Gamma_G(u)$.
Our relaxation includes a fractional \SmES solution for each $G_u$: let $\textbf{SmES-LP}(G_u)$ be a linear relaxation of \SmES with variables $\{z^u_v\}_{v \in V_u} \cup \{z^u_e\}_{e \in E_u}$ with the property that $z^u_{\{w,v\}} \leq \min\{z^u_w, z^u_v\}$ for all $\{w,v\} \in E_u$.  In a 0-1 solution this means an edge is covered only if both of its endpoints are chosen.  Any polytope that includes this basic condition can be used, but obviously the tighter this relaxation is the tighter our \LD2S relaxation will be, and in the end we will use a much stronger relaxation for \SmES that is based on the Sherali-Adams hierarchy.

Our relaxation for \LD2S with demands $\hat E \subseteq E$ is given by~\eqref{eq':LD2S_LP_SmES}-\eqref{eq':LD2S_LP_cover}.

\begin{align}
\label{eq':LD2S_LP_SmES} &((z^{u}_v)_{v \in V(G_u)}, (z^{u}_e)_{e \in E(G_u)}) \in \textbf{SmES-LP}(G_u) & \forall u \in V\\
\label{eq':LD2S_LP_zx} & \max\{z^{u}_{v}, z^{v}_u\} \leq x_{\aset{u,v}} & \forall \{u,v\} \in E\\
\label{eq':LD2S_LP_degree}& \sum_{v \in \Gamma(u)} x_{\aset{u,v}} \leq \lambda & \forall u \in V \\
\label{eq':LD2S_LP_cover} & x_{\aset{u,v}} + \sum_{w \in \Gamma(u) \cap \Gamma(v)} z^{w}_{\{u,v\}} = 1& \forall \{u,v\} \in \hat E
\end{align}

Constraint~\eqref{eq':LD2S_LP_SmES} requires that for each neighborhood graph $G_u$ there is an associated fractional \SmES solution.%
\footnote{\label{foot:guess}%
Our actual relaxation (Figure~\ref{fig:ld2s-lp}) has a collection of \SmES instances for each neighborhood graph based on the possible degrees in a bipartite decomposition of an optimal solution, and we allow the LP to fractionally ``guess'' which of these instances to use.}  Constraint~\eqref{eq':LD2S_LP_zx} simply requires that for each edge, if either of the \SmES instances at its endpoints include it in their solution then we include it in the overall solution.  Constraint~\eqref{eq':LD2S_LP_degree} gives the degree bound, and~\eqref{eq':LD2S_LP_cover} is the main covering constraint, requiring that every demand is either included or is spanned by a $2$-path.  It is easy to see that this is a valid relaxation for \LD2S: if we are given a $2$-spanner $H$ of $G$ with maximum degree at most $\lambda$, for every edge $\{u,v\} \in E(H)$ we set $x_{\{u,v\}} = 1$ and $z^v_u = 1$ and $z^u_v =1$.  For every edge $\{u,v\} \in E \setminus E(H)$ we arbitrarily choose some $w \in V$ so that $\{u,w\} \in E(H)$ and $\{w,v\} \in E(H)$ (some such $w$ must exist since $H$ is a $2$-spanner) and set $z^w_{\{u,v\}} = 1$.  All other variables are $0$.  %

We now show that it is sufficient to design a rounding scheme for \SmES that is faithful according to the following definition.  Given a graph $G$, let $\mathcal L(G)$ be an LP that has a variable $\zeta_u$ for every $u \in V(G)$ and a variable $\zeta_e$ for every $e \in E(G)$ (we will later instantiate $\mathcal L(G)$ as various LP relaxations of \SmES).

\begin{definition} \label{def:faithful}
A randomized rounding \algA is a \emph{factor $f$ faithful rounding} for $\mathcal L(G)$ if, when given a valid solution $((\zeta_u)_{u \in V}, (\zeta_e)_{e \in E})$ to $\mathcal L(G)$, it produces a randomized (not necessarily induced) subgraph $H^*=(V^*,E^*)$ such that
\begin{enumerate}
\compactify
\item \label{def:faithful_prob_vertex} $\Pr[v \in V^*] \leq f \cdot \zeta_v$ for all $v \in V(G)$,
\item \label{def:faithful_prob_edge} $\Pr[\{u,v\}\in E^*] \leq \zeta_{\{u,v\}}$ for all $\{u,v\} \in E(G)$,
\item \label{def:absoluteVE-bounds}
$|V^*| \leq f\cdot \sum_{v\in V(G)} \zeta_{v}$ \, (with probability 1), and
\item \label{def:expectedE-LB}
$\E[|E^*|] \geq \tOmega(\sum_{\{u,v\} \in E(G)} \zeta_{\{u,v\}})$.
\end{enumerate}
\end{definition}

Observe that if \algA is a factor $f$ faithful rounding for a relaxation of \SmES
then it is also an $f$-approximation in the usual sense,
simply by conditions~\ref{def:absoluteVE-bounds} and~\ref{def:expectedE-LB} (up to a polylogarithmic loss in the amount of edges covered).
The converse, however is not true: many rounding algorithms that give an $f$-approximation are not faithful, including~\cite{BCCFV10}.

We now show that if we are given an \algA that is a factor $f(n)$ faithful rounding for \SmES (where $n$ is the number of vertices in the \SmES instance), there is an $\tO(f(\Delta))$-approximation algorithm for \LD2S that uses \algA as a black box. The reduction is given as Algorithm~\ref{alg':LD2S}. It begins with all edges as the demand set $\hat E$, and first solves the LP relaxation for \LD2S with demand set $\hat E$.  It adds every edge that has $x$ value at least $1/4$, and then uses \algA to round each of the $|V|$ \SmES instances in the relaxation.   At the end of the loop it updates the demands $\hat E$ by removing edges that were successfully covered by this process, and repeats.  Note that the \emph{edges} covered by the \SmES roundings are used only in the analysis; in the algorithm we take the \emph{vertices} output by each \SmES solution and include the appropriate edges in our spanner.

\begin{algorithm}[]
\caption{Approximation algorithm for \LD2S}
\label{alg':LD2S}
\SetKwInOut{Input}{Input}\SetKwInOut{Output}{Output}
\Input{Graph $G = (V, E)$, degree bound $\lambda$, factor $f(n)$ faithful rounding \algA for \SmES}
\Output{$2$-spanner $H = (V, E_H)$ of $G$}
$\hat E \leftarrow E$, $E_H\leftarrow \emptyset$
\\
\While{$\hat E \neq \emptyset$}{
	Compute a valid solution $\tuple{\vec x,\vec z}$ for
        LP~\eqref{eq':LD2S_LP_SmES}-\eqref{eq':LD2S_LP_cover} on graph $G$ with demands $\hat E$
	\\
	$E_x \leftarrow \{e \in E : x_e \geq 1/4\}$
	\\
	\ForEach{$u \in V$}{
		$H^*_{u} \leftarrow \mathcal A(G_u,\vec z^u)$ \label{alg':applyA}
			\tcp*{output of \SmES rounding $\mathcal A$}	
					
			$E_{u} \leftarrow \{\{u,v\} \in E : v \in V(H^*_{u})\}$
	}
	\tcp{Add all edges found in above rounding}
	$E_H \leftarrow E_H \cup E_x \cup\left(\bigcup_{u \in V} E_{u} \right)$
	\\
	\tcp{Remove satisfied demands}
        $\hat E \leftarrow \hat E \setminus\left(E_H \cup 
        \big\{ \aset{u,v} : \exists w\in V \textrm{ s.t. both } \{u,w\},\{w,v\} \in E_H 
        \big\} \right)$
}
\end{algorithm}

\begin{theorem} \label{thm':LD2StoSmES}
Let \algA be a factor $f(n)$ faithful rounding for \SmES (where $n$ is the number of vertices in the \SmES instance).
Then there is a (randomized) $\tO(f(\Delta))$-approximation for \LD2S.
\end{theorem}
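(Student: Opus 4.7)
My plan is to analyze Algorithm~\ref{alg':LD2S} iteration by iteration, showing that one pass through the while loop (a) increases every vertex's degree by at most $\tilde O(f(\Delta)\cdot\lambda)$ with high probability, and (b) shrinks $|\hat E|$ by a $(1-1/\mathrm{polylog}(n))$ factor in expectation. Since the LP is a valid relaxation, binary-searching for the smallest feasible $\lambda$ gives $\lambda\le\mathrm{OPT}$, and then (a) and (b) together imply the $\tilde O(f(\Delta))$ ratio: the geometric shrinkage means $\tilde O(1)$ iterations suffice to empty $\hat E$ w.h.p., so the total maximum degree stays $\tilde O(f(\Delta)\cdot\lambda)$.

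For part (a), fix $u\in V$ and decompose its per-iteration degree increase into three terms. Heavy edges contribute at most $4\lambda$, since each has $x_e\ge 1/4$ while $\sum_{v\in\Gamma(u)} x_{\{u,v\}}\le\lambda$ by~\eqref{eq':LD2S_LP_degree}. The call $\mathcal{A}(G_u,\vec z^u)$ adds $|V(H^*_u)|$ edges at $u$, which by condition~\ref{def:absoluteVE-bounds} of Definition~\ref{def:faithful} together with~\eqref{eq':LD2S_LP_zx} and~\eqref{eq':LD2S_LP_degree} is at most $f(\Delta)\sum_v z^u_v\le f(\Delta)\lambda$ deterministically. Finally, each $w\in\Gamma(u)$ adds an edge at $u$ with probability at most $f(\Delta)\cdot z^w_u\le f(\Delta)\cdot x_{\{u,w\}}$ by condition~\ref{def:faithful_prob_vertex} and~\eqref{eq':LD2S_LP_zx}; since the $|V|$ calls to $\mathcal{A}$ use independent randomness, these are independent Bernoullis whose mean is at most $f(\Delta)\lambda$, and a Chernoff bound plus union bound over $u\in V$ gives $\tilde O(f(\Delta)\lambda)$ w.h.p.

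The main obstacle is (b): condition~\ref{def:expectedE-LB} only lower-bounds the \emph{aggregate} expected number of edges picked across all SmES calls, not the probability of covering any specific demand. To convert aggregate into per-demand coverage, let $\hat E_{\mathrm{nh}}\subseteq\hat E$ denote the non-heavy demands and set $p_w^{\{u,v\}}:=\Pr[\{u,v\}\in E(H^*_w)]$ and $P_{\{u,v\}}:=\sum_{w\in\Gamma(u)\cap\Gamma(v)} p_w^{\{u,v\}}$. Summing condition~\ref{def:expectedE-LB} over all $w\in V$ and using $\sum_w z^w_{\{u,v\}}\ge 3/4$ for non-heavy demands (by~\eqref{eq':LD2S_LP_cover}) gives $\sum_{\{u,v\}\in\hat E_{\mathrm{nh}}} P_{\{u,v\}}\ge\tilde\Omega(|\hat E_{\mathrm{nh}}|)$, whereas condition~\ref{def:faithful_prob_edge} combined with~\eqref{eq':LD2S_LP_cover} gives $P_{\{u,v\}}\le 1$ per demand. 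Calling a non-heavy demand \emph{good} if $P_{\{u,v\}}\ge 1/\mathrm{polylog}(n)$ for a suitable threshold, a straightforward averaging argument then shows that $\tilde\Omega(|\hat E_{\mathrm{nh}}|)$ demands are good. Because the calls $\mathcal{A}(G_w,\vec z^w)$ use independent randomness across $w$, a good demand is covered with probability $1-\prod_w(1-p_w^{\{u,v\}})\ge 1-\exp(-P_{\{u,v\}})=\tilde\Omega(1)$; linearity of expectation then yields the claimed shrinkage of $|\hat E|$ (heavy demands are covered deterministically).

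Finally, $\E[|\hat E|]$ drops below $1/\mathrm{poly}(n)$ after $\tilde O(1)$ iterations, so Markov's inequality yields $\hat E=\emptyset$ w.h.p.\ (capping the while loop at that many iterations ensures polynomial running time while losing at most a negligible probability mass). Multiplying the iteration count by the per-iteration degree bound from (a) gives maximum degree $\tilde O(f(\Delta)\cdot\lambda)\le\tilde O(f(\Delta))\cdot\mathrm{OPT}$, completing the proof.
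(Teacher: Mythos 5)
Your proposal follows the same high-level approach as the paper's: iterate the LP--round--reduce loop, bound the per-iteration degree increase at each vertex by $\tO(f(\Delta)\cdot\lambda)$ (splitting into heavy edges, the edges bought by the \SmES instance centered at $u$, and those bought by \SmES instances at neighbors of $u$, then applying Chernoff + union bound), and show that each iteration shrinks $|\hat E|$ by a $\tOmega(1)$ factor in expectation so that $\tO(1)$ iterations suffice. All four parts of Definition~\ref{def:faithful} are used in the same places, and your geometric-shrinkage-plus-Markov argument for the iteration count is a clean (and slightly more direct) variant of the paper's.

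One step is too quick, though. You claim that summing condition~\ref{def:expectedE-LB} over $w$, together with $\sum_w z^w_{\{u,v\}}\ge 3/4$ for non-heavy demands, gives $\sum_{\{u,v\}\in\hat E_{\mathrm{nh}}} P_{\{u,v\}}\ge\tOmega(|\hat E_{\mathrm{nh}}|)$. What the summation actually yields is
$\sum_{\{u,v\}\in\hat E} P_{\{u,v\}} \ge \tOmega\bigl(\sum_{\{u,v\}\in\hat E}\sum_w z^w_{\{u,v\}}\bigr) \ge \tOmega(|\hat E_{\mathrm{nh}}|)$,
and the left-hand side still includes the heavy demands, which a priori could absorb the entire lower bound (each heavy demand can have $P_{\{u,v\}}$ up to $3/4$). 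The inequality you want does not follow as stated without an additional case split, which is precisely what the paper does: if $\sum_{\{u,v\}\in\hat E}x_{\{u,v\}}\ge|\hat E|/2$ then $E_x$ alone deterministically covers $\Omega(|\hat E|)$ demands; otherwise $\sum_{\{u,v\}\in\hat E}\sum_w z^w_{\{u,v\}}>|\hat E|/2$ and the heavy demands contribute only $O(|\hat E_{\mathrm{heavy}}|)\le O(|\hat E|/2)$ to the left-hand side, after which your averaging argument ($P_{\{u,v\}}\le 1$, so a $1/\polylog(n)$-fraction of demands have $P_{\{u,v\}}\ge 1/\polylog(n)$) goes through exactly as you wrote it. Your parenthetical about heavy demands being covered deterministically suggests you saw this, but the written chain of inequalities has a gap.

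A smaller caveat: rather than "the smallest feasible $\lambda$," the paper enumerates the $O(\log\Delta)$ powers of two in $[1,\Delta]$. The distinction matters because for a feasible $\lambda<\OPT$ (feasible only because of the relaxation gap) there is no a-priori guarantee that the LP remains feasible once $\hat E$ shrinks --- the graphs $G_u$ and hence the constraints of $\textbf{SmES-LP}(G_u)$ change with $\hat E$. With $\lambda\ge\OPT$, feasibility in every iteration is witnessed by the optimal $2$-spanner. Either fix (enumeration, or re-running the binary search each iteration and taking the resulting $\lambda$, which is still at most $\OPT$) restores correctness without affecting the final bound.
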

\begin{proof}
We provide only a sketch of the proof; details can be found in Section~\ref{sec:reduction}.  We may assume that our \LD2S algorithm guesses some $\lambda\in[\OPT,2\cdot\OPT]$ simply by trying the $O(\log \Delta)$ relevant values and reporting the best solution.  In this case, LP \eqref{eq':LD2S_LP_SmES}-\eqref{eq':LD2S_LP_cover} is guaranteed to have a feasible solution.  We now use Algorithm~\ref{alg':LD2S} with this value of $\lambda$.  It is easy to see that each iteration of the loop only increases the maximum degree by $\tO(f(\Delta)) \cdot \OPT$: adding edges in $E_x$ only costs a constant factor more than the fractional solution, Definition~\ref{def:faithful}(\ref{def:absoluteVE-bounds}) implies that rounding the \SmES solution at $u$ only increases the degree of $u$ by $f(|V_u|) \cdot \OPT\leq f(\Delta)\cdot \OPT$, and Definition~\ref{def:faithful}(\ref{def:faithful_prob_vertex}) implies that rounding the \SmES solution at neighbors of $u$ only increases the degree of $u$ by $\tO(f(|V_u|)) \cdot \OPT$ (with high probability).

So we just need to show that the number of iterations is (with high probability) at most $\tO(1)$.  To do this we prove that in every iteration the expected number of satisfied demands is at least $\tOmega(|\hat E|)$.  This is clearly true if $|E_x|$ is large.  If $|E_x|$ is small, then summing \eqref{eq':LD2S_LP_cover} over all $\{u,v\} \in \hat E$ implies that the total amount of demand covered by \SmES instances (i.e.~the $z$ variables) is large, say $\Omega(|\hat E|)$.  Now Definition~\ref{def:faithful}(\ref{def:expectedE-LB}) guarantees that when we round an instance we still cover almost as much demand as the LP (up to a polylogarithmic factor), and Definition~\ref{def:faithful}(\ref{def:faithful_prob_edge}) implies that this coverage is spread out among the demands in a way that corresponds to the LP.  This, together with~\eqref{eq':LD2S_LP_cover}, implies that the total coverage is large but no edge is ``overcovered''.  Thus while we cannot guarantee that \emph{every} demand is covered with \emph{high} probability in each iteration, an averaging argument guarantees that \emph{many} demands are covered with \emph{reasonable} probability.  This is enough to give the expected amount of coverage that we need.
\end{proof}

\subsection{Overview of our faithful rounding algorithm for \SmES}
In this section we describe our faithful factor $n^{3-2\sqrt{2}+\eps}$ rounding algorithm for \SmES. While the description of the full algorithm is rather lengthy, and therefore deferred to Sections~\ref{sec:SmES-algorithm},~\ref{sec:SmES-details}, and~\ref{sec:SmES-analysis}, we give a high-level overview (with some technical details) and concentrate on a special case which illustrates the main ideas in the algorithm and its analysis.

\subsubsection{{\sc Densest $k$-Subgraph} and the log-density framework}\label{sec:DkS-overview}
We follow the framework introduced in~\cite{BCCFV10}.  They begin by defining the notion of log-density of a graph as $\log_n(D_{\mathrm{avg}})$, where $D_{\mathrm{avg}}$ is the average degree and $n$ is the number of nodes.  They then asked the following question: how hard is it to distinguish between 1) a random graph, and 2) a graph containing a subgraph with roughly the same log-density as the first graph?

More formally, they pose the following {\sc Dense versus Random} promise problem, parameterized by $k$ and constants $0<\alpha,\beta<1$: given a graph $G$, distinguish between the following two cases:
\begin{enumerate}
\item $G=G(n,p)$ where $p=n^{\alpha-1}$ (this graph has log-density concentrated around $\alpha$). %
\item $G$ is adversarially chosen so that the densest $k$-subgraph has log-density $\beta$ (where $k^{1+\beta}\gg pk$).
\end{enumerate}

For certain ranges of parameters, it seems quite challenging to efficiently distinguish when $\beta<\alpha$. In fact, the following hypothesis is %
 consistent with the current state of our knowledge:
\begin{hypothesis}\label{hyp:dense-v-random} For all $0<\alpha<1$, for all sufficiently small $\eps>0$, and for all $k\leq\sqrt{n}$, we cannot solve {\sc Dense versus Random} in polynomial time (w.h.p.) when $\beta\leq\alpha-\eps$.
\end{hypothesis}

The above hypothesis (if true) has immediate implications for the hardness of approximation of both \DkS and \SmES. Concretely, for \SmES, let $m=k^{1+\beta}$ be the number of edges in $k$-subgraph in the second case. We know that in the first case w.h.p.\ the smallest $m$-edge subgraph has size at least $\widetilde{\Omega}(\min\{m,\sqrt{mn^{1-\alpha}}\})$. Thus, if we could achieve approximation ratio $\ll k/\min\{m,\sqrt{mn^{1-\alpha}}\}$, this would refute Hypothesis~\ref{hyp:dense-v-random} for the corresponding parameters. %
For %
$k=n^{\sqrt{2}-1}$ and $\alpha=\sqrt{2}-1$, the hypothesis implies that there exists no $n^{3-2\sqrt{2}-\eps}$-approximation for \SmES.

While~\cite{BCCFV10} matches the gap predicted by the log-density model for \DkS with an $n^{1/4+\eps}$ approximation for \DkS (even for general graphs), we also match the predicted gap for \SmES with an $n^{3-2\sqrt2+\eps}$-approximation for \SmES.

\subsubsection{Parametrization and simplifications}
In order to achieve a faithful rounding, we will make certain assumptions (which we later justify) about the structure of the intended solution to the LP relaxation. In particular, we will assume that the subgraph represented by the solution is regular, and that we are allowed to ``guess'' the size of the subgraph, $k$, and the degree in the subgraph, $d$, thus $m=\Theta(kd)$ (see Section~\ref{sec:SmES_LP}).%

We also make the following simplifying assumptions. Let $f=f(n,k,d)$ be the intended approximation factor, which will be determined shortly. 
We may assume that $f\leq d$, since it is easy to achieve a faithful $O(d)$-approximation (see Appendix~\ref{sec:small-degrees}). %
We also assume that the maximum degree in the input graph is at most $D=nd/(kf^2)$ (see Appendix~\ref{sec:max-deg-bound}).%

Finally write $\alpha=\log_n(D)$, and define our intended approximation $f$ implicitly as the value which satisfies $f=n^{\alpha(1-\alpha)/(1+\alpha)}$ (together with the definition of $D$ we can derive an explicit expression for $\alpha$ and $f$). Note that maximizing this expression over $\alpha\in[0,1]$ shows that
$f\leq n^{3-2\sqrt2}$.

\subsubsection{LP relaxation and faithful rounding for \SmES}
With the previous assumptions in mind, we have the following feasibility-LP relaxation (simplified for this overview) which is implied by $q$ rounds of Sherali-Adams, with variables $\{z_T\mid T\subset V\cup E\text{, }|T|\leq q\}$ (in the intended 0-1 solution, $z_T=1$ if and only if all vertices and edges in $T$ are in the subgraph):

\begin{align}
\label{eq:simple-k-bound}
&\sum_{v \in V} z_{T \cup \{v\}} = k z_T &\forall T\subset V\cup E\text{, }|T|\leq q-1 \\
\label{eq:simple-d-bound} & \sum_{u \in \Gamma(v)} z_{T \cup \{u\}} = d z_T &  \forall T\subset V\cup E\text{, }|T|\leq q-1, \forall v \in T \cap V \\
\label{eq:simple-smes-consistency} & z_T = z_{T\cup\{u\}}=z_{T\cup\{v\}}=z_{T\cup\{u,v\}} & \forall T\subset V\cup E\text{, }|T|\leq q-2, \forall \{u,v\} \in T\cap E\\
\label{eq:simple-smes-monotonicity}
& 0 \leq z_T \leq z_{T'}\leq z_\emptyset=1 & \forall T' \subseteq T
\end{align}

The algorithm in its full generality is based on the caterpillar structures introduced in~\cite{BCCFV10} (where the caterpillar structure depends on $\alpha$). Let us concentrate here on the case where $\alpha=1/s$ for some (fixed) integer $s>0$, in which case the caterpillar is simply a path of length $s$. At its core, the algorithm (for this value of $\alpha$) relies on an LP-analogue of the following combinatorial argument. Fix a vertex $v_0$ in the optimum subgraph. For all $t=1,\ldots, s$, let $P^{v_0}_t$ be the union of all (possibly self-intersecting) paths of length $t$ in the subgraph starting at $v_0$, and let $V^{v_0}_t$ be final endpoints of those paths. Note that $|V^{v_0}_1|=d$ and that $|V^{v_0}_s|\leq k=\frac{dn}{f^2D}=\frac{d}{f^2}\cdot n^{1-\alpha}=\frac{d}{f^2}f^{(1+\alpha)/\alpha}=df^{s-1}$. Therefore, there must be some $t\in\{1,\ldots,s-1\}$ for which $|V^{v_0}_{t+1}|/|V^{v_0}_t|\leq f$. Now consider the subgraph at this step $H^{v_0}_t=(V^{v_0}_t,V^{v_0}_{t+1},\{\{v_t,v_{t+1}\}\mid \exists v_0-\ldots-v_t-v_{t+1}\in P^{v_0}_{t+1}\})$.  Since the vertices in $V^{v_0}_t$ all have degree $d$, the average degree of vertices in $V^{v_0}_{t+1}$ is at least $d/f$. It turns out that even without access to the optimum subgraph we can isolate a subgraph with average degree at least $d/f$ and at most $kf$ vertices (this is essentially because by the degree bound, the number of vertices at any intermediate stage is at most $D^{s-1}=n^{1-\alpha}=\frac{n}{D}=k\cdot\frac{f^2}{d}\leq kf$). This essentially gives an $f$-approximation for \SmES (since we can repeat until accumulating $m$ edges).

Here we come to the fundamental difficulty in adapting such an approach to achieve a faithful rounding. The combinatorial algorithm depends on choosing an initial vertex $v_0$ which is actually in the optimum subgraph. 
The analogous LP-rounding algorithm uses the LP values ``conditioned on choosing $v_0$'', that is, values of the form $z_{S\cup\{v_0\}}/z_{v_0}$ instead of the original $z_S$ variables 
(where $S$ corresponds to one or more vertices/edges along the path). 
However, it is the $z_S$ variables (in particular for singleton sets $S$ representing one vertex or one edge) which we want to be faithful to in our rounding.\footnote{This problem is only exacerbated in the general case, when the caterpillar has additional leaves to condition on.} Unfortunately, these two LP solutions might be almost completely unrelated.

To overcome this difficulty, we use a somewhat elaborate bucketing scheme, to ensure that all the relevant LP values are reasonably uniform, as follows. 
Denote by $\mathcal{P}^{v}_t$ the set of all length $t$ paths in the graph starting at vertex $v$, and by $z_p$ the variable for a path $p$ 
(i.e., $z_T$ where $T$ is the set of edges and vertices in $p$, 
or by Constraint~\eqref{eq:simple-smes-monotonicity}, 
$T$ could equivalently be just the edges in~$p$).
The core of the analysis of the LP-analogue relies on the equality
$$\sum_{v}\sum_{p\in\mathcal{P}^v_t}z_p=\sum_{v}d^tz_{\{v\}}=kd^t,$$
obtained by Constraint~\eqref{eq:simple-k-bound} and repeated applications of~\eqref{eq:simple-d-bound},
but in fact it can use any set of length-$s$ paths $\mathcal{P}$ for which $\sum_{p\in\mathcal{P}}z_p=\widetilde{\Omega}(kd^s)$. %
Thus by partitioning the set of paths $\bigcup_vP^v_s$ into buckets 
and choosing a bucket $\mathcal P$ with the largest LP value, we can ensure that in every path $p=u_0-u_1-\ldots-u_s$ in the bucket $\mathcal P$ certain LP values (like the ones corresponding to entire paths, $z_p$, or the ones corresponding to path prefixes, $z_{\{\{u_{i-1},u_{i}\}\mid i\in[t]\}}$ for some $t\in[s-1]$, or to vertices in certain positions, $z_{\{u_t\}}$, or to ``conditioned'' values, $z_{\{u_0,u_t\}}/z_{\{u_0\}}$) are all independent of the choice of path (up to a constant factor). In other words, within the bucket $\mathcal P$ (say, vertices $u_t$ for a fixed $t\in\{0,\ldots,s\}$), the corresponding LP values will be essentially uniform over the choice of starting vertex $u_0$ and path~$p$.

Using the uniformity obtained via the above bucketing scheme, 
we can relate the algorithm (which is based on the conditioned LP values) 
to the original LP values.
After some additional combinatorial bucketing, we can run the following algorithm: let $\mathcal{V}_0$ be the set of starting vertices $u_0$ (i.e.~paths of length $0$) that survive the bucketing, pick a starting vertex $u_0\in\mathcal{V}_0$ uniformly at random, and for whichever level $t\in[s-1]$ that gives the approximation guarantee (it can be shown that such a $t$ exists), output the level $t$ subgraph $H^{u_0}_t=\{\{u_t,u_{t+1}\}\mid \exists p=u_0-u_1-\ldots-u_s\in\mathcal{P}\}$. Since LP values are uniform, 
the question essentially becomes, how do we guarantee that no bucketed vertex (or edge) is chosen with much higher probability than the rest (or the average)? This is where we crucially use the regularity Constraint~\eqref{eq:simple-d-bound} (as opposed to, say, a minimum degree constraint, as in~\cite{BCCFV10}). Roughly speaking, individual vertices and edges cannot be reached by a disproportionately large fraction of vertices $u_0\in\mathcal{V}_0$, because then the relative total LP weight of the corresponding paths (to such a vertex or edge) would exceed $d^t$.

For the sake of concreteness, let us consider one specific aspect of faithful rounding: the probability with which the level $t$ vertices $u_t$ are chosen. %
Let $\mathcal{P}_t$ be the set of length $t$ prefixes of paths in $\mathcal{P}$,
let $\mathcal{P}^{u_0}_t$ be the set of paths in $\mathcal{P}_t$ that start with 
the vertex $u_0\in\mathcal{V}_0$, 
and let $\mathcal{V}_t$  (resp.\ $\mathcal{V}^{u_0}_t$) be the set of level $t$ endpoints of paths in $\mathcal{P}_t$ (resp.\ in $\mathcal{P}^{u_0}_t$).
 By the approximation guarantee (via an LP analogue of the above combinatorial argument), we have
\begin{equation}\label{eq:smes-overview-comb-bound} |\mathcal{V}^{u_0}_t|\leq fk.
\end{equation}
Suppose the bucketing also ensures that every $u_0\in\mathcal{V}_0$ and $u_t\in\mathcal{V}^{u_0}_t$ are connected by roughly the same number of $\mathcal{P}_t$ paths (up to a constant factor), which we denote by $h$. Also, suppose the cardinalities $|\mathcal{V}^{u_0}_t|$ are roughly uniform for different choices of $u_0$.
Then, abusing notation, we can write the number of paths as $|\mathcal{P}_t|\approx |\mathcal{V}_0|\cdot|\mathcal{V}^{u_0}_t|\cdot h$, and in particular, the total weight of paths $p\in\mathcal{P}_t$ is $z_p|\mathcal{V}_0|\cdot|\mathcal{V}^{u_0}_t|\cdot h\approx kd^t$. %
Now, by repeated applications of Constraint~\eqref{eq:simple-d-bound}, we have that the total weight of paths leading to a specific vertex $u_t\in\mathcal{V}_t$ is $z_p|\{u_0\mid u_t\in\mathcal{V}^{u_0}_t\}|h\leq z_{u_t} d^t$ (note that this argument reverses the direction of paths in the algorithm and so crucially depends on the existence of consistent high-moment Sherali-Adams variables, which are not present in the \Lovasz-Schrijver hierarchy used in \cite{BCCFV10}). Combining this with the (approximate) equality above, we can bound the probability that a vertex $u_t$ is included in the output (the level $t$ subgraph) as
\begin{equation*}
\frac{|\{u_0\mid u_t\in\mathcal{V}^{u_0}_t\}|}{|\mathcal{V}_0|} \leq \frac{d^tz_{u_t}}{z_ph|\mathcal{V}_0|} =\frac{|\mathcal{V}^{u_0}_t|z_{u_t}}{k} \leq fz_{u_t},\text{ by~\eqref{eq:smes-overview-comb-bound}.}
\end{equation*}

\section{LP relaxations} \label{sec:LPs}

In this section we develop the basic LP relaxations that we will use, both for \SmES and for \LD2S.  We begin with \SmES, since we will need the LP we develop in order to define the LP for \LD2S.

\subsection{LP relaxation for \SmES} \label{sec:SmES_LP}

It turns out that it is easier to develop faithful rounding algorithms for \SmES if we make certain simplifying assumptions. Namely, we would like to assume that the input graph is bipartite, and that the optimal solution is nearly-regular
(vertices on the same side of the bipartition have degree within an $O(\log n)$ factor of each other).  These assumptions will affect our relaxation, so we discuss them here.  Since these assumptions involve manipulations of the optimal (or at least an unknown) subgraph, one should view these as a thought experiment which justifies the correctness (i.e.\ feasibility) of our relaxations.

Formally, we define nearly-regular as follows:
\begin{definition}
A bipartite graph $G = (U_0, U_1, E)$ is called \emph{$(k_0, k_1, d_0, d_1)$-nearly regular} if for every $b\in[2]$ we have $|U_b| = k_b$ and the following condition on the degrees holds:
$$
d_b \geq \max_{u_b\in U_b}\deg(u_b) \ge \min_{u_b\in U_b}\deg(u_b)
  \ge\Omega(d_b/\log n)
$$
\end{definition}
By convention, we will assume that $k_0\geq k_1$, which implies that $d_1\geq \Omega(d_0 / \log n)$. %

The next lemma shows that any graph $H$ can be changed into a nearly-regular graph with almost the same number of edges.  In particular, any dense subgraph $H$ can be made nearly regular without losing too much in the density.

\begin{lemma} \label{lem:regularity_reduction}
Given an arbitrary graph $H = (V,E)$ on $n$ vertices, there exist values
$k_0, k_1, d_0, d_1$ and disjoint vertex sets $U_0,U_1 \subseteq V$ with the properties that the induced bipartite subgraph $H'$ of $H$ on $(U_0, U_1)$ is $(k_0, k_1, d_0, d_1)$-nearly regular and also has $|E(U_0, U_1)| \geq \Omega(1/\log^2 n) |E|$.
\end{lemma}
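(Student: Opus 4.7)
The plan is to reduce $H$ to a nearly-regular bipartite subgraph in three stages. First, I would bipartize $H$ by taking a uniformly random partition $V = V_0 \sqcup V_1$; each edge crosses the cut with probability $1/2$, so some partition retains at least $|E|/2$ edges, giving a bipartite subgraph $G' = (V_0, V_1, E')$.

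Next, perform two rounds of sequential degree bucketing. Partition $V_0$ geometrically by $G'$-degree via $V_0^i = \{v \in V_0 : 2^i \le \deg_{G'}(v) < 2^{i+1}\}$, yielding $O(\log n)$ nonempty buckets, and pick $U_0 = V_0^{i^*}$ maximizing $|E(V_0^i, V_1)|$, so that $|E(U_0, V_1)| \ge |E'|/O(\log n)$. Then bucket $V_1$ by degree in the restricted graph $(U_0, V_1)$ and pick $U_1 = V_1^{j^*}$ with $|E(U_0, U_1)| \ge |E(U_0, V_1)|/O(\log n)$. Set $d_0 = 2^{i^*+1}$ and $d_1 = 2^{j^*+1}$. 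By construction every $v \in U_1$ has degree in $[2^{j^*}, d_1)$ within the induced bipartite subgraph $(U_0, U_1)$, since restricting $V_1 \to U_1$ does not affect $U_1$-to-$U_0$ degrees. However, $U_0$-degrees in $(U_0, U_1)$ may be much smaller than in $(U_0, V_1)$, so an additional cleanup is needed.

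The cleanup is an alternating trim: fix a sufficiently large constant $C$ and repeatedly remove any vertex on side $b$ whose current induced degree is below $d_b/(C \log n)$, until no such vertex remains. Since each vertex can be trimmed at most once from its own side, the total edge loss is at most $|U_0| d_0/(C \log n) + |U_1| d_1/(C \log n)$. The key inequalities (obtained directly from the bucketing step) are $|U_0| d_0 \le O(\log n) \cdot |E(U_0, U_1)|$ (using $|U_0| \cdot 2^{i^*} \le |E(U_0, V_1)| \le O(\log n) \cdot |E(U_0, U_1)|$) and $|U_1| d_1 \le 2|E(U_0, U_1)|$ (since each $U_1$-vertex has at least $2^{j^*}$ neighbors in $U_0$). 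Thus the total trimming loss is $O(|E(U_0, U_1)|/C)$, a small constant fraction of the retained edges when $C$ is large.

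The main technical subtlety is ensuring that this alternating trim terminates with bounded total loss, despite the fact that trimming one side can decrease degrees on the other side and re-trigger trimming there. Fixing the trim thresholds globally (independent of the current graph) resolves this: each vertex can only be removed once from its own side, so the total edge-loss bound above is automatic without tracking convergence in detail. Combining the factor of $1/2$ from bipartization, the $1/O(\log^2 n)$ from bucketing, and the constant-factor loss from trimming yields a final bipartite subgraph $H'$ with $|E(H')| \ge \Omega(|E|/\log^2 n)$ that is $(k_0, k_1, d_0, d_1)$-nearly regular (with $k_b = |U_b|$ after trimming, swapping sides if necessary so that $k_0 \ge k_1$).
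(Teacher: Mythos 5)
Your proposal is correct and follows essentially the same route as the paper: random bipartization, two rounds of geometric degree-bucketing picking the heaviest bucket each time, and an alternating trim with fixed thresholds whose total edge loss is bounded by a small constant fraction. The only (immaterial) difference is that the paper trims the $U_1$ side at threshold $d_1/6$ rather than $d_1/(C\log n)$; both comfortably satisfy the nearly-regularity definition and give the same $\Omega(|E|/\log^2 n)$ guarantee.
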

\begin{proof}
First, note that a random cut in the graph gives a bipartition which preserves at least half the edges (in expectation). Thus there exists a bipartition $(V_0,V_1)$ with $|E(V_0, V_1)| \geq |E|/2$.  For the rest of the proof we will only use these edges between $V_0$ and $V_1$.  Now,
partition the vertices in $V_0$ into $\log n$ buckets by degree (into $V_1$), where in each bucket, degrees vary up to a factor of at most $2$. Let $\tilde{U}_0$ be the bucket which sees the most edges (at least $|E|/2\log n$). Let $d_0$ be the maximum degree in $\tilde{U}_0$. Note that the average degree in $\tilde{U}_0$ is at least $d_0/2$.

Now partition the vertices of $V_1$ into buckets by their degree into $\tilde{U}_0$. Let $\tilde{U}_1$ be the bucket with the largest number of edges into $\tilde{U}_0$ (at least $|E|/2\log^2 n$). Let $d_1$ be the maximum degree (into $\tilde{U}_0$) in $\tilde{U}_1$, and note that the average degree in $\tilde{U}_1$ is at least $d_1/2$. Since the number of edges in this step went down by a most a $\log n$ factor, the average degree in $\tilde{U}_0$ (into $\tilde{U}_1$) is at least $d_0/(2\log n)$. If we now iteratively remove every vertex in $\tilde{U}_0$ with degree (into $\tilde{U}_0$) at most $d_0/(6\log n)$, and every vertex in $\tilde{U}_1$ with degree (into $\tilde{U}_0$) at most $d_1/6$, then it is easy to see that at least a $1/3$-fraction of the original edges in $E(\tilde{U}_0,\tilde{U}_1)$ remains, and this final step guarantees the regularity conditions on both sides.
\end{proof}

Before defining our relaxation for general graphs, let us first define a feasibility LP relaxation for the decision problem of whether, given a %
bipartite graph $G'=(V_0,V_1,E)$, %
the graph contains a $(k_0, k_1, d_0, d_1)$-nearly regular subgraph, with $k_0$ vertices on the $V_0$ side, and $k_1$ vertices on the $V_1$ side. For any integer $q$, let $\mathcal T_q =\mathcal T_q(G') = \{T \subseteq V_0 \cup V_1 \cup E':\, |T| \leq q\}$.

We denote by ${\textbf{Bipartite-SmES-LP}_q(G', k_0, k_1, d_0, d_1)}$
the set of solutions to a feasibility LP
given by Constraints~\eqref{eq:smes1}-\eqref{eq:smes6},
over the variables $(y_T)_{T \in \mathcal T_q}$,
as depicted in Figure~\ref{fig:bipartite-smes-lp}. %
These constraints are actually implied by $q$ rounds of Sherali-Adams applied
to a basic \SmES LP that has variables for all vertices \emph{and} edges.

\begin{figure}[htb]
\Hrzline%
\caption{Relaxation $\textbf{Bipartite-SmES-LP}_q(G', k_0, k_1, d_0, d_1)$
on variables $(y_T)_{T \in \mathcal T_q}$}
\label{fig:bipartite-smes-lp}
\hrulefill
\begin{align}
\hrulefill
\label{eq:smes1}
&\sum_{v \in V_b} y_{T \cup \{v\}} = k_b y_T &\forall b \in \{0,1\}, \forall T \in \mathcal T_{q-1} \\
\label{LP:d-bound} & \Omega(\tfrac{1}{\log n})\, d_b y_T \leq \sum_{u \in \Gamma(v)} y_{T \cup \{\{u,v\}\}} \leq d_b y_T & \forall b \in \{0,1\}, \forall T \in \mathcal T_{q-1}, \forall v \in T \cap V_b \\
\label{eq:smes3} & y_T = y_{T\cup\{u\}}=y_{T\cup\{v\}}=y_{T\cup\{u,v\}} & \forall T \in \mathcal T_{q-2}, \forall \{u,v\} \in T\\
\label{eq:smes6} & 0 \leq y_T \leq y_{T'}\leq 1 & \forall T' \subseteq T \in \mathcal T_q%
\end{align}
\Hrzline%
\end{figure}

\paragraph{Remark.}
Normally, such an LP also includes the normalization $y_\emptyset = 1$.
However, for our intended usage of this LP, namely for \LD2S,
it will be important to leave this variable unconstrained,
except for the upper bound given by~\eqref{eq:smes6}.  But when we round this LP we will be able to assume that $y_{\emptyset} = 1$; whenever we choose to round it we will also scale it by $1/y_{\emptyset}$.  Thus when we consider faithful rounding algorithms for this LP we will always be assuming that $y_{\emptyset} = 1$.

\paragraph{Remark.}
This LP has size (variables and constraints) at most $n^{O(q)}$,
because the number of variables is dominated by $O(|\mathcal T_{q}|)$,
and the number of constraints is dominated by $O(|\mathcal T_{q}|^2)$.

Notice that according to Constraint~\eqref{eq:smes3}, an edge is covered
if and only if both endpoints are chosen \emph{and} we decide to take the edge.
For example, it is feasible to have $y_{\aset{u}}=y_{\aset{v}}=y_{\{u,v\}} = 1$ (i.e.~both $u$ and $v$ are chosen) but $y_{\{\{u,v\}\}} = y_{\{u,v,\{u,v\}\}} = 0$ (i.e.~for some reason the LP does not count this edge as being covered).  Normally for covering problems such as \SmES there is no reason not to include an edge if both vertices are included, but because of how we use \SmES in our algorithm for \LD2S it will be important for us to be able to include two vertices without necessarily covering the edge between them.

Observe that if $G'$ %
contains a $(k_0, k_1, d_0, d_1)$-nearly regular subgraph $H$,
then the above LP %
 has a feasible solution
that corresponds to $H$ in the sense that
$y_T=1$ if and only if every vertex and edge in $T$ is present in $H$ (and $y_\emptyset=1$).

Now, consider a (not necessarily bipartite) graph $G$.  Lemma~\ref{lem:regularity_reduction} implies that for any subgraph $H$ of $G$ there is a nearly-regular bipartite subgraph of $H$ with almost as many edges.  However, since we do not know a priori the bipartition of the subgraph (or any bipartition of $G$ which is consistent with it), we write a ``container'' LP whose purpose is to essentially assign sides, and to interface cleanly between the bipartite \SmES relaxation, and the relaxation for \LD2S. In brief, with every vertex $v\in V$, we associate at most one of two assignments, which we label $(v,1)$ and $(v,2)$. Thus, every edge $\{u,v\}$ can participate in the subgraph as (at most) one of two possible ``assigned'' edges $\{(u,1),(v,2)\}$ and $\{(u,2),(v,1)\}$.
We can think of these as forming a bipartite graph $B(G)$ with vertices $V \times [2]$ and edges $\{(u,1),(v,2)\}$ and $\{(u,2),(v,1)\}$ for every $\{u,v\} \in E$.  The induced nearly-regular bipartite subgraph $H'$ of $G$ that is guaranteed to exist by Lemma~\ref{lem:regularity_reduction} corresponds to an induced subgraph of $B(G)$ in which for every vertex $v$ in $H'$ either $(v,1)$ or $(v,2)$ is included (depending on the side of $v$ in $H'$).  Note that for every vertex (or edge) in $H'$, there is exactly one vertex (or edge) in the corresponding subgraph of $B(G)$.

It will be easier to describe our algorithm and analysis for \SmES using the LP for bipartite graphs, but in order to interface with the \LD2S LP we will use the above discussion of $B(G)$ to write a container or wrapper LP.
Denote by ${\textbf{SmES-LP}_q(G, k_0, k_1, d_0, d_1)}$
the set of solutions to the feasibility LP depicted in Figure~\ref{fig:smes-lp}
on the variables $(z_a)_{a \in V \cup E}$ and $z_{\emptyset}$ (in other words there is a $z$ variable for every vertex and edge in the original graph, as well as one extra $z$ variable for the empty set).
Notice that this LP has auxiliary variables of the form $y_T$.

\begin{figure}[htb]
\Hrzline%
\caption{Relaxation $\textbf{SmES-LP}_q(G, k_0, k_1, d_0, d_1)$
on variables $(z_a)_{a \in V \cup E\cup\aset{\emptyset}}$}
\label{fig:smes-lp}
\hrulefill
\begin{align}
  \exists (y_T&)_{T\in\mathcal T_q(B(G))}\in \, {\textbf{Bipartite-SmES-LP}_q(B(G), k_0, k_1, d_0, d_1)} \text{ s.t.} \nonumber\\
\label{eq:smes_z_empty} & z_{\emptyset} = y_{\emptyset} \\
\label{eq:smes_zu} & y_{\{(u,1)\}} + y_{\{(u,2)\}} = z_u\leq z_{\emptyset}& \forall u \in V\\
\label{eq: smes_zuv} & y_{\{\{(u,1), (v,2)\}\}}+ y_{\{\{(u,2), (v,1)\}\}} = z_{\{u, v\}}\leq z_u,z_v & \forall \{u,v\} \in E
\end{align}
\Hrzline%
\end{figure}

It is easy to see that for any $(k_0, k_1, d_0, d_1)$-nearly regular bipartite subgraph $H$ of $G$, if we set $z_u = 1$ for $u \in V(H)$ and $z_e = 1$ for $e \in E(H)$ then we can set the $y$ variables in a way corresponding to the associated subgraph of $B(G)$, giving a valid solution.  For any tuple of parameters $\tau = \langle k_0, k_1, d_0, d_1 \rangle$ let $\textbf{SmES-LP}^{\tau}_q(G) = \textbf{SmES-LP}_q(G, k_0, k_1, d_0, d_1)$ and let $\textbf{Bipartite-SmES-LP}^{\tau}_q(G') = \textbf{Bipartite-SmES-LP}_q(G', k_0, k_1, d_0, d_1)$.  By construction, a faithful rounding for $\textbf{Bipartite-SmES-LP}^{\tau}_q(B(G))$ (according to Definition~\ref{def:faithful}) implies a faithful algorithm for $\textbf{SmES-LP}^{\tau}_q(G)$; we now prove this.

\begin{lemma} \label{lem:faithful_from_bipartite}
For any graph $G$ and parameters $\tau$, if we have a factor $f$ faithful rounding algorithm for $\textbf{Bipartite-SmES-LP}^{\tau}_q(B(G))$ then we have a factor $f$ faithful rounding algorithm for $\textbf{SmES-LP}^{\tau}_q(G)$.
\end{lemma}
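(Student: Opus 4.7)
The plan is to show that a faithful rounding for $\textbf{SmES-LP}^{\tau}_q(G)$ is obtained simply by running the faithful bipartite rounding on the auxiliary $y$-solution whose existence is asserted by the wrapper LP, and then ``projecting'' the resulting bipartite subgraph of $B(G)$ back down to $G$ via the natural two-to-one correspondence $(u,b)\mapsto u$ on vertices and $\{(u,1),(v,2)\},\{(u,2),(v,1)\}\mapsto\{u,v\}$ on edges.

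In more detail: given an input solution $(z_a)_{a\in V\cup E\cup\{\emptyset\}}$ to $\textbf{SmES-LP}^{\tau}_q(G)$ (satisfying $z_\emptyset = y_\emptyset = 1$ by the normalization convention), extract the associated $(y_T)_{T\in\mathcal T_q(B(G))}$ which is feasible for $\textbf{Bipartite-SmES-LP}^{\tau}_q(B(G))$ by constraint~\eqref{eq:smes_z_empty}--\eqref{eq: smes_zuv}. Run the factor-$f$ faithful bipartite rounding on $(y_T)$ to obtain a random subgraph $H^*_B=(V^*_B,E^*_B)$ of $B(G)$. Then define
\[
V^* := \{u\in V : (u,1)\in V^*_B \text{ or } (u,2)\in V^*_B\}, \qquad
E^* := \{\{u,v\}\in E : \{(u,1),(v,2)\}\in E^*_B \text{ or } \{(u,2),(v,1)\}\in E^*_B\}.
\]

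The four conditions of Definition~\ref{def:faithful} then follow mechanically from the corresponding conditions for $H^*_B$ combined with~\eqref{eq:smes_zu}--\eqref{eq: smes_zuv}. For condition~\ref{def:faithful_prob_vertex}, the union bound gives $\Pr[u\in V^*]\leq \Pr[(u,1)\in V^*_B]+\Pr[(u,2)\in V^*_B]\leq f(y_{\{(u,1)\}}+y_{\{(u,2)\}})=f\cdot z_u$; condition~\ref{def:faithful_prob_edge} is analogous using~\eqref{eq: smes_zuv}. For condition~\ref{def:absoluteVE-bounds}, $|V^*|\leq |V^*_B|\leq f\sum_{(v,b)} y_{\{(v,b)\}}=f\sum_{v\in V}z_v$. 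For condition~\ref{def:expectedE-LB}, note that each edge of $G$ corresponds to at most two edges of $B(G)$, so $|E^*|\geq |E^*_B|/2$, giving $\mathbb E[|E^*|]\geq\tOmega(\sum_{T\in E(B(G))} y_{\{T\}})/2=\tOmega(\sum_{\{u,v\}\in E}z_{\{u,v\}})$.

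I do not anticipate a genuinely hard step; the only place where one must be slightly careful is condition~\ref{def:expectedE-LB}, where the two-to-one projection on edges needs to be accounted for, but the factor of $2$ is absorbed into the $\tOmega(\cdot)$. Everything else is a direct combination of a union bound with the identities~\eqref{eq:smes_z_empty}--\eqref{eq: smes_zuv} that tie the $z$-variables to sums of $y$-variables.
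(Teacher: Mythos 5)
Your proposal is correct and matches the paper's proof essentially line by line: both extract the auxiliary $y$-solution, run the bipartite faithful rounding, project back to $G$ via the two-to-one map on $V\times[2]$, and verify each of the four conditions of Definition~\ref{def:faithful} using a union bound together with Constraints~\eqref{eq:smes_zu}--\eqref{eq: smes_zuv}, with the factor of $2$ in condition~\ref{def:expectedE-LB} absorbed into the $\tOmega(\cdot)$.
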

\begin{proof}
Let \algA be a factor $f$ faithful rounding for ${\textbf{Bipartite-SmES-LP}^{\tau}_q(B(G))}$.  Then our algorithm for rounding $\textbf{SmES-LP}^{\tau}_q(G)$ is simple: we first find the associated $y$ variables (if they are not already given to us), and then run \algA on the associated $y$ variables to get $(\tilde V^*, \tilde E^*)$.  We then include a vertex $v$ in $V^*$ if $\tilde V^*$ includes either $(v,1)$ or $(v,2)$, and include an edge $\{u,v\}$ in $E^*$ if $\tilde E^*$ includes either $\{(u,1) (v,2)\}$ or $\{(u,2), (v,1)\}$.

Then $\Pr[v \in V^*] \leq f \cdot y_{\{(v,1)\}} + f \cdot y_{\{(v,2)\}} = f \cdot z_v$, satisfying the first part of the definition.  And $\Pr[\{u,v\} \in E^*] \leq y_{\{\{(u,1), (v,2)\}\}} + y_{\{\{(u,2), (v,1)\}\}} = z_{\{u,v\}}$, satisfying the second part.  For the third part, we know that $|\tilde V^*|$ is (with probability $1$) at most $f \cdot \sum_{v \in V(G)} (y_{\{(v,1)\}} + y_{\{(v,2)\}}) = f \cdot \sum_{v \in V(G)} z_v$ vertices, and clearly $|V^*| \leq |\tilde V^*|$.  Finally, we know that $\E[|E^*|] \geq \E[|\tilde E^*|] / 2 \geq \tOmega(\sum_{\{u,v\} \in E(G)} (y_{\{\{(u,1), (v,2)\}\}} + y_{\{\{(u,2), (v,1)\}\}})) = \tOmega(\sum_{\{u,v\} \in E(G)} z_{\{u,v\}})$
\end{proof}

\subsection{LP relaxation for \LD2S} \label{sec:LD2S_LP}

Now we show how to use this \SmES LP to give an LP relaxation for \LD2S.  We will actually give a relaxation for a slightly more general version of \LD2S in which instead of spanning \emph{all} edges we are given a subset $\hat E \subseteq E$ and are only required to span edges in $\hat E$.  Note that the optimal solution for demands $\hat E \subseteq E$ has maximum degree that is at most the maximum degree of the optimal solution to the original \LD2S problem (where all edges are demands).  This will allow us to cover some demands, re-solve the LP with only the remaining demands, and repeat.

Since we construct a feasibility LP we will guess the optimal degree bound $\lambda$, so we will be able to treat it like a constant (rather than a variable).
For each $u \in V$, let $G_u = (V_u, E_u)$ be the induced subgraph of $(V, \hat E)$ on $\Gamma_G(u)$, i.e.~$G_u$ has vertex set $\Gamma_G(u)$ (the neighbors of $u$ in the overall graph) and edge set $\hat E$ restricted to $\Gamma_G(u)$.
The core of our relaxation is a decomposition of one \SmES instance into many \SmES instances.  In particular, the following lemma will be useful:

\begin{lemma} \label{lem:SmES_decomposition}
Given a graph $G = (V,E)$ on $n$ vertices and a value $\lambda$, there is a multiset $L^{\lambda}$ of size at most $O(n^4 \log^3 n)$ consisting of tuples $\tuple{k_0, k_1, d_0, d_1}$ so that every subgraph of $G$ with at most $\lambda$ vertices can be decomposed into $O(\log^3 n)$ nearly-regular subgraphs with parameters from $L^{\lambda}$.
\end{lemma}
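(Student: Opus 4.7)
The plan is to obtain the decomposition by repeatedly extracting a nearly-regular bipartite piece via Lemma~\ref{lem:regularity_reduction}, and to define $L^\lambda$ by enumerating all a priori possible parameter tuples with sufficient multiplicity so that any one decomposition fits inside it as a sub-multiset.

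For the iteration, fix an arbitrary subgraph $H \subseteq G$ with at most $\lambda$ vertices, and set $H_0 := H$. For $i = 0, 1, \ldots$, while $H_i$ has at least one edge, I apply Lemma~\ref{lem:regularity_reduction} to $H_i$ to obtain a nearly-regular bipartite induced subgraph $H'_{i+1}$ on some pair $(U_0^{i}, U_1^{i}) \subseteq V(H_i)$ with $|E(H'_{i+1})| \geq c \cdot |E(H_i)| / \log^2 n$ for a universal constant $c > 0$, and I set $H_{i+1}$ to be $H_i$ with the edges of $H'_{i+1}$ deleted (i.e.\ all edges of $H_i$ between $U_0^i$ and $U_1^i$). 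Telescoping the shrinkage gives $|E(H_t)| \leq |E(H)|\,(1 - c/\log^2 n)^t \leq n^2\, e^{-ct/\log^2 n}$, which drops below $1$ once $t = T = O(\log^3 n)$; since edge counts are integers, $H_T$ is then edge-free. Thus $H'_1, \ldots, H'_T$ form an edge-disjoint decomposition of $H$ into $O(\log^3 n)$ nearly-regular bipartite subgraphs, as required.

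For the multiset $L^\lambda$, observe that any nearly-regular bipartite subgraph of $G$ on at most $\lambda \leq n$ vertices has parameters $k_0, k_1, d_0, d_1 \in \{1, \ldots, n\}$, so there are at most $n^4$ candidate tuples $\tuple{k_0,k_1,d_0,d_1}$. I take $L^\lambda$ to be the multiset containing each such tuple with multiplicity $CT = O(\log^3 n)$ for a sufficiently large constant $C$; this yields $|L^\lambda| = O(n^4 \log^3 n)$, and since any single decomposition uses at most $T$ tuples in total (and hence at most $T$ copies of any one tuple), every decomposition sits inside $L^\lambda$ as a sub-multiset. There is no serious technical obstacle: Lemma~\ref{lem:regularity_reduction} applies to any graph without side conditions, so the fact that $H_i$ may become sparse or disconnected after many rounds does not matter, and the geometric shrinkage is standard. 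The only point worth double-checking is that the extracted piece $H'_{i+1}$ is \emph{induced} on $(U_0^i, U_1^i)$ within $H_i$, so that deleting its edges still leaves a subgraph of $G$ on at most $\lambda$ vertices to which the next round can be applied without issue---and this is exactly the form in which Lemma~\ref{lem:regularity_reduction} is stated.
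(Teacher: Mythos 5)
Your proof is correct and takes essentially the same route as the paper: repeatedly extract a nearly-regular bipartite piece via Lemma~\ref{lem:regularity_reduction}, observe that each extraction removes an $\Omega(1/\log^2 n)$-fraction of the remaining edges so that $O(\log^3 n)$ rounds suffice, and then define $L^\lambda$ by taking each of the at most $n^4$ possible parameter tuples with multiplicity $O(\log^3 n)$. The only cosmetic difference is that you spell out the telescoping bound $|E(H_t)| \leq n^2 e^{-ct/\log^2 n}$ explicitly, whereas the paper states the iteration count directly.
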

\begin{proof}
Let $H$ be a subgraph of $G$ with at most $\lambda$ vertices.  We know from Lemma~\ref{lem:regularity_reduction} (applied to $H$) that there exists some $\tau = \tuple{k_0, k_1, d_0, d_1}$ and subgraph $H'$ of $H$ so that $H'$ is nearly-regular with parameters $\tau$ and the number of edges in $H'$ is at least an $\Omega(1/\log^2 n)$-fraction of the number of edges in $H$.  We can now remove the edges in $H'$ from $H$ and repeat this process.  Since there are at most $n^2$ edges initially, we can repeat this step only $O(\log^3 n)$ times.  Note that this bound is independent of the graph $H$ that we are decomposing.  $H$ simply affects \emph{which} tuples are produced by the decomposition.  But no matter what $H$ is, obviously $k_0, k_1, d_0, d_1$ are all at most $\lambda\leq n$.  Thus there are at most $n^4$ distinct tuples.  So we simply define the multiset $L^{\lambda}$ to contain $O(\log^3 n)$ copies of each such tuple, for a total size of $O(n^4 \log^3 n)$.
\end{proof}

This decomposition lemma will allow us to cover \emph{all} demands in our relaxation, even using the nearly-regular assumption in the \SmES relaxation.  More formally, for \LD2S we have the feasibility LP depicted in Figure~\ref{fig:ld2s-lp} with the degree bound $\lambda$ being treated as a constant.

\begin{figure}[htb]
\Hrzline%
\caption{Relaxation $\textbf{LD2S}^\lambda_q(G,\hat E)$}
\label{fig:ld2s-lp}
\hrulefill
\begin{align}
\label{eq:LD2S_LP_SmES} &(z^{u,\tau}_{\emptyset}, (z^{u,\tau}_v)_{v \in V(G_u)}, (z^{u,\tau}_e)_{e \in E(G_u)}) \in \textbf{SmES-LP}^\tau_q(G_u) & \forall u \in V, \forall\tau \in L^\lambda \\
\label{eq:LD2S_LP_decomp} & \sum_{\tau \in L^{\lambda}} z^{u, \tau}_{\emptyset} \leq O(\log^3 \Delta) & \forall u \in V\\
\label{eq:LD2S_LP_z_decomp} & \sum_{\tau \in L^{\lambda}} z^{u,\tau}_{v} + \sum_{\tau \in L^{\lambda}} z^{v,\tau}_u \leq O(\log^3 \Delta) \cdot x_{\aset{u,v}} & \forall \{u,v\} \in E\\
\label{eq:LD2S_LP_degree}& \sum_{v \in \Gamma(u)} x_{\aset{u,v}} \leq \lambda & \forall u \in V \\
\label{eq:LD2S_LP_cover} & x_{\aset{u,v}} + \sum_{w \in \Gamma(u) \cap \Gamma(v)} \sum_{\tau \in L^{\lambda}} z^{w,\tau}_{\{u,v\}} = 1& \forall \{u,v\} \in \hat E \\
\label{eq:LD2S_LP_nonnegative} & 0 \leq x_{\{u,v\}} \leq 1 & \forall \{u,v\} \in E
\end{align}
\Hrzline%
\end{figure}

\paragraph{Remark.}
This LP has size (variables and constraints) $n^{O(q)}$, so can be solved in polynomial time for constant $q$.
Indeed, for each $u \in V$ there are $|L^\lambda|=\tO(n^4)$ different $\smes-lp$ programs.  Each such program has $n^{O(q)}$ variables and constraints, and in addition there are $O(n^2)$ variables of the form $x_{\aset{u,v}}$ and $O(n^2)$ new constraints \eqref{eq:LD2S_LP_decomp}-\eqref{eq:LD2S_LP_nonnegative}.

\begin{lemma}
The feasibility LP~\eqref{eq:LD2S_LP_SmES}-\eqref{eq:LD2S_LP_nonnegative}
is a valid relaxation of \LD2S with degree bound $\lambda$
and demands $\hat E \subseteq E$.
\end{lemma}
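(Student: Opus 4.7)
The plan is to show that any 2-spanner $H$ of $G$ (relative to demands $\hat E$) with maximum degree at most $\lambda$ gives rise to a feasible LP solution. Setting $x_{\{u,v\}} = 1$ iff $\{u,v\} \in E(H)$ (and $0$ otherwise) immediately satisfies~\eqref{eq:LD2S_LP_degree} and~\eqref{eq:LD2S_LP_nonnegative}, so the real content is constructing the \SmES-LP solutions indexed by $(u,\tau)$.

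First I would assign every uncovered demand to a single witness: for each $\{u,v\} \in \hat E \setminus E(H)$, the 2-spanner property guarantees some $w \in \Gamma(u) \cap \Gamma(v)$ with $\{u,w\},\{v,w\} \in E(H)$, and I pick one such $w$ and declare $\{u,v\}$ assigned to $w$. Then for each $u \in V$, let $H_u^*$ be the subgraph of $G_u$ whose edge set consists exactly of the demands assigned to $u$, with vertex set being the corresponding endpoints. Every such endpoint is an $H$-neighbor of $u$, so $V(H_u^*) \subseteq \{v : \{u,v\} \in E(H)\}$ and $|V(H_u^*)| \leq \deg_H(u) \leq \lambda$. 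I would then invoke Lemma~\ref{lem:SmES_decomposition} to decompose $H_u^*$ into at most $O(\log^3 \Delta)$ nearly-regular bipartite pieces with tuples drawn from $L^\lambda$, mapping each piece to a distinct copy of its tuple. For each piece $H_u^{*,\tau}$ with bipartition $(U_0,U_1)$ supplied by Lemma~\ref{lem:regularity_reduction}, embed it into $B(G_u)$ via $U_b \mapsto \{(v,b{+}1) : v \in U_b\}$ and define $y^{u,\tau}_T = 1$ iff every element of $T$ lies in this embedded piece. Near-regularity directly yields~\eqref{eq:smes1}--\eqref{LP:d-bound}, while~\eqref{eq:smes3}--\eqref{eq:smes6} are automatic; defining $z^{u,\tau}_\emptyset, z^{u,\tau}_v, z^{u,\tau}_{\{u',v'\}}$ via~\eqref{eq:smes_z_empty}--\eqref{eq: smes_zuv} then establishes~\eqref{eq:LD2S_LP_SmES}. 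All variables $z^{u,\tau}$ corresponding to tuples $\tau$ unused by the decomposition are set to zero.

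The remaining constraints should fall out by direct inspection: \eqref{eq:LD2S_LP_decomp} is immediate since only $O(\log^3 \Delta)$ pieces contribute $z^{u,\tau}_\emptyset = 1$; \eqref{eq:LD2S_LP_z_decomp} holds because $z^{u,\tau}_v$ can be nonzero only when $v \in V(H_u^*)$, which forces $\{u,v\} \in E(H)$ and hence $x_{\{u,v\}}=1$, and then $v$ lies in at most $O(\log^3 \Delta)$ pieces of the decomposition; \eqref{eq:LD2S_LP_cover} holds because an edge in $E(H)$ contributes $x=1$ and is not assigned to any witness, while an uncovered demand contributes $x=0$ and lies in exactly one piece of its uniquely designated witness's decomposition. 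The step I expect to require the most care is precisely this last equality constraint: had we instead placed the edge $\{u,v\}$ into $H_w^*$ for \emph{every} witness $w$ that 2-spans it in $H$, the sum in~\eqref{eq:LD2S_LP_cover} could exceed $1$ and render the solution infeasible. The unique-witness assignment avoids this, and the remark following Figure~\ref{fig:smes-lp} — that a piece may include two vertices without including the edge between them — is what makes this compatible with~\eqref{eq:LD2S_LP_SmES}: even though both $u$ and $v$ may well be endpoints of other demands assigned to a non-chosen witness $w'$ (so that both appear in $V(H_{w'}^*)$), we are free to simply leave the edge $\{u,v\}$ out of $H_{w'}^*$.
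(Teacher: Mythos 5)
Your proof is correct and follows essentially the same route as the paper's: set $x$ to the indicator of $E(H)$, assign each uncovered demand to a single arbitrary center $w(u,v)$, build the per-vertex subgraph $H_w$ of assigned demands, decompose it via Lemma~\ref{lem:SmES_decomposition}, and read off the indicator $z$ and $y$ variables. The only cosmetic difference is that the paper takes $V(H_w)$ to be \emph{all} $H$-neighbors of $w$ rather than just endpoints of assigned demands, but Lemma~\ref{lem:regularity_reduction} ignores isolated vertices, so the decomposition (and hence the LP solution) is the same; your closing observation about the unique-witness assignment and the flexibility of the SmES relaxation to include two vertices without the edge between them is exactly the right point to flag.
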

\begin{proof}
Let $G = (V,E)$ be a graph, and let $H = (V, E_H)$ be a subgraph of $G$ that is a valid $2$-spanner and has maximum degree $\Delta \leq \lambda$.
We construct an LP solution as follows.
For each $\{u,v\} \in E$, set $x_{\aset{u,v}} = 1$ if $\{u,v\} \in E_H$ and set $x_{\aset{u,v}} = 0$ otherwise.  Note that Constraint~\eqref{eq:LD2S_LP_degree} is satisfied.  For every edge $\{u,v\} \in E \setminus E_H$, there is at least one $2$-path between $u$ and $v$ in $H$, since $H$ is a valid $2$-spanner. Let $w(u,v)$ be the center vertex of such a path, choosing one arbitrarily if there are multiple $2$-paths.  For every vertex $w \in V$, we define the graph $H_w$ to be the subgraph of $G_w$ with vertex set $\{u \in \Gamma(w) : \{u,w\} \in E_H\}$ and all edges $\{u,v\}$ with the property that $w(u,v) = w$ (note that this also implies that $x_{\aset{u,v}} = 0$).  We know from Lemma~\ref{lem:SmES_decomposition} that $H_w$ can be decomposed into $O(\log^3 \Delta)$ nearly-regular graphs with parameters from $L^{\lambda}$ (here $\Delta$ has replaced $n$ because $H_w$ has only $\Delta$ vertices).
Let the multiset $L'_w\subseteq L^{\lambda}$ contain the parameter tuples $\tau$
used in this decomposition, and for $\tau \in L'_w$ let $H_w^{\tau}$ be the graph from the decomposition.  Then for each $\tau \in L'_w$  we set $z^{w,\tau}_{\emptyset} = 1$, and we set $z^{w,\tau}_u = 1$ if and only if $u \in V(H_w^{\tau})$ and set $z^{w, \tau}_{\{u,v\}} = 1$ if and only if $\{u,v\} \in E(H^{\tau}_w)$.  All other variables are $0$.
In particular, for $\tau\notin L'$, even $z^{w,\tau}_\emptyset = 0$.

Let us now prove that all the constraints are satisfied.  Constraint~\eqref{eq:LD2S_LP_SmES} is obviously satisfied, since for any $u \in V$ and $\tau \in L^{\lambda}$ the variables in the \SmES LP are either all $0$ or are integer variables corresponding to $\tau$-nearly regular subgraph.  Constraint~\eqref{eq:LD2S_LP_decomp} is satisfied since $|L'_u| \leq  O(\log^3 \Delta)$, and if $\tau \not\in L'_u$ then $z^{u,\tau}_{\emptyset} = 0$.
Similarly, Constraint~\eqref{eq:LD2S_LP_z_decomp} is satisfied since $|L'_u|$ and $|L'_v|$ are both at most $O(\log^3 \Delta)$ (and thus the left hand side is at most $O(\log^3 \Delta)$), and if any one of the variables on the left is $1$ then the edge $\{u,v\}$ is present in $E_H$ and thus $x_{\aset{u,v}} = 1$.
Finally, Constraint~\eqref{eq:LD2S_LP_cover} is satisfied because for every edge $\{u,v\} \in \hat E$, if $\{u,v\} \in E_H$ then $x_{\aset{u,v}} = 1$ and $\{u,v\}$ does not appear as an edge in any \SmES instance, so the left hand side of the constraint is $1$.  Otherwise $x_{\aset{u,v}} = 0$, in which case $\{u,v\} \in H_{w(u,v)}$ but is not in any other \SmES instance.
Since we decompose $H_{w(u,v)}$, there is exactly one $\tau$ that covers $\{u,v\}$ and for which the corresponding $z^{w,\tau}_{\{u,v\}}=1$,
and thus the left hand side is again $1$.
\end{proof}

\section{Reduction of \LD2S to faithful roundings of  \SmES} \label{sec:reduction}

We now show that if we are given an \algA that is a factor $f$ faithful rounding for $\textbf{SmES-LP}^{\tau}_q(G)$ (even restricted to the case when $z_{\emptyset} = 1$), there is an $\tO(f(\Delta))$-approximation algorithm for \LD2S that uses \algA as a black box.  Lemma~\ref{lem:faithful_from_bipartite} then implies that it is sufficient to be faithful for ${\textbf{Bipartite-SmES-LP}^{\tau}_q(B(G))}$.  Our reduction is given as Algorithm~\ref{alg:LD2S}, which is relatively simple.  It begins with all edges as the demand set $\hat E$, and first solves the LP relaxation for \LD2S with demand set $\hat E$.  It then adds every edge that has $x$ value at least $1/4$.  Then for every \SmES instance in the relaxation it flips a coin, and with probability proportional to $z_{\emptyset}$ (for that instance) scales all $y$ variables for that instance by a factor of $1/z_{\emptyset}$ and then uses the \SmES algorithm on the instance.
After this is completed we update the demands $\hat E$ by removing edges that were successfully covered by this process, and repeat.

\begin{algorithm}[]
\caption{Approximation algorithm for \LD2S}
\label{alg:LD2S}
\SetKwInOut{Input}{Input}\SetKwInOut{Output}{Output}
\Input{Graph $G = (V, E)$, degree bound $\lambda$, factor $f$ faithful rounding \algA for \textbf{SmES-LP}$_q(\cdot)$}
\Output{$2$-spanner $H = (V, E_H)$ of $G$ with maximum degree $\lambda$}
Let $\hat E \leftarrow E$ (unsatisfied demand edges)\\
Let $E_H\leftarrow \emptyset$ (spanner edges)
\\
\While{$\hat E \neq \emptyset$}{
	Compute a solution $\tuple{\vec x,\vec z}$ for
        LP $\textbf{LD2S}^\lambda_q(G,\hat E)$ from Figure~\ref{fig:ld2s-lp}
	\\
	$E_x \leftarrow \{e \in E : x_e \geq 1/4\}$
	\\
	\ForEach{$u \in V, \tau \in L^{\lambda}$}{
		\SetKwBlock{WP}{with probability $z^{u, \tau}_{\emptyset}$}{end}
		\WP{\label{alg2:prob_test}
			$H_{u, \tau} \leftarrow \mathcal A(G_u,\{z^{u,\tau}_b / z^{u,\tau}_{\emptyset}\}_{b \in V(G_u) \cup E(G_u)} )$ \label{alg2:applyA}
			\tcp*{output of \SmES rounding}
			
			$E_{u, \tau} \leftarrow \big\{\{u,v\} \in E : v \in V(H_{u, \tau})\big\}$
		}
		\SetKwBlock{ElseProb}{else (with remaining probability)}{end}
		\ElseProb{
			$E_{u, \tau} \leftarrow \emptyset$
		}
	}
	\tcp{Add all edges found in above rounding}
	$E_H \leftarrow E_H \cup E_x \cup\left(\bigcup_{u \in V} \bigcup_{\tau \in L^{\lambda}} E_{u,\tau} \right)$
	\\
	\tcp{Remove satisfied demands}
        $\hat E \leftarrow \hat E \setminus\left(E_H \cup 
        \big\{ \aset{u,v} : \exists w\in V \textrm{ s.t. both } \{u,w\},\{w,v\} \in E_H 
        \big\} \right)$
}
\end{algorithm}

\begin{theorem} \label{thm:LD2StoSmES}
Let \algA be a factor $f$ faithful rounding for $\textbf{SmES-LP}^{\tau}_q(G)$ with $z_{\emptyset} = 1$.
Then there is a (randomized) $\tO(f(\Delta))$-approximation for \LD2S.
\end{theorem}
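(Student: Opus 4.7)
The plan is to follow the outline of the proof sketch for Theorem~\ref{thm':LD2StoSmES} in Section~\ref{sec:overview}, while carefully accounting for the two refinements present in Algorithm~\ref{alg:LD2S} that were absent from its simplified version: the outer index $\tau \in L^{\lambda}$ from the bipartite decomposition, and the probabilistic rescaling of each \SmES subroutine by the empty-set variable $z^{u,\tau}_\emptyset$. First, I guess the correct value $\lambda \in [\OPT, 2\OPT]$ by trying all $O(\log n)$ powers of two in the relevant range and returning the best resulting spanner; for the correct guess the relaxation $\textbf{LD2S}^\lambda_q(G,\hat E)$ remains feasible throughout the while-loop. The proof then reduces to two per-iteration statements: (i) every vertex's degree increases by at most $\tO(f(\Delta))\cdot\lambda$ with high probability, and (ii) the expected number of demands removed from $\hat E$ is at least $|\hat E|/\polylog(n)$. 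Given (ii), a standard concentration argument forces the algorithm to terminate after $\tO(1)$ iterations with high probability, and combining this with (i) and a union bound over vertices yields the stated $\tO(f(\Delta))$-approximation.

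For (i), I decompose the degree increase at a fixed vertex $u$ into three sources: edges added from $E_x$, edges added by running $\mathcal{A}$ at a neighbor of $u$, and edges added by running $\mathcal{A}$ at $u$ itself. The $E_x$ contribution is at most $4\lambda$ by Constraint~\eqref{eq:LD2S_LP_degree}, since each edge in $E_x$ has $x$-value at least $1/4$. For a neighbor $v\in\Gamma(u)$, Definition~\ref{def:faithful}(\ref{def:faithful_prob_vertex}) applied to the scaled input together with the outer coin-flip probability $z^{v,\tau}_\emptyset$ gives $\Pr[\{u,v\}\in E_{v,\tau}] \leq f\cdot z^{v,\tau}_u$; summing over $v\in\Gamma(u)$ and $\tau\in L^\lambda$ and using Constraints~\eqref{eq:LD2S_LP_z_decomp}--\eqref{eq:LD2S_LP_degree} bounds the expectation by $\tO(f\cdot\lambda)$, and since these events are mutually independent across $(v,\tau)$ a Chernoff bound gives the same bound with high probability. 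For the contribution from $\mathcal{A}$ at $u$ itself, Definition~\ref{def:faithful}(\ref{def:absoluteVE-bounds}) yields the deterministic bound $|V^*_{u,\tau}| \leq f\sum_v z^{u,\tau}_v/z^{u,\tau}_\emptyset$ conditional on the coin flip succeeding; the contributions across different $\tau$ are again mutually independent, the expected total is $f\sum_\tau\sum_v z^{u,\tau}_v \leq \tO(f\cdot\lambda)$, and standard concentration together with a union bound over all $u$ controls this source.

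For (ii), which is the main technical step, I split on $|E_x|$. If $|E_x|\geq |\hat E|/4$, the iteration deterministically removes that many demands. Otherwise, the contribution of edges with $x_e<1/4$ to $\sum_{e\in\hat E} x_e$ is at most $|\hat E|/4$, so summing Constraint~\eqref{eq:LD2S_LP_cover} over $\{u,v\}\in\hat E$ gives $\sum_{\{u,v\}\in\hat E,\,w,\tau} z^{w,\tau}_{\{u,v\}} \geq |\hat E|/2$. Definition~\ref{def:faithful}(\ref{def:expectedE-LB}) applied to the scaled input, times the coin-flip probability $z^{w,\tau}_\emptyset$, gives $\E[|E^*_{w,\tau}\cap\hat E|] \geq \tOmega\bigl(\sum_{\{u,v\}\in\hat E} z^{w,\tau}_{\{u,v\}}\bigr)$; summing over $(w,\tau)$, the total expected multiplicity-counted coverage is $\tOmega(|\hat E|)$. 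The main obstacle---and the heart of the argument---is converting this into a lower bound on the number of \emph{distinct} demands covered. For each $\{u,v\}\in\hat E$, Definition~\ref{def:faithful}(\ref{def:faithful_prob_edge}) together with the coin flip gives $\Pr[\{u,v\}\in E^*_{w,\tau}]\leq z^{w,\tau}_{\{u,v\}}$, hence by~\eqref{eq:LD2S_LP_cover}, $\sum_{w,\tau}\Pr[\{u,v\}\in E^*_{w,\tau}] \leq 1-x_{\{u,v\}} \leq 1$---this is the crucial ``no over-coverage'' property that would fail for a non-faithful rounding. Because the \SmES runs across distinct $(w,\tau)$ are mutually independent, the elementary inequality $1-\prod_i(1-q_i)\geq \tfrac12\sum_i q_i$, valid whenever $\sum_i q_i\leq 1$, then yields $\Pr[\{u,v\}\text{ is covered}] \geq \tfrac12 \sum_{w,\tau}\Pr[\{u,v\}\in E^*_{w,\tau}]$; summing over $\{u,v\}\in\hat E$ gives the desired bound of $\tOmega(|\hat E|)$ distinct demands covered in expectation. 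This trick, turning an upper bound on per-demand coverage into a lower bound on union probability via the LP equality~\eqref{eq:LD2S_LP_cover} and the independence of the runs, is exactly what makes the extra conditions of Definition~\ref{def:faithful} necessary and useful.
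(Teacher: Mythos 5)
Your proof is correct and follows essentially the same route as the paper's: guess $\lambda$, bound the per-iteration degree increase via Conditions~\ref{def:faithful_prob_vertex} and~\ref{def:absoluteVE-bounds} of Definition~\ref{def:faithful} (splitting into $E_x$, ``$\mathcal A$ at $u$'', ``$\mathcal A$ at a neighbor''), and bound the number of iterations via Conditions~\ref{def:faithful_prob_edge} and~\ref{def:expectedE-LB}, using the summed Constraint~\eqref{eq:LD2S_LP_cover} to rule out over-coverage and the Bonferroni inequality~\eqref{eq:ByBonferroni} to turn multiplicity-counted coverage into distinct coverage. The only noteworthy deviation is that you include the factor~$f$ in the bound $\Pr[\{u,v\}\in E_{v,\tau}]\le f\cdot z^{v,\tau}_u$, which the paper's write-up drops (an apparent typo there); since that contribution is dominated by the $\tO(f\lambda)$ cost of running $\mathcal A$ at $u$ itself, the final bound is unaffected, and your version is the more careful one.
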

\begin{proof}
We may assume that our \LD2S algorithm has a valid guess for $\lambda=\OPT$ by simply trying all $\Delta$ relevant values %
and reporting the best solution.
In this case, the linear program $\textbf{LD2S}^\lambda_q(G,\hat E)$
is guaranteed to have a feasible solution for any $\hat E\subseteq E$.
We now use Algorithm~\ref{alg:LD2S} with this value of $\lambda$.
The proof has two parts: first, we show that in each iteration of the main loop, \whp the set of edges added to $E_H$ (namely $E_x \cup\left(\bigcup_{u \in V} \bigcup_{\tau \in L^{\lambda}} E_{u,\tau} \right)$) forms a subgraph with maximum degree at most $\tO(f(\Delta))\cdot \lambda$.  Second, we show that \whp there are only $\tO(1)$ iterations of the main loop.
Clearly these together prove the theorem.

We begin by analyzing the cost (i.e.~maximum degree) of a single iteration of the main loop.  Let $u \in V$ be an arbitrary vertex; we will show that at most $\tO(f(\Delta))\cdot \OPT$ edges were added incident to it.  There are three types of edges incident to $u$ that are added: these in $E_x$, those in $E_{u,\tau}$ for some $\tau$, and those in $E_{v ,\tau}$ for some neighbor $v$ of $u$ and $\tau \in L^{\lambda}$.  For the first type, the number of edges in $E_x$ incident on $u$ is at most $|\{v \in \Gamma(u) : x_{\aset{u,v}} \geq 1/4\}| \leq \sum_{v \in \Gamma(u)} 4x_{\aset{u,v}} \leq 4\lambda$, where we used Constraint~\eqref{eq:LD2S_LP_degree} to bound the sum.

For the second type of edges, let $L_u$ be the multiset of parameters from $L^{\lambda}$ on which we actually used \algA on $u,\tau$ (i.e.~lines $7$ and $8$ were executed).  Then by Constraint~\eqref{eq:LD2S_LP_decomp} the expected size of $L_u$ is at most $\sum_{\tau \in L^{\lambda}} z^{u, \tau}_{\emptyset} \leq O(\log^{3} n)$, and since the random coins of \algA are independent for each $\tau$, a simple Chernoff bound implies that this holds with high probability (say, probability more than $1-1/n^4$).  For each $\tau \in L_u$, the size of $E_{u,\tau}$ is equal to $|V(H_{u, \tau})|$.  We know from part~\ref{def:absoluteVE-bounds} of the definition of faithful rounding that the size of this set is at most $f(\Delta) \cdot \sum_{v \in \Gamma(u)} z^{u, \tau}_{v}$, and now Constraint~\eqref{eq:smes1} from the \SmES LP (with $T = \emptyset$) together with~\eqref{eq:smes_zu} implies that this is at most $f(\Delta)(k_0 + k_1) \leq 2\lambda f(\Delta)$.  Thus the size of $\bigcup_{\tau \in L^{\lambda}} E_{u,\tau}$ is with high probability at most $O(\log^{3} n) \lambda f(\Delta) \leq \tO(f(\Delta)) \cdot \OPT$.

For the third type of edges, let $v \in \Gamma(u)$ and $\tau \in L^{\lambda}$.  Then the probability that $\{u,v\} \in E_{v, \tau}$ is at most $z^{v,\tau}_{\emptyset} (1/z^{v, \tau}_{\emptyset}) z^{v, \tau}_{u} = z^{v, \tau}_{u}$, where the first $z^{v,\tau}_{\emptyset}$ factor is from the probability of applying \algA to this \SmES instance, the $(1/z^{v, \tau}_{\emptyset})$ factor is from the scaling of the variables, and the $z^{v, \tau}_{u}$ factor is from the definition of faithful rounding.
Now~\eqref{eq:LD2S_LP_z_decomp} and a simple union bound imply that the probability that $\{u,v\} \in \bigcup_{\tau \in L^{\lambda}}E_{v, \tau}$ is at most $\sum_{\tau \in L^{\lambda}} z^{v,\tau}_u \leq O(\log^3 \Delta) \cdot x_{\{u,v\}}$.
This is independent for each $v \in \Gamma(u)$, so by a Chernoff bound we get that with high probability the number of type $3$ edges is at most $O(\log^{3} n) \sum_{v \in \Gamma(u)} x_{\aset{u,v}} \leq O(\log^{3} n) \cdot \lambda \leq \tO(1) \cdot \OPT$, where we used Constraint~\eqref{eq:LD2S_LP_degree} to bound the sum.

Now it just remains to show that only $\tO(1)$ iterations of the main loop are necessary.  We will show that in every iteration at least a $c' = \tOmega(1)$ fraction of the remaining demands $\hat E$ are satisfied in expectation, or equivalently that in expectation the number of remaining unsatisfied demands is at most a $(1-c')$ fraction of the previous number of demands.
To see that this is sufficient, note that by Markov's inequality with probability at most $1- c'/2$ the number of remaining demands is at least a $\frac{1}{1-c'/2} (1-c') = 1 - \frac{c'/2}{1-c'/2}$ fraction of what it was.  Equivalently, with probability at least $c'/2$ at least a $\frac{c'/2}{1-c'/2}$ fraction of demands are covered.  Thus the probability that this does not happen after $(8/c') \ln n = \tO(1)$ iterations is at most $(1-c'/2)^{(8/c') \ln n} \leq 1/n^4$.  So with high probability, after $\tO(1)$ rounds the number of unsatisfied demands is at most $1 - \frac{c'/2}{1-c'/2} \leq 1-c'/2$ of what it was.  Now if this happens $(2/c')\ln n = \tO(1)$ the number of remaining demands is at most $|E|(1-c')^{(2/c') \ln n} < 1$, so the algorithm terminates.  Thus with high probability the number of iterations is at most $(8/c') \ln n \cdot (2/c')\ln n = \tO(1)$, as required.

So now we just need to bound the expected number of demands satisfied in a single iteration, and show that this is $\tOmega(|\hat E|)$.  We break this into two cases.
If $\sum_{\{u,v\} \in \hat E} x_{\aset{u,v}} \geq |\hat E|/2$, then a simple averaging argument shows that at least a $1/3$ fraction
of the edges $\{u,v\}\in \hat E$ are included in $E_x$,
and since each one clearly covers itself as a demand
a total of $\Omega(|\hat E|)$ demands are covered.

In the second case we have that $\sum_{\{u,v\} \in \hat E} x_{\aset{u,v}} < |\hat E|/2$.  We can sum Constraint~\eqref{eq:LD2S_LP_cover} over all demands in $\hat E$, giving us
\begin{equation}\label{eq:coverage_case2}
\sum_{\{u,v\} \in \hat E} \sum_{w \in \Gamma(u) \cap \Gamma(v)} \sum_{\tau \in L^{\lambda}} z^{w,\tau}_{\{u,v\}} = |\hat E| - \sum_{\{u,v\} \in \hat E} x_{\aset{u,v}} > |\hat E|/2
\end{equation}

For each $\{u,v\} \in \hat E$ and $w \in \Gamma(u) \cap \Gamma(v)$ and $\tau \in L^{\lambda}$, let $p^{w,\tau}_{\{u,v\}}$ be the probability that the \SmES rounding of $\textbf{SmES-LP}^\tau_q(G_w)$ %
 covers $\{u,v\}$.
Note that $p^{w,\tau}_{\{u,v\}}$ is just $z^{w,\tau}_{\emptyset}$ times the probability that $\{u,v\}$ is covered by the \SmES rounding of $\textbf{SmES-LP}^\tau_q(G_w)$ assuming that we actually perform this rounding. The expected total number of times edges in $\hat E$ are covered in this phase (with repetitions), can be written as follows:
\begin{align*}
\sum_{\{u,v\} \in \hat E} \sum_{w \in \Gamma(u) \cap \Gamma(v)} \sum_{\tau \in L^{\lambda}} p^{w,\tau}_{\{u,v\}} &=
\sum_{w \in V} \sum_{\tau \in L^{\lambda}} \sum_{\{u,v\} \in \hat E : u,v \in \Gamma(w)} p^{w,\tau}_{\{u,v\}} \\
& \geq  \sum_{w \in V} \sum_{\tau \in L^{\lambda}} z^{w,\tau}_{\emptyset}  \cdot \tOmega\Big(\sum_{\{u,v\} \in \hat E : u,v \in \Gamma(w)} z^{w,\tau}_{\{u,v\}} / z^{w,\tau}_{\emptyset}\Big) \\
& = \tOmega\Big(\sum_{\{u,v\} \in \hat E} \sum_{w \in \Gamma(u) \cap \Gamma(v)} \sum_{\tau \in L^{\lambda}} z^{w,\tau}_{\{u,v\}}\Big) \\
& \geq  \tOmega(|\hat E|), &\text{by~\eqref{eq:coverage_case2}}%
\end{align*}
where %
the first inequality is from the definition of $p^w_{\{u,v\}}$ and part $4$ of Definition~\ref{def:faithful}.  %

Furthermore, we know from the second part of the definition of faithful rounding that $p^{w,\tau}_{\{u,v\}} \leq  z^{w,\tau}_{\emptyset} (z^{w,\tau}_{\{u,v\}} / z^{w,\tau}_{\emptyset}) = z^{w,\tau}_{\{u,v\}}$, so by~\eqref{eq:LD2S_LP_cover} we have $$\sum_{w \in \Gamma(u) \cap \Gamma(v)} \sum_{\tau \in L^{\lambda}} p^{w,\tau}_{\{u,v\}} \leq \sum_{w \in \Gamma(u) \cap \Gamma(v)} \sum_{\tau \in L^{\lambda}} z^{w,\tau}_{\{u,v\}} \leq 1.$$
We can then deduce that the probability that we cover demand $\{u,v\} \in \hat E$
(in a single iteration)
is at least $\tfrac12 \sum_{w \in \Gamma(u) \cap \Gamma(v)} \sum_{\tau \in L^{\lambda}} p^{w,\tau}_{\{u,v\}}$,
by simply using the following well-known argument:
if $t$ (pairwise) independent events occur with probabilities
$q_1,\ldots,q_t$ that sum up to $\sum_{i=1}^t q_i \leq 1$,
then by Bonferroni inequality,
the probability that at least one of these events occurs is at least
\begin{align}
  \label{eq:ByBonferroni}
  \sum_i q_i - \sum_{i<j} q_i q_j
 = \sum_i q_i - \tfrac12\sum_{i\neq j} q_i q_j
  \geq \tfrac12 \sum_i q_i.
\end{align}
We thus obtain that the expected number of demands covered
(in a single iteration) is at least
$$\tfrac12 \sum_{\{u,v\} \in \hat E} \sum_{w \in \Gamma(u) \cap \Gamma(v)} \sum_{\tau \in L^{\lambda}} p^{w,\tau}_{\{u,v\}}
\geq \tOmega(|\hat E|),$$
which completes the proof.
\end{proof}

\section{Structure of faithful rounding algorithm for \SmES}\label{sec:SmES-algorithm}

We devote the remaining sections to our rounding algorithm for \SmES.  Given an $n$-vertex bipartite graph $G=(V_0,V_1, E)$, and a solution to the LP relaxation\footnote{
We will assume in the remaining sections that the LP solution is normalized. That is, we assume that $y_{\emptyset}=1$ (since otherwise, we normalize all variables by defining $y'_T=y_T/y_\emptyset$).} $\textbf{Bipartite-SmES-LP}_q(G, k_0, k_1, d_0, d_1)$, our algorithm will give a factor-$f$ faithful rounding (see Definition~\ref{def:faithful}), for some factor $f=f(n,k_0,k_1,d_0,q)$ which we define in Section~\ref{sec:params}. Given the parameter $q$ which determines the size of our LP and hence the running time (as noted earlier, these are bounded by $n^{O(q)}$), our approximation factor $f$ will be at most $\tO(n^{3-2\sqrt{2}+O(1/q)})$.

At a high level, our algorithm finds a carefully chosen collection of (not necessarily induced) constant-size subgraphs of $G$ in the form of caterpillars, and then samples vertices and edges from the union of these caterpillars according to a very specific distribution.\footnote{As described previously, when used to approximate \LD2S the sampled edges are used only in the analysis while the sampled vertices are used to buy spanner edges}

Finally, in Appendix~\ref{sec:small-degrees}, we give a much simpler faithful rounding algorithm which achieves approximation factor $d_0$. If it happens that $d_0\leq f$, we will run the algorithm described in the appendix, and otherwise we will run our main algorithm. Thus, in the remaining sections, we will assume throughout that $f\leq d_0$. Recall that by convention we assume that $k_0 \geq k_1$ and thus $d_1 \geq \Omega(d_0 / \log n)$, so $f \leq d_0 \leq d_1 \log n$.

\subsection{A simplified goal}

Let us identify some simplified conditions which are sufficient in order to achieve a factor $f$ faithful rounding. We start by weakening the definition of faithful rounding:

\begin{definition}\label{def:weakly-faithful}
A randomized rounding \algA is a \emph{factor-$f$ weakly faithful rounding} if there is some deterministically chosen $\varphi\leq 1$ (possibly $o(1)$), such that when given a solution $(y_a)_{a\in V,E}$ to an LP relaxation for \SmES on a graph
$G=(V,E)$, the algorithm produces a random (not necessarily induced) non-empty subgraph $H^*=(V^*,E^*)$ such that
\begin{enumerate}
\item $\Pr[v \in V^*] \leq \varphi f \cdot y_v$ for all $v \in V$,
\item $\Pr[\{u,v\}\in E^*] \leq \varphi y_{\{u,v\}}$ for all $\{u,v\} \in E$, %
\item \label{def:weak-absoluteVE-bounds}
$|V^*| \leq \varphi f\cdot \sum_{v\in V}y_{v}$ \, (with probability $1$), and
\item \label{def:weak-expectedE-LB}
$\E[|E^*|] \geq \tOmega\left(\varphi\cdot\sum_{\{u,v\} \in E} y_{\{u,v\}}\right)$.
\end{enumerate}
\end{definition}

It turns out that this is equivalent to our original notion:

\begin{lemma}\label{lem:weakly-faithful} For every graph $G$ and LP solution $(y_a)$ as above, if there is an algorithm that achieves a weakly-faithful factor-$f$ rounding, then there is also an algorithm that achieves a faithful factor-$\widetilde{O}(f)$ rounding.
\end{lemma}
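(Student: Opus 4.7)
The plan is to amplify the weakly-faithful algorithm $\mathcal A$ by independent repetition. Specifically, I would set $T := \lfloor 1/\varphi \rfloor$ (known to us, since $\varphi$ is deterministically chosen), run $\mathcal A$ on the same input $T$ times with independent random coins to obtain outputs $(V^*_1, E^*_1), \dots, (V^*_T, E^*_T)$, and return their union $V^* := \bigcup_i V^*_i$, $E^* := \bigcup_i E^*_i$. The intuition is that weak faithfulness produces a solution whose ``size'' is a factor $\varphi$ too small in each of the four measures, so $\approx 1/\varphi$ independent copies should restore the correct scale, while the choice $T\varphi \le 1$ keeps us from overshooting the hardest bound, namely the per-edge probability bound $\Pr[e \in E^*] \le y_e$.

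Verifying conditions~1, 2, and~3 of Definition~\ref{def:faithful} is then routine. A union bound yields $\Pr[v \in V^*] \le T \cdot \varphi f\, y_v \le f y_v$ and $\Pr[e \in E^*] \le T\varphi\, y_{e} \le y_{e}$. Condition~3 is even easier and holds deterministically: $|V^*| \le \sum_i |V^*_i| \le T \varphi f \sum_v y_v \le f \sum_v y_v$.

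The main obstacle will be condition~4, because taking a union collapses edges output by multiple copies, so the naive estimate $\E[|E^*|] \le \sum_i \E[|E^*_i|]$ goes the wrong way. To handle this I would write $\Pr[e \in E^*] = 1 - (1-p_e)^T$, where $p_e$ denotes $\mathcal A$'s single-trial inclusion probability for $e$. Since $p_e \le \varphi y_e \le \varphi \le 1/T$, the elementary inequality $1 - (1-p)^T \ge Tp/2$ applies for every edge, and summing gives
\[
\E[|E^*|] \;=\; \sum_e \Pr[e \in E^*] \;\ge\; \tfrac{T}{2}\sum_e p_e \;=\; \tfrac{T}{2}\,\E[|E^*_1|] \;\ge\; \tOmega\!\left(T\varphi \sum_e y_e\right) \;=\; \tOmega\!\left(\sum_e y_e\right),
\]
using that $T\varphi = \Omega(1)$ in either regime ($\varphi \ge 1/2$ forces $T = 1, T\varphi \ge 1/2$; $\varphi \le 1/2$ forces $T \ge 1/(2\varphi)$, so $T\varphi \ge 1/2$).

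The genuinely conceptual step is pinning down the amplification factor $T \asymp 1/\varphi$, which is the unique sweet spot: anything smaller loses a $\varphi$ factor in condition~4, and anything larger violates the edge-probability bound in condition~2. Everything else --- the union bounds, the $1-(1-p)^T$ inequality, and the deterministic union bound on $|V^*|$ --- is routine, and the $\tO$/$\tOmega$ slack in the conclusion easily absorbs the small constant-factor losses (the factor of $2$ above, etc.).
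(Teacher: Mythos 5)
Your proof is correct and matches the paper's approach: run the weakly-faithful algorithm $\Theta(1/\varphi)$ times, union the outputs, use a union bound for conditions 1--3, and bound the coverage probability of each edge by a $1-(1-p)^T \ge \Omega(Tp)$ inequality for condition 4. The only (harmless) differences are that you take $T = \lfloor 1/\varphi \rfloor$ and the elementary bound $1-(1-p)^T \ge Tp/2$, whereas the paper informally writes $1/\varphi$ repetitions and uses $1-(1-p)^{1/\varphi}\ge(1-e^{-1})p/\varphi$; both versions are absorbed by the $\tO(\cdot)$ slack.
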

\begin{proof} Run the weakly-faithful algorithm $1/\varphi$ times, and take the union of all subgraphs returned.  A trivial union bound implies that this algorithm satisfies the first three parts of the definition of faithful with factor $f$.  To see that it satisfies the fourth part (that the expected number of edges is at least $\tOmega\left(\sum_{\{u,v\} \in E} y_{\{u,v\}}\right)$), let $p_{\{u,v\}}$ denote the probability that the edge $\{u,v\}$ is included in a single iteration. %
Since $p_{\{u,v\}} \leq \varphi y_{\{u,v\}} \leq \varphi$, the probability that $\{u,v\}$ is included in at least one iteration is $1-(1-p_{\{u,v\}})^{1/\varphi} \geq (1-e^{-1})p_{\{u,v\}}/\varphi$. %
So by linearity of expectations the expected number of edges covered in the union is at least $\frac{1-e^{-1}}{\varphi} \sum_{\{u,v\} \in E} p_{\{u,v\}} \geq \tOmega\left(\sum_{\{u,v\} \in E} y_{\{u,v\}}\right)$ as required (where the inequality is from part 4 of the definition of weakly faithful).
\end{proof}

In Section~\ref{sec:SmES_uniformity}, we will show that our rounding algorithm is faithful, with approximation factor determined by the set cardinalities (of subgraph vertices, and of subgraph edges). In other words, we show that it suffices to prove an approximation guarantee in the usual sense, and faithfulness will follow directly. Anticipating this shift in focus, let us define an even weaker notion of faithfulness (which will not be sufficient in general) in terms of set cardinalities:

\begin{definition}
A randomized rounding \algA is a \emph{skewed proportional rounding with parameters $(k'_0,k'_1,m')$} if when given a solution $(y_a)_{a\in V,E}$ to an LP relaxation for \SmES with parameters $(k_0,k_1,d_0,d_1)$ on a bipartite graph
$G=(V_0,V_1,E)$, the algorithm produces a random (not necessarily induced) subgraph $H^*=(V^*_0,V^*_1,E^*)$ such that
\begin{enumerate}
\item $\Pr[u \in V_0^*] \leq \frac{k'_0}{k_0}\cdot y_u$ for all $u \in V_0$,
\item $\Pr[v \in V_1^*] \leq \frac{k'_1}{k_1}\cdot y_v$ for all $v \in V_1$,
\item $\Pr[e\in E^*] \leq \frac{m'}{m}\cdot y_e$ for all $e \in E$, %
\item \label{def:skew-absoluteVE-bounds}
$|V^*_0| \leq k'_0$ and $|V^*_1|\leq k'_1$\, (with probability $1$), and
\item \label{def:skew-expectedE-LB}
$\E[|E^*|] \geq \tOmega\left(m'\right)$.
\end{enumerate}
\end{definition}

As defined, a skewed proportional rounding does not give us any guarantee. The reason for this is that the inflation factors relative to the LP values can be different for nodes in $V_0$ and $V_1$. Namely, we might have $\frac{k'_0}{k_0}\neq\frac{k'_1}{k_1}$.  Let us determine some sufficient condition on the parameters $k'_0,k'_1, m'$ which will guarantee a faithful rounding. Suppose initially, we are given an algorithm which gives a skewed proportional rounding where $k'_0\gg k_0f$ and $k'_1 \gg k_1f$. We could prune the subgraph, by taking a uniformly chosen subset $V'_0\subseteq V^*_0$ of size $(k_0f/k'_0)|V^*_0|$, and a uniformly chosen subset $V'_1\subseteq V^*_1$ of size $(k_1f/k'_1)|V^*_1|$. %
 The expected number of edges in the remaining subgraph (with vertices $(V'_0,V'_1)$) is $$m'\cdot\frac{k_0f}{k'_0}\cdot\frac{k_1f}{k'_1}.$$
If this is at least $\tilde\Omega(m)$, then it is easy to see that we have a faithful rounding. %
 However, in the general case, this last condition may not be sufficient, since we might have $k'_0 \ll k_0f$ or $k'_1 \ll k_1f$ (or both), and then the pruning step does not apply. However, if we also have sufficient average degree in $H^*$ (without pruning), that is, $m'/k'_0\geq d_0/f$ and $m'/k'_1\geq d_1/f$, then this is sufficient. %

\begin{lemma}\label{lem:simplified-goal} %
Suppose we have a skewed proportional rounding with parameters $(k'_0,k'_1,m')$ which satisfy the following three conditions:
  \begin{enumerate}
    \item $m'/k'_0\geq d_0/f$, \label{item:simplified-k_0}
    \item $m'/k'_1 \geq d_1/f$, \label{item:simplified-k_1}
    \item $m'\cdot\frac{k_0f}{k'_0}\cdot\frac{k_1f}{k'_1} \geq m=\tilde\Theta(k_0d_0)$. \label{item:simplified-ratios}
  \end{enumerate}
Then we can also get a faithful factor-$f$ rounding.
\end{lemma}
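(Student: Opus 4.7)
The plan is to produce a factor-$f$ \emph{weakly} faithful rounding (Definition~\ref{def:weakly-faithful}) and then invoke Lemma~\ref{lem:weakly-faithful}. Setting $p_b := \min(1,\, k_b f/k'_b)$ for $b \in \{0,1\}$, I run the skewed proportional rounding to obtain $(V_0^*, V_1^*, E^*)$, independently retain each $u \in V_b^*$ with probability $p_b$ (yielding $V_b'$), keep only the edges of $E^*$ whose endpoints both survive, and finally sub-sample each surviving edge independently with probability $q := \min(1,\, m/(p_0 p_1 m'))$ to produce $E''$. Using the skewed proportional guarantees and independence of the pruning/sub-sampling,
\begin{align*}
\Pr[u \in V_b'] &\leq p_b (k'_b/k_b)\, y_u \,=\, \min(f,\, k'_b/k_b)\, y_u, \\
\Pr[e \in E''] &\leq q\, p_0 p_1 (m'/m)\, y_e \,=\, \min(1,\, p_0 p_1 m'/m)\, y_e, \\
\E[|E''|] &\geq \tOmega\bigl(q\, p_0 p_1 m'\bigr) \,=\, \tOmega\bigl(\min(p_0 p_1 m',\, m)\bigr).
\end{align*}
A standard truncation (abort and return the empty subgraph whenever $|V_b'|$ exceeds a large constant times $\min(k_b f, k'_b)$) upgrades the expectation bound $\E[|V_b'|] \leq \min(k_b f, k'_b)$ to a deterministic $|V_b'| = O(k_b f)$ at only a constant-factor cost to $\E[|E''|]$, by Markov/Chernoff.

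I then set $\varphi := \min(1,\, p_0 p_1 m'/m)$ and verify factor-$f$ weak faithfulness. Using the normalization $y_\emptyset = 1$ and the LP constraints, $\sum_v y_v = k_0 + k_1$ and $\sum_e y_e \leq \tO(m)$, three of the four conditions of Definition~\ref{def:weakly-faithful} follow immediately from the displayed bounds. The only non-routine check is $\varphi f \geq \min(f,\, k'_b/k_b)$, equivalently $\varphi \geq \min(1,\, k'_b/(k_b f))$, for each $b$. I split into three regimes based on whether $k'_b$ exceeds $k_b f$:
(a) if both $k'_b \leq k_b f$, no pruning occurs ($p_0=p_1=1$), and the requirement reduces to $m'/m \geq k'_b/(k_b f)$ for each $b$, which rearranges via $m = \tilde\Theta(k_b d_b)$ to $m'/k'_b \geq \tOmega(d_b/f)$, matching conditions~\ref{item:simplified-k_0}--\ref{item:simplified-k_1};
(b) if exactly one side, say $V_0$, has $k'_0 > k_0 f$ while $k'_1 \leq k_1 f$, the vertex bound on $V_0$ forces $\varphi = 1$, which with $p_1 = 1$ and $p_0 = k_0 f/k'_0$ collapses to $m'/k'_0 \geq \tOmega(d_0/f)$, i.e., condition~\ref{item:simplified-k_0} (and symmetrically in the other one-sided case);
(c) if both $k'_b > k_b f$, the same $\varphi = 1$ requirement becomes $m' (k_0 f/k'_0)(k_1 f/k'_1) \geq m$, which is precisely condition~\ref{item:simplified-ratios}. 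Applying Lemma~\ref{lem:weakly-faithful} then converts the factor-$f$ weakly faithful rounding into a factor-$\tO(f)$ faithful one.

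Conceptually, the three hypotheses of the lemma are tuned precisely to these three regimes: conditions~\ref{item:simplified-k_0}--\ref{item:simplified-k_1} ensure enough per-side average degree so that pruning is unnecessary on the corresponding side, while the weaker product condition~\ref{item:simplified-ratios} only comes into play when both sides must be pruned --- and in that doubly-pruned regime the two pruning factors $(k_b f/k'_b)$ combine multiplicatively in the edge-count guarantee, exactly matching the product on the left-hand side of condition~\ref{item:simplified-ratios}. The main obstacle is this case split: a uniform bound fails because the pruning factors $p_b$ can be either $1$ or $<1$ essentially independently on the two sides, so the conditions needed to keep $\E[|E''|]$ at $\tOmega(m)$ differ across regimes; the hypotheses of the lemma are precisely the minimal set that covers all four combinations.
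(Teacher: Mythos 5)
Your proposal takes essentially the same route as the paper's proof: prune vertices on each side by a factor tied to $k_bf/k'_b$, subsample edges by a factor tied to $m/(\cdots m')$, and split into cases according to which sides satisfy $k'_b \leq k_bf$; your cases (a), (b), (b$'$) are exactly what the paper collapses into a single Case~2 via the WLOG reduction $k'_0/k_0 \leq k'_1/k_1$ (pruning only the side with the larger ratio and taking $\varphi = \min_b k'_b/(k_bf)$), and your case (c) is the paper's Case~1. The correspondence between your three hypotheses and the pruning regimes is also the same phenomenon the paper exploits.

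The one step you should tighten is the truncation. With independent Bernoulli pruning, you only get $\E[|V'_b|] \leq k_bf$ while the deterministic bound is $|V'_b| \leq k'_b$, which exceeds $k_bf$ exactly in the regimes where you prune. You invoke ``Markov/Chernoff'' to abort when $|V'_b|$ exceeds $C\cdot k_bf$, but Markov alone is not enough here: it gives abort probability $O(1/C)$, and since $|E''|$ is typically positively correlated with $|V'_b|$ (a wild fluctuation upward in $|V'_b|$ is exactly when $|E''|$ tends to be large), bounding $\E[|E''|\cdot\mathbf{1}[\text{abort}]]$ requires more than a constant-probability abort. You genuinely need the Chernoff side --- conditional on $V^*_b$, $|V'_b|$ is a sum of independent indicators with mean $\leq k_bf$, so truncating at $\Theta(k_bf + \log n)$ makes the abort probability $n^{-\Omega(1)}$, and then $\E[|E''|\cdot\mathbf{1}[\text{abort}]] \leq |E|\cdot n^{-\Omega(1)}$ is negligible. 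The paper sidesteps this entirely by sampling a uniformly random subset of $V^*_b$ of a prescribed fraction of $|V^*_b|$ rather than pruning with independent coins, which gives the cardinality bound with probability $1$ and no concentration argument; adopting that would make your writeup cleaner.
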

\begin{proof}
  Let us consider two cases.

Case 1: $k'_0\geq k_0f$ and $k'_1\geq k_1f$. In this case, as explained above, we can sample random subsets of $V^*_0$ and $V^*_1$ of the correct size in order to get the correct number of vertices on each side (Part 3 of Definition~\ref{def:faithful}), and then by condition~\eqref{item:simplified-ratios}, the number of edges is sufficient (Part 4 of Definition~\ref{def:faithful}). Note also, that the individual vertex probabilities are appropriately rescaled (giving Part 1 of Definition~\ref{def:faithful}) by the sampling. Finally, let $\rho=\frac{k_0f}{k'_0}\cdot\frac{k_1f}{k'_1}$ be the subsampling probability of edges in $E^*$ (i.e.\ the probability that an edge in $E^*$ is retained). If $m'\rho=m$, this implies both Parts 2 and 4 of Definition~\ref{def:faithful}. Otherwise, subsample every remaining edge with probability $m/(m'\rho)$, which then guarantees both parts.

Case 2: $k'_0\leq k_0f$ or $k'_1\leq k_1f$. Without loss of generality, suppose $k'_0/k_0 \leq k'_1/k_1$. In this case, let $V'_0=V^*_0$, and take a uniformly chosen subset $V'_1\subseteq V^*_1$ of size $(k'_0k_1/(k_0k'_1))|V^*_1|$. %
Thus the vertex sets have the correct cardinalities for weakly faithful rounding (with scaling factor $\varphi=k'_0/(k_0f)$). Since we only pruned on one side ($V^*_1$), the expected average degree in $V^*_1$ has not changed, which is all we need to show. More formally, the expected number of edges is at least $m'k'_0k_1/(k_0k'_1)\geq d_1k'_0k_1/(fk_0) = mk'_0/(fk_0) = \varphi\cdot m$, where the inequality follows from condition~\eqref{item:simplified-k_1}. Parts 1 and 2 of Definition~\ref{def:weakly-faithful} follow by similar arguments to Case 1.
\end{proof}

\begin{corollary}\label{cor:simplified-goal1} Given a skewed proportional rounding with parameters $(k'_0,k'_1,m')$, then denoting by $d'_b=m'/k'_b$ the average degree in $S^*_b$ (for $b=0,1$), the following three conditions suffice for a faithful rounding:
  \begin{enumerate}
    \item $d'_0 \geq d_0/f$,
    \item $d'_1 \geq d_1/f$,
    \item $\frac{d'_b}{k'_{1-b}} \geq \frac{d_0}{k_1f^2}$ (for either $b=0$ or $b=1$).
  \end{enumerate}
\end{corollary}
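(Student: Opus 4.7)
The plan is to derive Corollary~\ref{cor:simplified-goal1} directly from Lemma~\ref{lem:simplified-goal}, which is really just an algebraic reformulation in terms of the average degrees $d'_b = m'/k'_b$ (rather than edge counts). There is no new combinatorial content; the work consists of rewriting each of the three hypotheses of the lemma in this equivalent form.

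First, I would observe that conditions~(1) and~(2) of the corollary are literally conditions~(1) and~(2) of the lemma after substituting $d'_b = m'/k'_b$: $m'/k'_0 \geq d_0/f$ and $m'/k'_1 \geq d_1/f$. Thus these translate with no effort.

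Next, I would show that condition~(3) of the corollary (for either value of $b$) is equivalent to condition~(3) of the lemma. For $b=0$, starting from $\frac{d'_0}{k'_1} \geq \frac{d_0}{k_1 f^2}$ and multiplying both sides by $k_0 k'_0 k'_1 f^2$, one gets
\[
m' \cdot \frac{k_0 f}{k'_0} \cdot \frac{k_1 f}{k'_1} \;=\; \frac{m' k_0 k_1 f^2}{k'_0 k'_1} \;\geq\; k_0 d_0 \;=\; \tilde\Theta(m),
\]
which is exactly condition~(3) of the lemma. The symmetric calculation starting from $\frac{d'_1}{k'_0} \geq \frac{d_0}{k_1 f^2}$ yields the same inequality (since the left-hand side $m' k_0 k_1 f^2/(k'_0 k'_1)$ is symmetric in the swap of the indices $0$ and $1$), so the ``either $b$'' clause in the corollary still suffices.

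With all three hypotheses of Lemma~\ref{lem:simplified-goal} established, its conclusion gives a faithful factor-$f$ rounding, completing the proof. The only mild subtlety to flag in the write-up is that the corollary uses $d_0$ (not $d_1$) in the denominator on the right-hand side of condition~(3), which is consistent because $m = \tilde\Theta(k_0 d_0) = \tilde\Theta(k_1 d_1)$ (the two double-count the edges up to the nearly-regular slack), so there is no real asymmetry here — the apparent lack of obstacle is precisely that the statement was tailored to match what the lemma delivers.
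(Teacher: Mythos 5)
Your proposal is correct and takes the same (essentially forced) approach the paper intends: Corollary~\ref{cor:simplified-goal1} is a direct algebraic restatement of Lemma~\ref{lem:simplified-goal} under the substitution $d'_b = m'/k'_b$, with $m = \tilde\Theta(k_0 d_0)$ closing the loop on condition~(3). One small slip: you say ``multiplying both sides by $k_0 k'_0 k'_1 f^2$,'' but the factor that turns $\frac{d'_0}{k'_1} = \frac{m'}{k'_0 k'_1} \geq \frac{d_0}{k_1 f^2}$ into the displayed inequality $\frac{m' k_0 k_1 f^2}{k'_0 k'_1} \geq k_0 d_0$ is $k_0 k_1 f^2$; the displayed line itself is right, so this is only a typo in the prose.
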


\begin{corollary}\label{cor:simplified-goal2} Given a skewed proportional rounding with parameters $(k'_0,k'_1,m')$, then denoting by $d'_b=m'/k'_b$ the average degree in $S^*_b$ (for $b=0,1$), the following three conditions suffice for a faithful rounding:
  \begin{enumerate}
    \item $d'_0 \geq d_0/f$,
    \item $d'_1 \geq d_1/f$,
    \item $k'_b\leq k_bf$ (for either $b=0$ or $b=1$).
  \end{enumerate}
\end{corollary}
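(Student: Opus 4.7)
My plan is to derive Corollary~\ref{cor:simplified-goal2} directly from Lemma~\ref{lem:simplified-goal} by showing that its three hypotheses are sufficient to imply the third hypothesis of the lemma (the first two are literally the same). So the proof reduces to verifying the product inequality
\[
m' \cdot \frac{k_0 f}{k'_0} \cdot \frac{k_1 f}{k'_1} \;\geq\; m \;=\; \widetilde\Theta(k_0 d_0)
\]
under the assumption that $k'_b \leq k_b f$ for some $b \in \{0,1\}$, together with the average-degree conditions $d'_0 \geq d_0/f$ and $d'_1 \geq d_1/f$.

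The verification is a short calculation. Without loss of generality assume $k'_0 \leq k_0 f$, which means the factor $k_0 f / k'_0 \geq 1$ and can be dropped from the left-hand side. Then, using $m' = d'_1 k'_1$, the remaining product telescopes to $d'_1 \cdot k_1 f$, which by the hypothesis $d'_1 \geq d_1/f$ is at least $k_1 d_1$. The case $k'_1 \leq k_1 f$ is symmetric, yielding $k_0 d_0$ on the right-hand side. Finally, one needs that both $k_0 d_0$ and $k_1 d_1$ equal $\widetilde{\Theta}(m)$; this is immediate from the near-regularity of the intended subgraph (both sides of the bipartition account for all edges up to a $\log n$ factor), which is built into the parameters $(k_0,k_1,d_0,d_1)$ via Lemma~\ref{lem:regularity_reduction}.

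Having shown that all three hypotheses of Lemma~\ref{lem:simplified-goal} hold, the lemma provides a faithful factor-$f$ rounding, concluding the proof. I do not anticipate any real obstacle here, since the entire content of the corollary is to identify an easier-to-verify sufficient condition replacing the ``ratio'' inequality of the lemma with a one-sided bound on the vertex count. The only subtlety is remembering to invoke near-regularity to pass between the two expressions $k_0 d_0$ and $k_1 d_1$ for $m$; this is what makes the one-sided assumption $k'_b \leq k_b f$ (rather than a two-sided assumption) suffice.
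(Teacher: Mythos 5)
Your argument is correct and matches the paper's own (terse) proof, which substitutes the degree lower bound $d_{1-b}/f$ for $m'/k'_{1-b}$ and invokes $k_0d_0 = \tilde\Theta(k_1d_1)$ exactly as you do. You route through Lemma~\ref{lem:simplified-goal} directly while the paper nominally goes through Corollary~\ref{cor:simplified-goal1}, but the algebra is identical, so this is the same proof.
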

\begin{proof} Follows by substituting the lower bound $d_{1-b}/f$ for $m'/k'_{1-b}$ (Condition 1 or 2) in the original condition 3, and using $k_0d_0 = \tilde\Theta(k_1d_1)$.
\end{proof}

\subsection{Algorithm description}\label{sec:algorithm-description}

Our algorithm is a variation of the non-faithful \DkS algorithm of~\cite{BCCFV10}, and as with their algorithm is fundamentally concerned with \emph{caterpillar graphs}.  A caterpillar is a tree consisting of a main path (the \emph{backbone}), with various disjoint paths (\emph{hairs}) attached by one of their endpoints to the backbone. We concentrate on caterpillars with hair-length $1$ (single edges). Let us state the following definition from~\cite{BCCFV10}, which forms the basic template for the algorithm on graphs with maximum degree $n^{r/s}$:

\begin{definition}\label{def:rs-caterpillar} An $(r,s)${\em-caterpillar} is a tree constructed inductively as follows: Begin with a single vertex as the leftmost node in the backbone. For $s$ steps, do the following: at step $t$, if the interval $((t-1)r/s, tr/s)$ contains an integer, add a hair of length $1$ to the rightmost vertex in the backbone; otherwise, add an edge to the backbone (increasing its length by $1$).
\end{definition}

The algorithm will follow along the above iterative construction of an $(r,s)$-caterpillar. At step $t$ of the construction, we will consider a union of $t$-edge prefixes of $(r,s)$-caterpillars in the graph $G$. These will be chosen by an elaborate pruning process described in Section~\ref{sec:bucketing}, which will ensure certain uniformity properties while maintaining a large LP weight associated with these caterpillars. As part of the pruning process, at step $t$, we will consider a certain bipartite subgraph $G_t=(S_t,W_t,E_t)$, whose edges will come from the union of edges added in step $t$ to the above caterpillars (the $t$'th edge in the construction of each).

Furthermore, we will consider the tuple of leaves added before the $t$'th edge in an $(r,s)$-caterpillar. Our bucketing will isolate a set of such leaf tuples $\widetilde{L}_t$, and for every leaf tuple $\lambda\in\widetilde{L}_t$, we will associate a certain subgraph $H^\lambda_t$ of $G_t$, whose edges come from the $t$'th edges in caterpillar prefixes whose initial leaves (up to the $t$th edge) correspond to $\lambda$. We will write the bipartite graph $H^\lambda_t$ as $H^\lambda_t=(U(H^\lambda_t),W(H^\lambda_t),E(H^\lambda_t))$. Note that since we only consider constant size objects, the set $\widetilde{L}_t$ contains at most a polynomial number of tuples.

We are now ready to present our algorithm, in Figure~\ref{fig:Faithful_SmES}, which will give a skewed-proportional rounding satisfying the conditions of Lemma~\ref{lem:simplified-goal} (though under certain conditions the rounding may also be weakly faithful, or even faithful). We remind the reader that this algorithm is intended only for parameters satisfying $f\leq d_0$ (otherwise, we use the much simpler algorithm in Appendix~\ref{sec:small-degrees}). We defer the details of how the above sets are defined to Section~\ref{sec:bucketing}. For now, we only note that they can be computed in parallel to the following algorithm, as needed. Our approximation guarantee $f$ and parameters $r,s$ (all of which depend on $n$, $k_0$, $k_1$, $d_0$, $d_1$, and $q$) will be defined in Section~\ref{sec:params}.

\begin{figure}[t]
\begin{center}\fbox{
\begin{minipage}{15.5 cm}
\textbf{Faithful-S$m$ES$(G,\{y_I\})$}

\noindent Input: A graph $G$ with LP solution $\{y_I\}$ to $\textbf{Bipartite-SmES-LP}_q(G, k_0, k_1, d_0, d_1)$.\\

\noindent For steps $t=1,\ldots,s$:
\begin{enumerate}
    \item\label{step:large-degree} If the maximum degree in $G_t$ is at least $\widetilde\Omega\left(\frac{nd_0}{k_1f^2}\right)$, do the following, and halt:
    \begin{itemize}
      \item If $S_t\subseteq V_0$, choose a subset of $k_0f$ vertices in $S_t$ uniformly at random, and choose $k_1f$ vertices in $W_t$ uniformly at random. Otherwise, choose $k_1f$ vertices in $S_t$ and $k_0f$ in $W_t$ uniformly at random. Output these vertices.
      \item For every edge $e=(u,v)$ whose endpoints are chosen in the previous step, choose edge $e$ with probability $\frac{y_{\{e\}}}{(y_{\{u\}})y_{\{v\}}f^2}$. Output these edges.
    \end{itemize}
    \item Let $m'=\expec_{\lambda\in_R\widetilde{L}_t}[|E(H^\lambda_t)|]$. If $S_t\subseteq V_0$, let $k'_0=\expec_{\lambda\in_R\widetilde{L}_t}[|U(H^\lambda_t)|]$ and $k'_1=\expec_{\lambda\in_R\widetilde{L}_t}[|W(H^\lambda_t)|]$. Otherwise, let $k'_0=\expec_{\lambda\in_R\widetilde{L}_t}[|W(H^\lambda_t)|]$ and $k'_1=\expec_{\lambda\in_R\widetilde{L}_t}[|U(H^\lambda_t)|]$.
    \item\label{step:default} If the parameters $m',k'_0,k'_1$ satisfy the conditions of Lemma~\ref{lem:simplified-goal}, choose $\lambda\in\widetilde{L}_t$ uniformly at random, output $H^\lambda_t$, and halt.
\end{enumerate}
\end{minipage}
}%
\end{center}\caption{Algorithm \bf{Faithful-S$m$ES}}\label{fig:Faithful_SmES}
\end{figure}

\section{Details of our rounding algorithm for bipartite \SmES}\label{sec:SmES-details}

In this section, we give details of our bucketing procedure which defines the various subgraphs used in algorithm \textbf{Faithful-S$m$ES}. Moreover, we specify the parameters $r,s$ and our approximation ratio $f$ which define the specific caterpillar structure on which our algorithm is based.

\subsection{Parametrization and maximum degree bound} \label{sec:params}

Given parameters $n,k_0,k_1,d_0,d_1,q$, we need to define our intended approximation ratio $f$, and the choice of $r$ and $s$ in the $(r,s)$-caterpillar which will determine the structure of our graph, and will also give an effective bound on the maximum degree. In this section, we define these parameters. As before, let $q$ to be the (bounded) integer parameter which corresponds to the level of the Sherali-Adams relaxation we use. The running time both for solving the LP and for our rounding will be $n^{O(q)}$ while %
 the approximation guarantee will be $\widetilde{O}(n^{3-2\sqrt2 + O(1/q)})$.

Similarly to~\cite{BCCFV10}, we would like to choose $r$ and $s$ so that the maximum degree will be roughly $n^\alpha=n^{r/s}$ for $s\leq q$. %
 In Appendix~\ref{sec:max-deg-bound}, we will show that Step~\ref{step:large-degree} of the algorithm gives a faithful rounding whenever the maximum degree is greater than $D=\frac{n}{k_1 f}\cdot\frac{d_0}{f}$. Thus we would to define $\alpha=r/s$ so that 
 \begin{equation}\label{eq:approx-D-bound}
n^\alpha\approx \frac{n}{k_1 f}\cdot\frac{d_0}{f}.
\end{equation}
 We will also need our approximation factor $f=f(n,k_0,k_1,d_0,d_1,q)$ to satisfy the following crucial property:
\begin{equation}\label{eq:f-property}
f=\left(\frac{k_1f}{d_0}\right)^\alpha.
\end{equation}
 
 Note that if we had equality in both~\eqref{eq:approx-D-bound} and~\eqref{eq:f-property}, then together, these implicitly define both $\alpha$ and $f$. However, recall that we require $\alpha=r/s$ to be rational (specifically, we will want $\alpha$ to be a multiple of $1/q$), which most likely would not occur if we had strict equality in~\eqref{eq:approx-D-bound}.  To fix this, we simply round the implied value of $\alpha$ up to the nearest multiple of $1/q$. It is straightforward to check that, if we let $\gamma=\log_n(k_1/d_0)$, then the new value of $\alpha$ will be

$$\alpha={\textstyle\frac1q}\left\lceil q(1+\gamma/2-\sqrt{2\gamma+\gamma^2/4})\right\rceil.$$

This specific closed-form expression for $\alpha$ is never used in our analysis. It is only important in that it guarantees~\eqref{eq:approx-D-bound}, implicitly defines $f$ through~\eqref{eq:f-property}, and gives the following guarantee, which is the precise formulation of~\eqref{eq:approx-D-bound}:
\begin{prop}\label{prop:final-param-ineq} For $f$, $D$ and $\alpha$ defined as above, we have $$D={\textstyle\frac{n}{k_1 f}\cdot\frac{d_0}{f}}=n^{\alpha-O(1/q)}.$$
\end{prop}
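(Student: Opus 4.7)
The plan is to verify that property~\eqref{eq:f-property} was designed precisely so that the desired equation $\log_n D = \alpha$ becomes a quadratic identity in $\alpha$ and $\gamma$, and then to control the effect of the rounding by a one-variable Lipschitz argument.

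First I would take $\log_n$ of~\eqref{eq:f-property}: this gives $\log_n f = \alpha(\gamma + \log_n f)$, so $\log_n f = \alpha\gamma/(1-\alpha)$ (which only makes sense because the assumption $f \leq d_0$ keeps us bounded away from the degenerate case $\alpha = 1$). Substituting into
\[
  \log_n D \;=\; \log_n\!\bigl(n d_0 / (k_1 f^2)\bigr) \;=\; 1 - \gamma - 2\log_n f
\]
yields the closed form $g(\alpha) := \log_n D = (1-\alpha-\gamma-\alpha\gamma)/(1-\alpha)$. Now I would let $\alpha_0 := 1+\gamma/2-\sqrt{2\gamma+\gamma^2/4}$ denote the unrounded value, so that $\alpha = \tfrac1q\lceil q\alpha_0\rceil \in [\alpha_0,\alpha_0+1/q]$.

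Next I would check algebraically that $g(\alpha_0) = \alpha_0$. Squaring $\sqrt{2\gamma+\gamma^2/4} = 1+\gamma/2 - \alpha_0$ yields the quadratic identity $\alpha_0^2 - (2+\gamma)\alpha_0 + (1-\gamma) = 0$, which rearranges to $\alpha_0(1-\alpha_0) = 1-\alpha_0-\gamma-\alpha_0\gamma$, i.e.\ exactly $g(\alpha_0) = \alpha_0$. As a byproduct the same quadratic gives the useful identity $(1-\alpha_0)^2 = \gamma(1+\alpha_0)$, which I will invoke momentarily.

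Finally I would handle the $1/q$ rounding via the mean value theorem. A direct computation gives $g'(\alpha) = -2\gamma/(1-\alpha)^2$, and the identity above evaluates this at $\alpha_0$ to $|g'(\alpha_0)| = 2/(1+\alpha_0) \leq 2$. Since we are in the regime $f \leq d_0$ used by the main algorithm, $\log_n f = \alpha\gamma/(1-\alpha)$ is bounded by a constant, which together with $(1-\alpha_0)^2 = \gamma(1+\alpha_0)$ forces $1-\alpha_0 = \Omega(\sqrt{\gamma})$ to dominate $1/q$ once $q$ is large enough; hence $1-\alpha$ stays within a constant factor of $1-\alpha_0$ on the interval $[\alpha_0,\alpha_0+1/q]$, keeping $|g'| = O(1)$ throughout. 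Combining this with $|\alpha - \alpha_0| \leq 1/q$ gives
\[
  |\log_n D - \alpha| \;\leq\; |g(\alpha) - g(\alpha_0)| + |\alpha - \alpha_0| \;=\; O(1/q),
\]
which is exactly $D = n^{\alpha - O(1/q)}$. The only real obstacle is this last step of confirming that $1-\alpha$ does not degenerate after the ceiling is applied; once the bound $|g'| = O(1)$ is established on the relevant interval, the rest is routine single-variable perturbation.
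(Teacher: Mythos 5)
The paper does not actually provide a proof of this proposition; the text preceding it asserts only that "it is straightforward to check," so your reconstruction is being compared against an implicit argument rather than a written one. Your algebraic setup is correct: the derivation $\log_n f = \alpha\gamma/(1-\alpha)$, the reduction of $\log_n D$ to $g(\alpha) = 1 - \gamma(1+\alpha)/(1-\alpha)$, the verification that $g(\alpha_0) = \alpha_0$ via the quadratic $\alpha_0^2 - (2+\gamma)\alpha_0 + (1-\gamma) = 0$, the byproduct identity $(1-\alpha_0)^2 = \gamma(1+\alpha_0)$, and $g'(\alpha) = -2\gamma/(1-\alpha)^2$ are all right, as is the overall Lipschitz/MVT plan.

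However, the step you yourself flag as "the only real obstacle" is not resolved correctly. The claim that $f \leq d_0$ forces $1-\alpha_0 = \Omega(\sqrt\gamma)$ to "dominate $1/q$ once $q$ is large enough" does not follow: the bound $1-\alpha_0 \geq \sqrt\gamma$ comes from the quadratic identity alone (no appeal to $f\leq d_0$ is needed), and it does \emph{not} dominate $1/q$ when $\gamma = O(1/q^2)$, which is a legitimate regime since $\gamma=\log_n(k_1/d_0)$ can be arbitrarily small. In that case $1-\alpha_0$ may itself be $O(1/q)$, the ceiling can push $\alpha$ within $O(1/q)$ of $1$, and a naive bound on $(1-\alpha)^{-2}$ is not $O(1)$ via $\gamma$ alone. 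The correct way to close this is to use a different constraint you already have in hand: since $\alpha$ is a positive multiple of $1/q$ strictly less than $1$ (the only case in which $f$ is even well-defined by \eqref{eq:f-property}), one automatically has $1-\alpha \geq 1/q$. Then split into two cases: if $\gamma \geq 4/q^2$, your argument as written gives $1-\alpha \geq (1-\alpha_0)/2 \geq \sqrt\gamma/2$ and hence $|g'(\alpha)| \leq 8$; if instead $\gamma < 4/q^2$, then directly $|g'(\alpha)| = 2\gamma/(1-\alpha)^2 \leq 2\gamma q^2 < 8$. In either case $\sup_{[\alpha_0,\alpha]}|g'| = O(1)$, and the rest of your perturbation argument goes through unchanged. (The remaining genuinely degenerate sub-case $\alpha = 1$ with $\gamma > 0$ makes \eqref{eq:f-property} unsatisfiable, so it must be implicitly excluded; the paper is silent on this edge case too.)
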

\noindent{}This implies that
\begin{align*}
 f&=n^{1-\alpha+O(1/q)}d_0/(k_1f)\\
 &=n^{1-\alpha+O(1/q)}/f^{1/\alpha}.&\text{by~\eqref{eq:f-property}}
 \end{align*}
 Thus, in the worst case, we have the following upper bound on our approximation factor $f$:
\begin{corollary}\label{cor:f-final-bound}
For $f$, $D$ and $\alpha$ defined as above,
$$f=n^{(1-\alpha+O(1/q))\alpha/(1+\alpha)}\leq\max_{0\leq\alpha\leq1}n^{(1-\alpha)\alpha/(1+\alpha)+O(1/q)}=n^{3-2\sqrt2+O(1/q)}\approx n^{0.172+O(1/q)}.$$
\end{corollary}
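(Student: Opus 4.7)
The bulk of the corollary is already done in the display immediately preceding it, which uses Proposition~\ref{prop:final-param-ineq} and equation~\eqref{eq:f-property} to arrive at
\[
f = n^{(1-\alpha+O(1/q))\alpha/(1+\alpha)}.
\]
So my plan is to treat the corollary as a pure one-variable calculus problem: show that the function
\[
g(\alpha) \;=\; \frac{(1-\alpha)\alpha}{1+\alpha}
\]
attains its maximum on $[0,1]$ at $\alpha^* = \sqrt{2}-1$, with value $g(\alpha^*) = 3 - 2\sqrt{2}$. Then, since the $O(1/q)$ term in the exponent is uniform in $\alpha$, the inequality
\[
n^{(1-\alpha+O(1/q))\alpha/(1+\alpha)} \;\leq\; n^{g(\alpha) + O(1/q)} \;\leq\; n^{\max_{\alpha\in[0,1]}g(\alpha)+O(1/q)} \;=\; n^{3-2\sqrt{2}+O(1/q)}
\]
is immediate.

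For the optimization, I would simply differentiate. Writing $g(\alpha) = (\alpha - \alpha^2)/(1+\alpha)$, the quotient rule gives
\[
g'(\alpha) \;=\; \frac{(1 - 2\alpha)(1+\alpha) - (\alpha - \alpha^2)}{(1+\alpha)^2} \;=\; \frac{1 - 2\alpha - \alpha^2}{(1+\alpha)^2}.
\]
Setting the numerator to zero yields $\alpha^2 + 2\alpha - 1 = 0$, whose unique root in $[0,1]$ is $\alpha^* = -1 + \sqrt{2}$. Since $g(0) = g(1) = 0$ and $g > 0$ on $(0,1)$, this critical point is the global maximum.

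Evaluating at $\alpha^* = \sqrt{2}-1$:
\[
g(\alpha^*) \;=\; \frac{(2-\sqrt{2})(\sqrt{2}-1)}{\sqrt{2}} \;=\; \frac{3\sqrt{2}-4}{\sqrt{2}} \;=\; 3 - 2\sqrt{2}.
\]
Finally, $3 - 2\sqrt{2} = 3 - 2.8284\ldots \approx 0.172$ yields the claimed numerical bound. There is no real obstacle here; the only things to be careful about are (i) verifying that the $O(1/q)$ slack can be pulled outside the maximization (which it can, since it does not depend on $\alpha$), and (ii) noting that the rounding of $\alpha$ to a multiple of $1/q$ in Section~\ref{sec:params} only shifts the exponent by $O(1/q)$, so the continuous maximization over $[0,1]$ legitimately upper-bounds the value at the rounded~$\alpha$.
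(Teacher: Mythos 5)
Your proposal is correct and takes exactly the route the paper intends (the paper simply asserts the corollary without writing out the calculus). You correctly observe that the equality $f = n^{(1-\alpha+O(1/q))\alpha/(1+\alpha)}$ is established by the display just before the corollary, reduce the inequality to maximizing $g(\alpha)=(1-\alpha)\alpha/(1+\alpha)$ on $[0,1]$, and your derivative computation, critical point $\alpha^*=\sqrt{2}-1$, and evaluation $g(\alpha^*)=3-2\sqrt{2}$ are all accurate; the remark that $O(1/q)\cdot\alpha/(1+\alpha)=O(1/q)$ uniformly (so the slack factors out of the maximization) is exactly the small bookkeeping point that makes the chain of bounds legitimate.
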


Finally, we choose integers $r,s$ such that $r$ and $s$ are co-prime and $r/s=\alpha$.

\subsection{LP rounding and bucketing} \label{sec:bucketing}

The fundamental difficulty in adapting LP-hierarchy-based algorithms such as the one in~\cite{BCCFV10} to work in a faithful randomized rounding setting is that the LP values which are finally used in finding the actual subgraph are values which arise after several rounds of conditioning. These can be markedly different from (or even nearly unrelated to) the original LP values, which we want to round with respect to. We get around this problem by limiting our algorithm to vertices (and edges, and larger structures) which have roughly uniform LP values (for every specific kind of vertex/edge/structure). Thus, even if the values which we use to round are different from the original LP values, they are at least uniformly proportional to them.

In ignoring many vertices/edges/structures in the graph, we have to make sure that we are not losing too much of the information given to us by the LP. Since our algorithm and analysis ultimately rely on examining caterpillars of a certain form (which is determined by our parameters), the goal of our bucketing of caterpillars is to preserve the total LP weight of such caterpillars in our graph, up to a polylogarithmic factor.

Let $K$ be the template $(r,s)$ caterpillar for $r,s$ corresponding to our parameters (see Definition~\ref{def:rs-caterpillar}) %
and denote by $K_t$ the $(t-1)$-edge ($t$-vertex) prefix of this caterpillar (that is, the caterpillar constructed inductively as before, for the first $t-1$ steps). We will now define a procedure which buckets instances in $G$ of these caterpillar prefixes. %

Before describing the procedure, let us introduce one more piece of notation. For a set of caterpillars $B$ (a bucket), for every tuple of leaves $\lambda$ (not including the rightmost backbone vertex, which may be also be a leaf), denote by $T^\lambda(B)$ the set of caterpillars in $B$ which have leaves $\lambda$ (in the same order). In addition, for node $u$, and edge $e$, denote by $T^\lambda_u(B)$ and $T^\lambda_e$ the set of caterpillars in $T^\lambda(B)$ which have $u$ (resp.\ $e$) as the final vertex (resp.\ edge). Note that this edge may be either a backbone edge or a hair. An important invariant will be that at the end of every step, the cardinalities of sets $|T^\lambda(B)|$ (resp.\ $|T^\lambda_u(B)|$) will be roughly uniform (up to polylogarithmic factors) for every remaining choice of leaf tuple $\lambda$ (resp.\ leaf tuple $\lambda$ and rightmost backbone vertex $u$).

\begin{itemize}
\item Initialization: By Constraint~\eqref{eq:smes1}, $\sum_{v\in V_0}y_v=k_0$. Bucket vertices in $V_0$ by their LP-value. Then there is some bucket $B_1$ for which $\sum_{v\in V_0}y_v\geq k_0/\log n$. Let us also write $L_1=S_1=B_1$.
\item Step $t$: Suppose $B_t$ is a bucket of instances of caterpillar $K_t$ in the graph, $S_t$ is the set of rightmost (final) backbone nodes in these caterpillars, and $L_t$ is the set of leaf-tuples occurring in these caterpillars (not including the final backbone vertex). Let $b\in\{0,1\}$ be such that $S_t\subseteq V_b$ (this depends only on the parity of backbone edges in $K_t$). Construct a sequence of nested sets of caterpillars in stages as follows:
\begin{itemize}
  \item Consider the set of caterpillars $\widetilde{B}_t$ formed by taking a caterpillar from $B_t$ and adding an edge to the rightmost node. Then up to a logarithmic factor, by~\eqref{LP:d-bound}, for every caterpillar $K\in B_t$ with rightmost endpoint $u$, we have $\sum_{w\in\Gamma(u)}y_{K\cup\{\{u,w\}\}}=d_by_{K}$. In particular, this implies $\sum_{\widetilde{K}\in\widetilde{B}_t}y_{\widetilde{K}}=d_b\cdot\sum_{K\in B_t}y_{K}$.
  \item Bucket the new caterpillars in $\widetilde{B}_t$ so that the following values are uniform (up to a constant factor): the LP value of the newly added vertex, of the newly added edge, of the leaves together with the new vertex %
  (that is, if the caterpillar originally had leaves $\lambda$ and a new vertex $w$ was added, then this is the LP value $y_{\lambda\cup\{w\}}$) and of the entire caterpillar. Then the heaviest bucket preserves the total LP weight of caterpillars in $\widetilde{B}_t$ up to a polylogarithmic factor. Denote this bucket by $\widetilde{B}'_t$.
  \item Next, bucket the caterpillars in $\widetilde{B}'_t$ so that within every bucket, for every tuple of leaves $\lambda$ and edge $(u,w)$ the cardinality $|T^\lambda_{(u,w)}(\widetilde{B}'_t)|$ is roughly uniform. Again, retain the heaviest bucket (which preserves at least a polylogarithmic fraction of LP weight), and denote this bucket by $\widetilde{B}''_t$.
  \item %
Now bucket $\widetilde{B}''_t$ by the cardinality $|\bigcup_uT^{\lambda}_{(u,w)}(\widetilde{B}''_t)|$ (so that this value is roughly uniform over the choice of leaf tuple $\lambda$ and new vertex $w$), and let $\widetilde{B}'''_t$ denote bucket with the largest LP weight.
  \item Finally, prune the remaining set of caterpillars $\widetilde{B}'''_t$ by retaining only those with leaf tuples $\lambda$ which satisfy $$\sum_{K'''\in T^\lambda(\widetilde{B}'''_t)}y_{K'''}\geq\widetilde{\Omega}(d_b)\sum_{K\in T^\lambda(B_t)}y_K.$$ Note that for such $\lambda$, at least a polylogarithmic fraction of the total LP weight of $T^\lambda(\widetilde{B}_t)$ must have also been preserved in each of the previous stages. Denote this set of leaf tuples by $\widetilde{L}_t$, and let $B_{t+1}\subseteq \widetilde{B}'''_t$ be the corresponding set of caterpillars. %
  \item Denote by $W_t$ the set of all newly added vertices in caterpillars in $B_{t+1}$. That is, $$W_t=\bigcup_{\lambda,u}\{w\mid T^{\lambda}_{(u,w)}(B_{t+1})\neq\emptyset\}.$$
\end{itemize}
\item Backbone step: If the last edge in $K_{t+1}$ is a backbone edge, let $L_{t+1}=\widetilde{L}_{t}$, and let $S_{t+1}=W_t$.
\item Hair step: Otherwise, the last edge in $K_{t+1}$ is a hair. In this case, the rightmost backbone vertices in the caterpillars in $\widetilde{B}_t$ still belong to $S_t$ (in each caterpillar that survives the bucketing, it is the same vertex as before the new edge was added). Let $L_{t+1}$ be the set of new leaf tuples of these caterpillars (where each leaf tuple includes the newly added leaf in $W_t$), and let $S_{t+1}\subseteq S_t$ be the remaining set of rightmost backbone vertices. %
\end{itemize}

Note that (by induction), if caterpillar $K_t$ rooted in $V_0$ has $t_0$ edges emanating from side $V_0$ and $t_1$ edges emanating from side $V_1$ (think of edges as being directed away from the initial node), then bucket $B_t$ satisfies %
\begin{equation}\label{eq:lp-weight-preserve}
\sum_{K\in B_t}y_K=\widetilde{\Theta}(k_0d_0^{t_0}d_1^{t_1}),
\end{equation}
 which is precisely the number of such caterpillars in a $(d_0,d_1)$-regular bipartite graph (assuming we allow degeneracies, such as caterpillars intersecting themselves).

The following lemma shows that LP weight is preserved even on substructures of caterpillars, such as individual vertices and edges.

\begin{lemma}\label{lem:regularity} Let $B$ be a set of trees rooted on side $V_0$ and isomorphic to a tree $Q$ which, when rooted on side $V_0$ with edges directed away from the root, has $t_0$ edges emanating from side $V_0$ and $t_1$ edges emanating from side $V_1$. If the set $B$ satisfies Equation~\eqref{eq:lp-weight-preserve}, then
  \begin{itemize}
    \item For any vertex $i$ in $Q$ on side $V_b$ (for some $b\in\{0,1\}$), letting $B_i$ be the set of all copies of vertex $i$ in trees $R\in B$, we have $$\sum_{v\in B_i}y_{\{v\}}=\widetilde{\Theta}(k_b).$$
    \item For any edge $e$ in $Q$, letting $B_e$ be the set of all copies of vertex $e$ in trees $R\in B$, we have $$\sum_{g\in B_e}y_{\{g\}}=\widetilde{\Theta}(k_0d_0).$$
  \end{itemize}
\end{lemma}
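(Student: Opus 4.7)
The plan is to prove both parts via a tree re-rooting argument using the Sherali-Adams constraints \eqref{eq:smes1}, \eqref{LP:d-bound}, and \eqref{eq:smes3}. The key preliminary fact I would establish is that for any tree $Q$ and any choice of root-vertex on side $V_b$, growing an instantiation of $Q$ in $G$ vertex-by-vertex from that root---using \eqref{eq:smes1} with $T=\emptyset$ to pick the root (contributing a factor $k_b$), and then repeatedly applying the upper half of \eqref{LP:d-bound} to extend along each edge (contributing $d_0$ or $d_1$ depending on which side the edge emanates from)---gives $\sum_{R\text{ isomorphic to }Q} y_R \leq k_b d_0^{\hat{t}_0} d_1^{\hat{t}_1}$, with a matching $\widetilde{\Omega}$ lower bound coming from the lower half of \eqref{LP:d-bound}. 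Because this holds independently of which vertex of $Q$ plays the role of the root, matching the two computations for the original rooting on $V_0$ and a re-rooting at a distinguished vertex $i\in V_b$ yields the key identity $k_0 d_0^{t_0} d_1^{t_1} = \widetilde{\Theta}(k_b d_0^{\hat{t}_0} d_1^{\hat{t}_1})$. Equivalently, the flipped edge directions along the old-root-to-$i$ path are absorbed by the bipartite edge-count identity $k_0 d_0 = \widetilde{\Theta}(k_1 d_1)$, which itself follows from summing \eqref{LP:d-bound} over $V_0$ and over $V_1$ and equating both expressions to $\sum_{e\in E} y_{\{e\}}$.

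For the vertex statement, I would fix $i$, re-root $Q$ at $i$, and partition $\sum_{R \in B} y_R$ according to which $v \in B_i$ is the image of $i$. The same growth argument, but with the initial factor replaced by $y_{\{v\}}$, gives $\sum_{R \in B,\, R(i)=v} y_R \leq y_{\{v\}} \cdot d_0^{\hat{t}_0} d_1^{\hat{t}_1}$. Summing over $v\in B_i$ and combining with the hypothesis $\sum_{R\in B} y_R = \widetilde{\Theta}(k_0 d_0^{t_0} d_1^{t_1})$ and the re-rooting identity yields $\sum_{v\in B_i} y_{\{v\}} \geq \widetilde{\Omega}(k_b)$. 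The matching upper bound $\sum_{v\in B_i} y_{\{v\}} \leq k_b$ is immediate from \eqref{eq:smes1} with $T=\emptyset$. The edge statement is analogous: I would pick an endpoint $i_0 \in V_0$ of $e$, re-root $Q$ at $i_0$ so that $e$ is the first edge emanating from the root, and fix the image of $e$ to be some $g\in B_e$ (which, by \eqref{eq:smes3}, automatically fixes both endpoints of $g$). Growing the remaining $\hat{t}_0-1$ edges from $V_0$ and $\hat{t}_1$ edges from $V_1$ then gives $\sum_{R\in B,\, R(e)=g} y_R \leq y_{\{g\}} \cdot d_0^{\hat{t}_0-1} d_1^{\hat{t}_1}$, from which $\sum_{g\in B_e} y_{\{g\}} \geq \widetilde{\Omega}(k_0 d_0)$ drops out via the same algebra (now with $b=0$); the matching upper bound $\sum_{g\in B_e} y_{\{g\}} \leq d_0 k_0$ follows by summing \eqref{LP:d-bound} over $v \in V_0$.

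The main technical subtlety I expect is the bookkeeping in the re-rooting identity itself. When $Q$ is rooted at the original $V_0$ vertex versus at $i\in V_b$, the edges along the connecting path switch direction, so $(\hat{t}_0,\hat{t}_1)$ differs from $(t_0,t_1)$ in a way that depends on the parity of that path. Verifying that the resulting discrepancy is exactly compensated by the ratio $k_0/k_b$---modulo polylogarithmic slack---is where the bipartite regularity $k_0 d_0 = \widetilde{\Theta}(k_1 d_1)$ enters crucially. Once this identity is in hand, the remaining manipulations of the Sherali-Adams constraints are routine.
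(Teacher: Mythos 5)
Your proposal is correct and matches the paper's approach: both re-root $Q$ at the distinguished vertex (or edge-endpoint), grow the tree by repeated application of the upper half of Constraint~\eqref{LP:d-bound} to obtain $\sum_{R\in B} y_R \leq s\, d_0^{\hat t_0}d_1^{\hat t_1}$, compare against Equation~\eqref{eq:lp-weight-preserve} for the lower bound on $s$, and use \eqref{eq:smes1} (or \eqref{LP:d-bound} summed over $V_0$ in the edge case) for the matching upper bound. The paper states the argument only for $i\in V_0$ and declares the other cases ``similar,'' whereas you have explicitly worked out the parity bookkeeping and its cancellation against $k_0 d_0 = \widetilde{\Theta}(k_1 d_1)$, which is exactly what ``similar'' is hiding.
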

\begin{proof}
We give the proof for vertices on side $V_0$. The proof for vertices on side $V_1$ and for edges is similar. Let $s=\sum_{v\in B_i}y_{\{v\}}$. By repeated applications of the Sherali-Adams Constraint~\eqref{LP:d-bound} we have $$\sum_{R\in B} y_R\leq s d_0^{t_0}d_1^{t_1}.$$ By~\eqref{eq:lp-weight-preserve}, this implies that $s=\widetilde{\Omega}(k_0)$. On the other hand, by Constraint~\eqref{eq:smes1}, we have $s\leq k_0$, which completes the proof.
\end{proof}

\begin{remark} We will take advantage of the bucketing on LP values and other values above to abuse notation. For example, since all caterpillars of the form $K_t$ that survive our bucketing will have the same LP value (up to a constant factor), we can be a bit imprecise, and write $y_{K_t}$, with the understanding that we ignore constant factors in our analysis.
\end{remark}

At step $t$, for leaf tuple $\lambda\in\widetilde{L}_t$, let us define the bipartite subgraph $H^\lambda_t=(U(H^\lambda_t),W(H^\lambda_t),E(H^\lambda_t))$ as follows. %
Let
$$
E(H^\lambda_t)=\{(u,w)\mid T^{\lambda}_{(u,w)}(B_{t+1})\neq\emptyset\},
$$
and let $U(H^\lambda_t)$ and $W(H^\lambda_t)$ be the vertex sets incident to these edges. That is, $U(H^\lambda_t)=\{u\in S_t\mid \exists w\in W_t:(u,w)\in E(H^\lambda_t)\}$, and $W(H^\lambda_t)=\{w\in W_t\mid \exists u\in U(H^\lambda_t):(u,w)\in E(H^\lambda_t)\}$. This subgraph (for an appropriate choice of $t$, and distribution over $\lambda$) will form the basis of our rounding algorithm. Later, we will show that the cardinalities of sets involved do not depend on the choice of leaves (see Lemma~\ref{lem:uniform-cardinalities}).

Note that the graphs $H^\lambda_t$ are subgraphs of the graphs $G_t=(S_t,W_t,E_t)$ which were mentioned in Section~\ref{sec:algorithm-description}. The graph $G_t$ is in fact the union (over leaf tuples $\lambda$) of all subgraphs $H^\lambda_t$. The sets $S_t$ and $W_t$ were defined in the above bucketing, while $E_t$ is defined as follows:
$$
E_t=\{(u,w)\mid T_{(u,w)}(B_{t+1})\neq\emptyset\}.
$$

\section{Performance guarantee of our rounding algorithm: faithfulness and approximation}\label{sec:SmES-analysis}

In this section, we show that Algorithm {\bf{Faithful-S$m$ES}} gives a faithful $f$-approximation (using Lemma~\ref{lem:simplified-goal}). In fact, if at any iteration the lower bound on degrees in Step~\ref{step:large-degree} of the algorithm is satisfied, then Step~\ref{step:large-degree} already gives a faithful rounding on its own. Otherwise (assuming an upper bound on degrees), we show that at some iteration $t$, Step~\ref{step:default} of the algorithm gives a skew-proportional rounding satisfying the conditions of Lemma~\ref{lem:simplified-goal}.

\subsection{LP rounding: faithfulness} \label{sec:SmES_uniformity}

Before we prove the approximation guarantee, we need to show that the algorithm gives a skewed-proportional rounding. In fact, this represents the core of our technical contribution. We will show that the sampling procedure suggested in Step~\ref{step:default} of the algorithm always gives a skewed-proportional rounding, regardless of whether the conditions of Lemma~\ref{lem:simplified-goal} are met. The following lemma allows us to reformulate the conditions of skewed-proportional rounding in terms of set cardinalities.

\begin{lemma}\label{lem:uniformity-to-faithful}
Let $\mathcal{A}$ be an algorithm which, for some $t$, outputs a random subgraph $H^*=(V^*_b,V^*_{1-b},E^*)$ of $G_t=(S_t,W_t,E_t)$ (where $b \in \{0,1\}$ is s.t.\ $S_t\subseteq V_b$), where the cardinalities $|V^*_0|,|V^*_1|,|E^*|$ are roughly uniform (over the randomness). %
 If the following conditions hold:
\begin{enumerate}
\item For all $u\in S_t$, $\prob[u\in V^*_b]=\widetilde{O}(\mathbb{E}[|V^*_b|]/|S_t|)$,
\item for all $w\in W_t$, $\prob[w\in V^*_{1-b}]=\widetilde{O}(\mathbb{E}[|V^*_{1-b}|]/|W_t|)$, and
\item for all $e\in E_t$, $\prob[e\in E^*]=\widetilde{O}(\mathbb{E}[|E^*|]/|E_t|)$,
\end{enumerate}
then algorithm $\mathcal{A}$ is a skewed proportional rounding with parameters $(\widetilde{O}(\mathbb E[|V^*_0|]),\widetilde{O}(\mathbb E[|V^*_1|]),\widetilde{\Omega}(\mathbb E[|E^*|]))$.
\end{lemma}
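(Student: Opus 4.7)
The plan is to verify each of the five conditions in the definition of skewed proportional rounding with parameters $k'_b = \tO(\E[|V^*_b|])$ for $b\in\{0,1\}$ and $m' = \tOmega(\E[|E^*|])$. Conditions~\ref{def:skew-absoluteVE-bounds} (cardinality with probability $1$) and~\ref{def:skew-expectedE-LB} (expected edge lower bound) follow almost immediately from the ``roughly uniform'' hypothesis on $|V^*_0|$, $|V^*_1|$, $|E^*|$, which I interpret as up to polylog factors --- this polylog slack is absorbed into the $\tO(\cdot)$ and $\tOmega(\cdot)$ notation appearing in the definition of skewed proportional rounding. The heart of the proof is therefore conditions 1--3.

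For the inclusion probability bounds, the key tool is Lemma~\ref{lem:regularity} applied to the buckets $B_t$ and $B_{t+1}$ produced by the bucketing in Section~\ref{sec:bucketing}. Together with the fact that the bucketing explicitly refines by the LP value of the newly added vertex, of the newly added edge, and of the entire caterpillar, this lets us conclude that the LP values are uniform (up to a constant factor) across $S_t$, across $W_t$, and across $E_t$, and moreover that the totals satisfy
\begin{equation*}
\textstyle\sum_{u\in S_t} y_{\{u\}}=\tTheta(k_b),\quad \sum_{w\in W_t} y_{\{w\}}=\tTheta(k_{1-b}),\quad \sum_{e\in E_t} y_{\{e\}}=\tTheta(k_0d_0)=\tTheta(m).
\end{equation*}
Uniformity plus these sums give $y_{\{u\}}=\tTheta(k_b/|S_t|)$ for $u\in S_t$, $y_{\{w\}}=\tTheta(k_{1-b}/|W_t|)$ for $w\in W_t$, and $y_{\{e\}}=\tTheta(m/|E_t|)$ for $e\in E_t$.

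Plugging these identities into the three hypotheses of the lemma yields, for instance,
\begin{equation*}
\Pr[u\in V^*_b]=\tO(\E[|V^*_b|]/|S_t|)=\tO\bigl((\E[|V^*_b|]/k_b)\cdot y_{\{u\}}\bigr)=\tO(k'_b/k_b)\cdot y_{\{u\}},
\end{equation*}
which is exactly condition 1 (and condition 2 is analogous). The same computation with the edge identity handles condition 3. For vertices or edges not in $S_t$, $W_t$, or $E_t$ respectively, the inclusion probability is zero, so the bounds hold trivially.

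The main obstacle will be verifying carefully that the bucketing steps really do make the LP values of $S_t$, $W_t$, and $E_t$ (rather than merely the total caterpillar weights) uniform up to constants, since Lemma~\ref{lem:regularity} only guarantees total weights of the correct order. This requires tracing through the successive bucket refinements in Section~\ref{sec:bucketing} and in particular using the step that buckets by the LP value of the newly added vertex/edge; one must also argue that the surviving leaf-tuple set $\widetilde L_t$ does not collapse the uniformity across the projection onto $S_t$ and $W_t$. Once this is established, everything else is essentially bookkeeping.
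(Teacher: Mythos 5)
Your proposal is correct and matches the paper's (very terse) proof: both hinge on exactly the same facts, namely that the bucketing together with Lemma~\ref{lem:regularity} gives $y_u=\widetilde{\Theta}(k_b/|S_t|)$, $y_w=\widetilde{\Theta}(k_{1-b}/|W_t|)$, and $y_e=\widetilde{\Theta}(m/|E_t|)$, after which the definition of skewed proportional rounding is verified by direct substitution. You have merely spelled out in more detail the bookkeeping that the paper compresses into one sentence.
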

\begin{proof} Follows directly from the definition of skewed proportional rounding, and the fact that, by bucketing and by Lemma~\ref{lem:regularity}, for all $u\in S_t$, $w\in W_t$ and $e\in E_t$ we have $y_u=\widetilde{\Theta}(k_b/|S_t|)$, $y_w=\widetilde{\Theta}(k_{1-b}/|W_t|)$, and $y_e=\widetilde{\Theta}(m/|E_t|)$.
\end{proof}

Our goal is thus to show three simple lemmas corresponding to the above conditions. One each for the probability of a vertex/edge belonging to $U(H^\lambda_t)$, $W(H^\lambda_t)$, and $E(H^\lambda_t)$ (for uniformly chosen $\lambda\in\widetilde{L}_t$). First, let us show that, indeed, the cardinalities of these sets (and others) do not vary by more than a polylogarithmic factor over the choice of leaf-tuple $\lambda$.

\begin{lemma}\label{lem:uniform-cardinalities} For every step $t$ in the above bucketing, the cardinality of both of the following sets (when non-empty) does not vary by more than a polylogarithmic factor over the choice of leaf tuple $\lambda \in L_t$ and rightmost vertex $u$:
$$T^\lambda(B_t), \qquad T^\lambda_u(B_t).$$
\noindent Moreover, the cardinality of each of the following sets also does not vary by more than a polylogarithmic factor over the choice of leaf tuple $\lambda\in\widetilde{L}_t$:
$$U(H^\lambda_t), \qquad W(H^\lambda_t), \qquad E(H^\lambda_t).$$
\end{lemma}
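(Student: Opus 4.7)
The plan is to proceed by induction on $t$. The base case $t=1$ is trivial: $B_1 = L_1 = S_1$ consists of single-vertex caterpillars with LP values within a constant factor of each other, so every $T^\lambda(B_1)$ is a singleton and $T^\lambda_u(B_1)$ is either empty or a singleton.

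For the inductive step, I would assume the uniformity properties for $B_t$ and trace them through the bucketing stages that produce $B_{t+1}$, namely $B_t \to \widetilde{B}_t \to \widetilde{B}'_t \to \widetilde{B}''_t \to \widetilde{B}'''_t \to B_{t+1}$. The key tools are: (i) the $\widetilde{B}'_t$-bucketing assigns a uniform LP value $y_{K_{t+1}}$ to every caterpillar in $B_{t+1}$; (ii) the $\widetilde{B}''_t$-bucketing makes $|T^\lambda_{(u,w)}(B_{t+1})|$ uniform over non-empty triples $(\lambda,u,w)$; (iii) the $\widetilde{B}'''_t$-bucketing makes $|\bigcup_u T^\lambda_{(u,w)}(B_{t+1})|$ uniform over non-empty pairs $(\lambda,w)$; (iv) the final pruning forces $\sum_{K \in T^\lambda(B_{t+1})} y_K \geq \tOmega(d_b) \sum_{K \in T^\lambda(B_t)} y_K$, while Constraint~\eqref{LP:d-bound} provides the matching upper bound $\leq d_b \sum_{K \in T^\lambda(B_t)} y_K$. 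Combining (i) and (iv), $|T^\lambda(B_{t+1})| = \tilde\Theta((y_{K_t}/y_{K_{t+1}})\, d_b \,|T^\lambda(B_t)|)$, and the inductive uniformity of $|T^\lambda(B_t)|$ over $\lambda \in L_t \supseteq \widetilde{L}_t$ gives uniformity of $|T^\lambda(B_{t+1})|$ over $\lambda \in \widetilde{L}_t$ in the backbone case (where $L_{t+1} = \widetilde{L}_t$). In the hair case, where $L_{t+1} = \{\lambda \cup \{w\} : \lambda \in \widetilde{L}_t,\, w \in W_t\}$, the quantity $|T^{\lambda \cup \{w\}}(B_{t+1})|$ coincides with $|\bigcup_u T^\lambda_{(u,w)}(B_{t+1})|$, which is directly uniform by (iii). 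The uniformity of $|T^\lambda_u(B_{t+1})|$ reduces analogously to (ii), modulo the case distinction on whether $u$ plays the role of the new backbone vertex or is carried over from $S_t$.

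For the subgraph cardinalities, $|E(H^\lambda_t)|$ equals the number of non-empty $(u,w)$-buckets of $T^\lambda(B_{t+1})$, i.e.\ the ratio $|T^\lambda(B_{t+1})|/|T^\lambda_{(u,w)}(B_{t+1})|$; both numerator and denominator are uniform over $\lambda \in \widetilde{L}_t$, so $|E(H^\lambda_t)|$ is too. Analogously, $|W(H^\lambda_t)| = |T^\lambda(B_{t+1})|/|\bigcup_u T^\lambda_{(u,w)}(B_{t+1})|$ is a ratio of uniform quantities by (iii). For $|U(H^\lambda_t)|$, which counts rightmost backbone vertices in $S_t$ appearing in $T^\lambda(B_{t+1})$, I would express it as $|T^\lambda(B_{t+1})|/|T^\lambda_u(B_{t+1})|$ and invoke the uniformity of $|T^\lambda_u(B_{t+1})|$ already established in this induction step.

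The main obstacle will be the bookkeeping around the hair-versus-backbone case distinction, since in a hair step the ``new leaf tuple'' $\lambda \cup \{w\}$ of step $t{+}1$ absorbs the new vertex into the leaves while the rightmost backbone vertex carries over from $S_t$, reshuffling the roles of $\lambda$ and $u$. A second subtlety is that $|T^\lambda_u(B_t)|$ must be uniform over the \emph{pair} $(\lambda,u)$ simultaneously, which requires chaining stages (ii) and (iii) and appealing to the uniformity of individual vertex LP values (from the $\widetilde{B}'_t$ bucketing on the newly added vertex) together with Lemma~\ref{lem:regularity}, which guarantees $\sum y_{\{v\}} = \tilde\Theta(k_b)$ on each slot, so that ratios of LP weights and set cardinalities are consistent up to polylogarithmic factors.
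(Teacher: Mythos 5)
Your overall strategy matches the paper's: induction on $t$, using the uniformity forced by the $\widetilde{B}'_t,\widetilde{B}''_t,\widetilde{B}'''_t$ bucketing stages and the pruning step. The identifications you make for $|E(H^\lambda_t)|$ (ratio $|T^\lambda(B_{t+1})|/|T^\lambda_{(u,w)}(B_{t+1})|$) and $|W(H^\lambda_t)|$, and the hair-case identity $T^{\lambda\cup\{w\}}(B_{t+1})=\bigcup_u T^\lambda_{(u,w)}(B_{t+1})$, are all the same as the paper's.

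However, there is a genuine gap in your handling of $|U(H^\lambda_t)|$. You propose to write $|U(H^\lambda_t)| = |T^\lambda(B_{t+1})|/|T^\lambda_u(B_{t+1})|$ and ``invoke the uniformity of $|T^\lambda_u(B_{t+1})|$ already established in this induction step.'' This does not work for two reasons. First, the notation doesn't quite parse: in a backbone step the rightmost backbone vertex of caterpillars in $B_{t+1}$ is the newly added $w$, not $u\in S_t$, so $T^\lambda_u(B_{t+1})$ (caterpillars with leaves $\lambda$ and final vertex $u$) is not the object you want. What you implicitly mean is $\bigcup_w T^\lambda_{(u,w)}(B_{t+1})$, whose cardinality equals $\deg_{H^\lambda_t}(u)\cdot|T^\lambda_{(u,w)}(B_{t+1})|$. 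Second, and more fundamentally, uniformity of this over $u$ requires that the degree $\deg_{H^\lambda_t}(u)$ be uniform over $u\in U(H^\lambda_t)$, which is \emph{not} enforced by any bucketing stage: the $\widetilde{B}'''_t$ stage makes $|\bigcup_u T^\lambda_{(u,w)}(\widetilde{B}''_t)|$ uniform over $(\lambda,w)$, which controls the $W$-side degree $\deg_{H^\lambda_t}(w)$, not the $U$-side degree. Some surviving $u$'s might retain almost all their $d_b$ incident edges after bucketing while others retain only one. The paper avoids this by relating $|U(H^\lambda_t)|$ to step-$t$ quantities via a \emph{survival} argument: since $|T^\lambda_u(B_t)|$ is inductively uniform and LP values in $B_t$ are uniform, the total LP weight of $T^\lambda(B_t)$ is spread evenly over the vertices $u$ with $T^\lambda_u(B_t)\neq\emptyset$; since the bucketing retains a $\widetilde{\Omega}(1)$ fraction of that weight (for $\lambda\in\widetilde{L}_t$), a $\widetilde{\Omega}(1)$ fraction of such $u$'s must survive into $U(H^\lambda_t)$, giving $|U(H^\lambda_t)| = \widetilde{\Theta}\bigl(|T^\lambda(B_t)|/|T^\lambda_u(B_t)|\bigr)$, which is uniform by induction. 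You should replace your ratio for $|U(H^\lambda_t)|$ with this argument.
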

\begin{proof} 
We proceed by induction. For $t=1$, by definition, for every $v\in L_1$ we have $T^v(B_1)=\{v\}$, and $T^v_u(B_1)=\{v\}$ iff $v=u$ (otherwise it is empty).

Let us now assume that the lemma holds for $T^\lambda(B_t)$ and $T^\lambda_u(B_t)$ (over the choice of $\lambda\in L_t$), and show that it holds for  $U(H^\lambda_t)$, $W(H^\lambda_t)$, $E(H^\lambda_t)$ (over the choice of $\lambda\in \widetilde{L}_t$),  and for $T^{\lambda'}(B_{t+1})$ and $T^{\lambda'}_{u'}(B_{t+1})$ (over the choice of $\lambda'\in L_{t+1}$). Let $\lambda\in\widetilde{L}_t$ be a tuple of leaves. 
As noted earlier (see definition of $\widetilde{L}_t$), for any $\lambda\in\widetilde{L}_t$, at least a polylogarithmic fraction of the total LP weight of $T^\lambda(\widetilde{B}_t)$ must have been preserved in each stage of the bucketing. Since by the inductive hypothesis, the cardinality $|T^\lambda_u(B_t)|$ is roughly uniform (up to a polylogarithmic factor), and since LP values of caterpillars in $B_t$ are roughly uniform, every vertex $u$ with non-empty $T^\lambda_u(B_t)$ contributes roughly the same LP weight to the total weight of $T^\lambda(B_t)$, and therefore a $\widetilde{\Omega}(1)$ fraction of these vertices must also survive the bucketing. In particular, this implies that 
\begin{equation}\label{eq:U-cardinality}|U(H^\lambda_t)|=\widetilde{\Theta}(|\{u\in S_t\mid T^\lambda_u(B_t)\neq\emptyset\}|)=\widetilde{\Theta}(|T^\lambda(B_t)|/|T^\lambda_u(B_t)|).\end{equation}
Since by the inductive hypothesis, the two values in the final ratio on do not depend on $\lambda,u_0$ (by more than a polylogarithmic factor), the claim follows for $U(H^\lambda_t)$.

Next, let us show uniformity of the cardinalities $W(H^\lambda_t)$ and $E(H^\lambda_t)$. By construction, we have
$$|T^\lambda(B_{t+1})|=\frac{1}{y_{K_{t+1}}}\sum_{\widetilde{K}\in T^\lambda(B_{t+1})}y_{\widetilde{K}}=\widetilde{\Theta}\left(\frac{d_b}{y_{K_{t+1}}}\right)\sum_{K\in T^\lambda(B_t)}y_K=\widetilde{\Theta}\left(\frac{d_by_{K_t}}{y_{K_{t+1}}}\right)|T^\lambda(B_t)|.$$
Since by construction of $\widetilde{B}''_t$, every edge $(u,w)\in E(H^\lambda_t)$ participates in the same number of caterpillars in $T^\lambda(B_{t+1})$ (up to a constant factor), call it $|T^\lambda_e(B_{t+1})|$, we have that 
\begin{align}
|E(H^\lambda_t)|&=\frac{|T^\lambda(B_{t+1})|}{|T^\lambda_e(B_{t+1})|}\label{eq:E-cardinality}\\
&=\widetilde{\Theta}\left(\frac{d_by_K|T^\lambda(B_t)|}{y_{K_{t+1}}|T^\lambda_e(B_{t+1})|}\right),\nonumber
\end{align} 
which does not depend on $\lambda$, since all the values in the final expression are fixed by bucketing. By a similar argument (by construction of $\widetilde{B}'''_t$), every vertex $w\in W(H^\lambda_t)$ has the same degree in $H^\lambda_t$ (up to a constant factor). Thus, since the number of edges in $H^\lambda_t$ is fixed (up to a polylogarithmic factor), so is $|W(H^\lambda_t)|$.

Finally, we need to show the uniformity of $T^{\lambda'}(B_{t+1})$ and $T^{\lambda'}_{u'}(B_{t+1})$ (over the choice of $\lambda'\in L_{t+1}$ and $u'$ such that $T^{\lambda'}_{u'}(B_{t+1})\neq\emptyset$). Let us consider the two different kinds of steps separately. If $t$ is a backbone step, then by the bucketing that defines $\widetilde{B}'''_t$, the number of caterpillars $|T^\lambda_w(B_{t+1})|$ is fixed up to a constant factor. Moreover, $\{w\in S_{t+1}\mid T^\lambda_w(B_{t+1})\neq\emptyset\}=W(H^\lambda_t)$, and as we've shown, the number of vertices $w$ in the latter set is fixed up to a polylogarithmic factor, therefore the same holds for the total number of caterpillars $|T^\lambda(B_{t+1})|=\sum_{w\in W(H^\lambda_t)}|T^\lambda_w(B_{t+1})|$.

Now, suppose $t$ is a hair step. As we've noted, for every $(u,w)\in E(H^\lambda_t)$, the number of caterpillars $|T^{\lambda}_{(u,w)}(B_{t+1})|$ is fixed up to a constant factor. However, recall that in a hair step we have $T^{\lambda}_{(u,w)}(B_{t+1})=T^{\lambda\cup\{w\}}_u(B_{t+1})$. Thus, it only remains to show the claim for $|T^{\lambda\cup\{w\}}(B_{t+1})|=\sum_{u\in\Gamma_{H^{\lambda}_t}(w)}|T^{\lambda\cup\{w\}}_u(B_{t+1})|$. But here it is clearly sufficient to show that every vertex $w\in W(H^\lambda_t)$ has roughly the same degree in $H^\lambda_t$, which we have already argued.
\end{proof}

Before we state and prove the three main lemmas which we need for the probability bounds in Lemma~\ref{lem:uniformity-to-faithful}, let us introduce the following notation. Let us assume that when caterpillar $K_t$ has its initial vertex at side $V_0$ with edges directed away from the initial vertex, it has $t_0$ outgoing edges from side $V_0$ and $t_1$ outgoing edges from side $V_1$ (thus $t_0+t_1=t-1$).  %
 For concreteness, let us assume that $K_t$ has an even number of backbone edges. That is, when the initial backbone vertex is in $V_0$ so is the rightmost vertex. The other case (when the backbone has odd length) is quite similar. %
 We are now ready to prove the remaining lemmas which will guarantee skewed proportionality.

\begin{lemma}\label{lem:prob-bound-U} For uniformly chosen $\lambda\in\widetilde{L}_t$, for every $u\in S_t$ we have $\prob_\lambda[u\in U(H^\lambda_t)]=\widetilde{O}(|U(H_t)|/|S_t|)$.
\end{lemma}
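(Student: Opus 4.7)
The plan is to write $\Pr_\lambda[u \in U(H^\lambda_t)]$ as $|\{\lambda \in \widetilde{L}_t : u \in U(H^\lambda_t)\}|/|\widetilde{L}_t|$ and bound numerator and denominator separately. The bridge between the combinatorial event ``$u \in U(H^\lambda_t)$'' and the LP is the observation that $u \in U(H^\lambda_t)$ forces the existence of a caterpillar $K \in B_t$ with rightmost backbone vertex $u$ whose leaves match $\lambda$ in the appropriate sense (coinciding with $\lambda$ in a backbone step, extended by the new hair endpoint in a hair step). Together with the uniformity $|T^{\lambda_K}_u(B_t)| = \widetilde{\Theta}(h)$ given by Lemma~\ref{lem:uniform-cardinalities}, this bounds the numerator by $\widetilde{O}(N_u/h)$, where $N_u$ counts all caterpillars in $B_t$ whose rightmost vertex is $u$.

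To bound $N_u$, I would apply constraint~\eqref{LP:d-bound} iteratively outward from the singleton set $T = \{u\}$, traversing the tree structure of $K_t$ re-rooted at $u$. Each extension multiplies the LP-weight bound by at most $d_0$ or $d_1$, with exponents matching those of the original rooting (using that $K_t$ has even backbone length, so the rightmost vertex lies on the same side as the root). This gives a total LP weight of caterpillars in $B_t$ ending at $u$ of at most $\widetilde{O}(y_u \cdot d_0^{t_0} d_1^{t_1})$. Dividing by the uniform per-caterpillar weight $\widetilde{\Theta}(y_{K_t})$ and substituting $y_u = \widetilde{\Theta}(k_0/|S_t|)$ and $|B_t| = \widetilde{\Theta}(k_0 d_0^{t_0} d_1^{t_1}/y_{K_t})$ (both consequences of Lemma~\ref{lem:regularity} and~\eqref{eq:lp-weight-preserve}) yields $N_u \leq \widetilde{O}(|B_t|/|S_t|)$.

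For the denominator, the weight-preservation property of the bucketing stages, combined with the pruning definition of $\widetilde{L}_t$, ensures $\sum_{K' \in B_{t+1}} y_{K'} = \widetilde{\Theta}(d_b) \sum_{K \in B_t} y_K$. Dividing by the uniform per-caterpillar weights and by $|T^\lambda(B_{t+1})| = \widetilde{\Theta}(d_b y_{K_t}/y_{K_{t+1}})\,|T^\lambda(B_t)|$ gives $|\widetilde{L}_t| = |B_{t+1}|/|T^\lambda(B_{t+1})| = \widetilde{\Theta}(|B_t|/|T^\lambda(B_t)|)$. Taking the ratio of the numerator and denominator bounds, and invoking equation~\eqref{eq:U-cardinality} in the form $|U(H^\lambda_t)| = \widetilde{\Theta}(|T^\lambda(B_t)|/h)$, the desired inequality $\Pr_\lambda[u \in U(H^\lambda_t)] \leq \widetilde{O}(|U(H_t)|/|S_t|)$ follows directly.

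The principal obstacle I expect is the reverse application of constraint~\eqref{LP:d-bound} used to bound $N_u$. The forward direction (extending a caterpillar outward from its original root) is an immediate iterated application of the LP constraint, but bounding the count of caterpillars ending at a fixed vertex $u$ essentially requires re-rooting the caterpillar at $u$ and propagating LP-weight bounds outward from there. This step crucially relies on the availability of consistent higher-moment variables in the Sherali-Adams hierarchy; this is precisely the feature (as flagged in Section~\ref{sec:SmES-algorithm}) that is missing from the Lov\'asz-Schrijver relaxation used by~\cite{BCCFV10} and that distinguishes our analysis from theirs.
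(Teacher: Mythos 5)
Your proof is correct and follows essentially the same route as the paper's: express the probability as the ratio $|\{\lambda : u\in U(H^\lambda_t)\}|/|\widetilde{L}_t|$, bridge the numerator to LP weight via the uniformity of $|T^\lambda_u(B_t)|$, bound that weight by iterating Constraint~\eqref{LP:d-bound} outward from $\{u\}$ (the paper's $\sum_{\lambda}\sum_{K\in T^\lambda_u(B_t)} y_K \leq d_0^{t_0}d_1^{t_1}y_u$, justified by re-rooting at $u$, with the even-backbone parity preserving $t_0,t_1$), then use $y_u = \widetilde{\Theta}(k_0/|S_t|)$ and the lower bound on $|\widetilde{L}_t|$ from~\eqref{eq:B_t-sum} before finishing with~\eqref{eq:U-cardinality}. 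Your introduction of $N_u$ as an intermediate count, and your route to the denominator via $|B_{t+1}|/|T^\lambda(B_{t+1})|$, are minor repackagings of the paper's direct use of~\eqref{eq:B_t-sum}, not a different argument.
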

\begin{proof} %
  As we have argued earlier, we have $\sum_{\lambda\in\widetilde{L}_t}\sum_{K\in T^{\lambda}(B_t)}y_K=\widetilde{\Omega}(k_0d_0^{t_0}d_1^{t_1})$. By bucketing, we can write this more simply as
  \begin{equation}\label{eq:B_t-sum}
    |\widetilde{L}_t||T^\lambda(B_t)|y_{K_t}=\widetilde{\Omega}(k_0d_0^{t_0}d_1^{t_1}).
  \end{equation}
  Now, take any vertex $u\in S_t$. By repeated applications of Constraint~\eqref{LP:d-bound}, we have
$$\sum_{\lambda\in\widetilde{L}_t:U(H^\lambda_t)\ni u}\;\sum_{K\in T^\lambda_u(B_t)}y_K\leq d_0^{t_0}d_1^{t_1}y_u.$$ By Lemma~\ref{lem:uniform-cardinalities}, for any $\lambda\in\widetilde{L}_t$ such that $U(H^\lambda_t)\ni u$, the number of caterpillars this tuple of leaves contributes, $|T^\lambda_u(B_t)|$, is roughly uniform (up to a polylogarithmic factor). Thus, we have $$|\{\lambda\in\widetilde{L}_t\mid U(H^\lambda_t)\ni u\}||T^\lambda_u(B_t)|y_{K_t}\approx\sum_{\lambda\in\widetilde{L}_t:U(H^\lambda_t)\ni u}\sum_{K\in T^\lambda_u(B_t)}y_K\leq d_0^{t_0}d_1^{t_1}y_u.$$
By Constraint~\eqref{eq:smes1} and bucketing, we have $y_u=O(k_0/|S_t|)$, so the above inequality implies
$$|\{\lambda\in\widetilde{L}_t\mid U(H^\lambda_t)\ni u\}||T^\lambda_u(B_t)|y_{K_t}=O(k_0d_0^{t_0}d_1^{t_1}/|S_t|).$$
Combining this with~\eqref{eq:B_t-sum}, we have
\begin{align*}
\prob_{\lambda\in_R\widetilde{L}_t}[u\in U(H^\lambda_t)] &= \frac{|\{\lambda\in\widetilde{L}_t\mid U(H^\lambda_t)\ni u\}|}{|\widetilde{L}_t|}\\
&=\widetilde{O}\left(\frac{k_0d_0^{t_0}d_1^{t_1}/(|S_t||T^\lambda_u(B_t)|y_{K_t})}{k_0d_0^{t_0}d_1^{t_1}/(|T^\lambda(B_t)|y_{K_t})}\right)
\\&=\widetilde{O}\left(\frac{|T^\lambda(B_t)|/|T^\lambda_u(B_t)|}{|S_t|}\right),
\end{align*}
which by~\eqref{eq:U-cardinality} is what we needed to prove.
\end{proof}

\begin{lemma} For uniformly chosen $\lambda\in\widetilde{L}_t$, for every $w\in W_t$ we have $\prob_\lambda[w\in W(H^\lambda_t)]=\widetilde{O}(|W(H_t)|/|W_t|)$.
\end{lemma}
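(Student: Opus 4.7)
The plan is to closely mimic the proof of Lemma~\ref{lem:prob-bound-U}, now playing the role of a newly added vertex $w\in W_t$ in place of a rightmost backbone vertex $u\in S_t$. First I will rewrite the total LP mass of caterpillars in $B_{t+1}$ using \eqref{eq:lp-weight-preserve} and the bucketing: letting $K_{t+1}$ have $t'_0$ edges directed out of the $V_0$-side and $t'_1$ out of the $V_1$-side (so $t'_0+t'_1=t$), bucketing gives
\begin{equation*}
|\widetilde{L}_t|\cdot|T^\lambda(B_{t+1})|\cdot y_{K_{t+1}}=\widetilde{\Omega}(k_0 d_0^{t'_0}d_1^{t'_1}).
\end{equation*}

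Next I fix $w\in W_t$ and bound the total weight of caterpillars in $B_{t+1}$ in which $w$ appears in the ``newly added'' slot. The crucial maneuver is to re-root the template $K_{t+1}$ at the newly added vertex and grow it outward by repeated applications of Sherali--Adams constraint \eqref{LP:d-bound}; since this constraint is symmetric in which vertex is fixed, I obtain an inequality of the form
\begin{equation*}
\sum_{\lambda\in\widetilde{L}_t:\,w\in W(H^\lambda_t)}\;\sum_{K} y_K \;\leq\; d_0^{s_0}d_1^{s_1}\, y_w,
\end{equation*}
where $(s_0,s_1)$ accounts for the edges as seen from $w$ and where the inner sum ranges over caterpillars in $B_{t+1}$ with $w$ in the correct position. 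I will handle the two cases in parallel: in a backbone step, $w$ is the new final vertex and the inner sum is $\sum_{K\in T^\lambda_w(B_{t+1})}y_K$; in a hair step, $w$ enters the leaf tuple and the inner sum is $\sum_{K\in T^{\lambda\cup\{w\}}(B_{t+1})}y_K$, using the identity $T^\lambda_{(u,w)}(B_{t+1})=T^{\lambda\cup\{w\}}_u(B_{t+1})$ from the proof of Lemma~\ref{lem:uniform-cardinalities}.

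By Lemma~\ref{lem:uniform-cardinalities}, the cardinality of the inner sum's index set is uniform up to polylogarithmic factors across the choice of $\lambda$, so I can factor it out and convert the LP-weight inequality into a bound on $|\{\lambda\in\widetilde{L}_t : w\in W(H^\lambda_t)\}|$. Plugging in $y_w=\widetilde{O}(k_{1-b}/|W_t|)$ (from \eqref{eq:smes1} and bucketing) and dividing by $|\widetilde{L}_t|$, the factors $d_0^{t'_0}d_1^{t'_1}$ cancel against the total-mass identity, yielding
\begin{equation*}
\Pr_{\lambda\in_R\widetilde{L}_t}\!\bigl[w\in W(H^\lambda_t)\bigr] \;=\; \widetilde{O}\!\left(\frac{|T^\lambda(B_{t+1})|/|\,\text{inner index set}\,|}{|W_t|}\right).
\end{equation*}
Finally, using the formula \eqref{eq:E-cardinality} and the fact (established inside the proof of Lemma~\ref{lem:uniform-cardinalities}) that every $w\in W(H^\lambda_t)$ has roughly the same degree in $H^\lambda_t$, I rewrite the right-hand side as $\widetilde{O}(|W(H^\lambda_t)|/|W_t|)$.

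The main obstacle will be bookkeeping: verifying that the exponents $(s_0,s_1)$ from re-rooting at $w$, the exponents $(t'_0,t'_1)$ from the total-mass formula, and the side-index $1-b$ in the Sherali--Adams bound on $y_w$ all align correctly in both the backbone-step and hair-step subcases, so that the $d_0,d_1$ powers cancel and the final bound becomes independent of them. Beyond that, the argument is a direct transcription of Lemma~\ref{lem:prob-bound-U}.
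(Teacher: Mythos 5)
Your proposal follows the paper's proof in all essentials: you introduce the same set $T^{\lambda,w}(B_{t+1})$ (handling backbone vs.\ hair steps via $T^\lambda_w$ vs.\ $T^{\lambda\cup w}$), derive the analogue of \eqref{eq:B_t-sum} for $B_{t+1}$, bound the LP mass of caterpillars through $w$ by ``re-rooting'' at $w$ and iterating Constraint~\eqref{LP:d-bound}, factor out the roughly uniform inner-sum cardinality via Lemma~\ref{lem:uniform-cardinalities}, and convert the result into the desired probability bound using the cardinality identity for $|W(H^\lambda_t)|$. One small point worth flagging in your ``bookkeeping'' step: the $d_0,d_1$ powers do \emph{not} cancel exactly. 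Re-rooting $K_{t+1}$ at $w\in V_{1-b}$ turns exponents $(t_0{+}1,t_1)$ into $(t_0,t_1{+}1)$, and with the $y_w=\widetilde O(k_{1-b}/|W_t|)$ substitution the residual factor in the ratio is $k_{1-b}d_{1-b}/(k_bd_b)$; you need the near-regularity identity $k_0d_0=\widetilde\Theta(k_1d_1)$ (which the paper invokes explicitly at that line) to absorb it. Otherwise the argument is a faithful transcription of the paper's proof.
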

\begin{proof} The proof is similar to that of Lemma~\ref{lem:prob-bound-U}. For $\lambda\in\widetilde{L}_t$, let us denote by $T^{\lambda,w}(B_{t+1})$ the set of caterpillars in $B_{t+1}$ which have leaves $\lambda$ and $w$. That is, if $t$ is a backbone step, then $T^{\lambda,w}(B_{t+1})=T^{\lambda}_w(B_{t+1})$, and otherwise (if $t$ is a hair step), then $T^{\lambda,w}(B_{t+1})=T^{\lambda\cup w}(B_{t+1})$. Noting that in both cases, the cardinality of this set is roughly uniform for all $\lambda\in\widetilde{L}_t$ and $w\in W(H^\lambda_t)$, let us abuse notation and write $|T^{\lambda,w}|$ for the cardinality of such sets. As argued earlier, the degrees of all $w\in W(H^\lambda_t)$ in $H^\lambda_t$ are roughly uniform, and each edge $(u,v)\in E(H^\lambda_t)$ participates in roughly the same number of caterpillars. Thus, it follows that
  \begin{equation}\label{eq:W-cardinality}
    |W(H_t)|=\widetilde{\Theta}(|T^\lambda(B_{t+1})|/|T^{\lambda,w}(B_{t+1})|).
  \end{equation}
Analogously to~\eqref{eq:B_t-sum}, we can also show
  \begin{equation}\label{eq:B_t+1-sum}
    |\widetilde{L}_t||T^\lambda(B_{t+1})|y_{K_{t+1}}=\widetilde{\Omega}(k_0d_0^{t_0+1}d_1^{t_1}).
  \end{equation}
  Now, take any vertex $w\in W_t$. By repeated applications of Constraint~\eqref{LP:d-bound}, we have
$$\sum_{\lambda\in\widetilde{L}_t:W(H^\lambda_t)\ni w}\;\sum_{K\in T^{\lambda,w}(B_t)}y_K\leq k_0d_0^{t_0}d_1^{t_1+1}.$$ As before, for any $\lambda\in\widetilde{L}_t$ such that $W(H^\lambda_t)\ni w$, the number of caterpillars this tuple of leaves contributes, $|T^{\lambda,w}(B_t)|$, is roughly uniform (up to a polylogarithmic factor). Thus, we have $$|\{\lambda\in\widetilde{L}_t\mid W(H^\lambda_t)\ni w\}||T^{\lambda,w}(B_{t+1})|y_{K_{t+1}}\approx\sum_{\lambda\in\widetilde{L}_t:W(H^\lambda_t)\ni w}\;\sum_{K\in T^{\lambda,w}(B_{t+1})}y_K\leq d_0^{t_0}d_1^{t_1+1}y_w.$$
By Constraint~\eqref{eq:smes1} and bucketing, we have $y_w=O(k_1/|W_t|)$, so the above inequality implies
$$|\{\lambda\in\widetilde{L}_t\mid W(H^\lambda_t)\ni w\}||T^{\lambda,w}(B_{t+1})|y_{K_{t+1}}=O(k_1d_0^{t_0}d_1^{t_1+1}/|W_t|).$$
Combining this with~\eqref{eq:B_t+1-sum}, we have
\begin{align*}
\prob[w\in W(H^\lambda_t)] &= \frac{|\{\lambda\in\widetilde{L}_t\mid W(H^\lambda_t)\ni w\}|}{|\widetilde{L}_t|}\\
&=\widetilde{O}\left(\frac{k_1d_0^{t_0}d_1^{t_1+1}/(|W_t||T^{\lambda,w}(B_{t+1})|y_{K_{t+1}})}{k_0d_0^{t_0+1}d_1^{t_1}/(|T^\lambda(B_{t+1})|y_{K_{t+1}})}\right)
\\&=\widetilde{O}\left(\frac{|T^\lambda(B_{t+1})|/|T^{\lambda,w}(B_{t+1})|}{|W_t|}\right),&\text{(since $k_0d_0\approx k_1d_1$)}
\end{align*}
which by~\eqref{eq:W-cardinality} is what we needed to prove.
\end{proof}

\begin{lemma} For uniformly chosen $\lambda\in\widetilde{L}_t$, for every $e\in E_t$ we have $\prob_\lambda[e\in E(H^\lambda_t)]=\widetilde{O}(|E(H_t)|/|E_t|)$. %
\end{lemma}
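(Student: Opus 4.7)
The plan is to mirror the structure of the two preceding lemmas, transferring the argument from vertices to edges. The desired probability bound will come from combining a top-down sum (of LP weights of all length-$(t-1)$ caterpillars that pass through a fixed edge $e$, using~\eqref{LP:d-bound} applied from the endpoints of $e$ outward) with the bucketing uniformity for $|T^\lambda_e(B_{t+1})|$ and the aggregate lower bound in~\eqref{eq:B_t+1-sum}.

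Concretely, I would first recall two facts: (i) by the construction of $\widetilde{B}''_t$, the cardinality $|T^\lambda_e(B_{t+1})|$ is uniform (up to a constant factor) across all $\lambda \in \widetilde{L}_t$ with $E(H^\lambda_t) \ni e$, and by~\eqref{eq:E-cardinality},
$|E(H^\lambda_t)| = \widetilde{\Theta}(|T^\lambda(B_{t+1})|/|T^\lambda_e(B_{t+1})|)$;
and (ii) by Lemma~\ref{lem:regularity} applied to the bucket of $t$-th caterpillar edges (which is exactly the $E_t$ bucket under bucketing), $\sum_{g \in E_t} y_g = \widetilde{\Theta}(k_0 d_0)$, so uniformity of edge LP values within $E_t$ yields $y_e = \widetilde{O}(k_0 d_0 / |E_t|)$.

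Next, fix any edge $e = (u,w) \in E_t$ with $u \in V_0, w \in V_1$ (recalling the convention that $t$ has been assumed so that the newly added edge goes from $V_0$ to $V_1$). Since every caterpillar $K \in T^\lambda_e(B_{t+1})$ contains the edge $e$, and there are $t_0$ remaining outgoing edges from the $V_0$-side vertices and $t_1$ remaining outgoing edges from the $V_1$-side vertices, repeated applications of Constraint~\eqref{LP:d-bound} (peeling off one outgoing edge at a time, starting from the endpoints of $e$) give
$$\sum_{\lambda \in \widetilde{L}_t : E(H^\lambda_t) \ni e}\;\sum_{K \in T^\lambda_e(B_{t+1})} y_K \;\leq\; d_0^{t_0} d_1^{t_1}\, y_e \;=\; \widetilde{O}\!\left(\frac{k_0 d_0^{t_0+1} d_1^{t_1}}{|E_t|}\right).$$
Using the bucketing — all caterpillars in $B_{t+1}$ have LP value $\widetilde{\Theta}(y_{K_{t+1}})$, and $|T^\lambda_e(B_{t+1})|$ is uniform — the left-hand side equals (up to polylog) $|\{\lambda \in \widetilde{L}_t : E(H^\lambda_t) \ni e\}|\cdot |T^\lambda_e(B_{t+1})|\cdot y_{K_{t+1}}$.

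Dividing by $|\widetilde{L}_t|$ and using~\eqref{eq:B_t+1-sum} in the denominator gives
$$\prob_{\lambda \in_R \widetilde{L}_t}[e \in E(H^\lambda_t)] \;=\; \widetilde{O}\!\left(\frac{k_0 d_0^{t_0+1} d_1^{t_1}/(|E_t|\,|T^\lambda_e(B_{t+1})|\,y_{K_{t+1}})}{k_0 d_0^{t_0+1} d_1^{t_1}/(|T^\lambda(B_{t+1})|\,y_{K_{t+1}})}\right) \;=\; \widetilde{O}\!\left(\frac{|T^\lambda(B_{t+1})|/|T^\lambda_e(B_{t+1})|}{|E_t|}\right),$$
and invoking~\eqref{eq:E-cardinality} one last time identifies the numerator with $|E(H^\lambda_t)|$, completing the proof. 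The main obstacle — the only place where this could break — is the bookkeeping of exponents when peeling off edges via~\eqref{LP:d-bound}: one must verify that the $t_0$ and $t_1$ remaining outgoing edges split correctly across the two endpoints of $e$ and that the $y_e$ bound from bucketing plugs in to match the $k_0 d_0^{t_0+1} d_1^{t_1}$ factor arising from~\eqref{eq:B_t+1-sum} (here the identity $k_0 d_0 \approx k_1 d_1$ absorbs any side-swap).
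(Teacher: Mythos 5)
Your proof is correct and follows essentially the same route as the paper's: peel off the $t-1$ remaining edges of the caterpillar from $e$ outward via Constraint~\eqref{LP:d-bound} to get the upper bound $d_0^{t_0}d_1^{t_1}y_e$, use the bucketing uniformity to rewrite the left-hand sum as $|\{\lambda : e\in E(H^\lambda_t)\}|\cdot|T^\lambda_e(B_{t+1})|\cdot y_{K_{t+1}}$, plug in $y_e=\widetilde O(k_0d_0/|E_t|)$, divide by $|\widetilde L_t|$ and invoke~\eqref{eq:B_t+1-sum}, and finish with~\eqref{eq:E-cardinality}. (You have in fact corrected two small typos in the paper's version, which writes ``$e\in S_t$'' and ``$T^\lambda_e(B_t)$'' where $e\in E_t$ and $T^\lambda_e(B_{t+1})$ are meant.)
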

\begin{proof}
Take any edge $e\in S_t$. By repeated applications of Constraint~\eqref{LP:d-bound}, we have
$$\sum_{\lambda\in\widetilde{L}_t:E(H^\lambda_t)\ni e}\;\sum_{K\in T^\lambda_{e}(B_t)}y_K\leq d_0^{t_0}d_1^{t_1}y_{e}.$$ As before, for any $\lambda\in\widetilde{L}_t$ such that $E(H^\lambda_t)\ni e$, the number of caterpillars this tuple of leaves contributes, $|T^\lambda_{e}(B_{t+1})|$, is roughly uniform (up to a polylogarithmic factor). Thus, we have $$|\{\lambda\in\widetilde{L}_t\mid {e}\in E(H^\lambda_t)\}||T^\lambda_{e}(B_{t+1})|y_{K_{t+1}}\approx\sum_{\lambda\in\widetilde{L}_t:E(H^\lambda_t)\ni e}\;\sum_{K\in T^\lambda_{e}(B_{t+1})}y_K\leq d_0^{t_0}d_1^{t_1}y_e.$$
By Constraints~\eqref{eq:smes1} and~\eqref{LP:d-bound} and bucketing, we have $y_e=O(k_0d_0/|E_t|)$, so the above inequality implies
$$|\{\lambda\in\widetilde{L}_t\mid E(H^\lambda_t)\ni e\}||T^\lambda_{e}(B_{t+1})|y_{K_{t+1}}=O(k_0d_0^{t_0+1}d_1^{t_1}/|E_t|).$$
Combining this with~\eqref{eq:B_t+1-sum}, we have
\begin{align*}
\prob[e\in E(H^\lambda_t)] &= \frac{|\{\lambda\in\widetilde{L}_t\mid E(H^\lambda_t)\ni e\}|}{|\widetilde{L}_t|}\\
&=\widetilde{O}\left(\frac{k_0d_0^{t_0+1}d_1^{t_1}/(|E_t||T^\lambda_{e}(B_{t+1})|y_{K_{t+1}})}{k_0d_0^{t_0+1}d_1^{t_1}/(|T^\lambda(B_{t+1})|y_{K_{t+1}})}\right)
\\&=\widetilde{O}\left(\frac{|T^\lambda(B_{t+1})|/|T^\lambda_e(B_{t+1})|}{|E_t|}\right),
\end{align*}
which by~\eqref{eq:E-cardinality} is what we needed to prove.
\end{proof}

\subsection{LP rounding: approximation guarantee}

The analysis of the approximation guarantee is similar to the combinatorial analysis of \DkS in~\cite{BCCFV10}, adapted to \SmES and slightly simplified by the regularity of degrees and LP values which we get from bucketing. %

Let us extend the previous notation by letting $S^\lambda_t$ be the set of vertices in $S_t$ which serve as rightmost endpoints of caterpillars in $T^\lambda(B_t)$. We will show that a certain invariant on %
the cardinalities and total LP values of the sets
$S^\lambda_t$ is maintained at every step $t$, as long as the required conditions for faithful rounding (i.e., the conditions for Step~\ref{step:large-degree}  or for Step~\ref{step:default} of the algorithm) do not hold.
The proof works by showing that if the invariant holds after the last iteration, then we have arrived at a contradiction. We will consider separately backbone steps and hair steps. Let us begin with backbone steps.

\begin{lemma}\label{lem:backbone} Suppose at iteration $t$ of Algorithm {\bf{Faithful-S$m$ES}} the vertices in $S_t$ are on side $b$ (where $b\in\{0,1\})$. Let $\lambda\in\widetilde{L}_t$. Then if for some $\beta\leq 1-\alpha$ the following two conditions hold
\begin{enumerate}
\item $|S^\lambda_t|y_{K_t}/y_{\lambda}=\widetilde{\Omega}(\frac{d_{1-b}}{f}\cdot f^{\beta/\alpha})$ (where $K_t$ is any caterpillar in $B_t$), and
\item $|S^\lambda_t|=\widetilde{O}(n^\beta)$,
\end{enumerate}
then either the conditions for Step~\ref{step:large-degree} hold, or the conditions for Step~\ref{step:default} hold, or the following conditions hold:
\begin{enumerate}
\item $|S^\lambda_{t+1}|y_{K_{t+1}}/y_{\lambda}\geq2\cdot\frac{d_{b}}{f}\cdot f^{(\beta+\alpha)/\alpha}$ (where $K_{t+1}$ is any caterpillar in $B_{t+1}$), and
\item $|S^\lambda_{t+1}|=\widetilde{O}(n^{\beta+\alpha})$.
\end{enumerate}
\end{lemma}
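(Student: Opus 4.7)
The plan splits into verifying the two concluding inequalities $1'$ and $2'$ under the assumption that neither Step~\ref{step:large-degree} nor Step~\ref{step:default} of Algorithm~\textbf{Faithful-S$m$ES} triggers at iteration $t$ -- which is precisely what the tripartite ``either/or'' conclusion of the lemma permits. The bound $2'$ is by far the easier of the two; the real work goes into $1'$.

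For $2'$: since Step~\ref{step:large-degree} did not trigger, the maximum degree in $G_t$ is at most $\tilde O(D)$, which equals $\tilde O(n^\alpha)$ by Proposition~\ref{prop:final-param-ineq}. Now $S^\lambda_{t+1} = W(H^\lambda_t)$ is a subset of $W_t$, and every vertex there is a neighbor in $G_t$ of some vertex in $S^\lambda_t$. The inductive hypothesis then gives
\[
|S^\lambda_{t+1}| \;\le\; (\text{max deg in } G_t) \cdot |S^\lambda_t| \;\le\; \tilde O(n^{\alpha+\beta}),
\]
which is exactly $2'$.

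For $1'$, the engine is Constraint~\eqref{LP:d-bound} coupled with the $\tilde\Omega(d_b)$-weight preservation that defines $\widetilde L_t$. Summing~\eqref{LP:d-bound} over $K\in T^\lambda(B_t)$ and accounting for the polylogarithmic losses during the various bucketing stages gives
\[
\sum_{K'\in T^\lambda(B_{t+1})} y_{K'} \;\ge\; \tilde\Omega(d_b)\,\sum_{K\in T^\lambda(B_t)} y_{K},
\]
and since all caterpillars in a surviving bucket share the same LP value up to constants, this becomes $|T^\lambda(B_{t+1})|\,y_{K_{t+1}} \ge \tilde\Omega(d_b)\cdot |T^\lambda(B_t)|\,y_{K_t}$. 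By Lemma~\ref{lem:uniform-cardinalities} the quantities $|T^\lambda_u(B_t)|$ and $|T^\lambda_w(B_{t+1})|$ are polylogarithmically uniform over the rightmost vertex, so I may write $|S^\lambda_t|=|T^\lambda(B_t)|/|T^\lambda_u(B_t)|$ and $|S^\lambda_{t+1}|=|T^\lambda(B_{t+1})|/|T^\lambda_w(B_{t+1})|$. Combining these with the inductive hypothesis on $|S^\lambda_t|y_{K_t}/y_\lambda$ yields
\[
\frac{|S^\lambda_{t+1}|\,y_{K_{t+1}}}{y_\lambda}
\;\ge\; \tilde\Omega\!\left(\frac{d_b\,d_{1-b}}{f}\, f^{\beta/\alpha}\right) \cdot \frac{|T^\lambda_u(B_t)|}{|T^\lambda_w(B_{t+1})|}.
\]
Using the uniformity of $H^\lambda_t$ (each edge participates in $|T^\lambda_e(B_{t+1})|$ caterpillars), the final ratio is $\tilde\Theta(\bar d_U/(d_b\,\bar d_W))$, where $\bar d_U,\bar d_W$ denote the average degrees on the two sides of $H^\lambda_t$. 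Thus $1'$ reduces to showing $d_{1-b}\,\bar d_U/\bar d_W \ge \tilde\Omega(f)$.

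I would then close the argument by contrapositive: if this inequality fails, I translate it via $\bar d_U k'_b = \bar d_W k'_{1-b} = m'$ into a condition on $(m',k'_0,k'_1)$ and show that, together with the inductive bound on $|S^\lambda_t|$ (which controls $k'_b$), it forces the conditions of Lemma~\ref{lem:simplified-goal} to hold -- so Step~\ref{step:default} would have triggered, a contradiction. The main obstacle is precisely this last translation: one must thread the bookkeeping between the LP weights $y_{K_t}/y_\lambda$, the cardinality ratios in $H^\lambda_t$, and the parameters $\alpha, f, d_b, d_{1-b}$, so that the factor $f^{(\beta+\alpha)/\alpha} = f^{\beta/\alpha}\cdot f$ emerges with a clean constant of $2$. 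This is where the defining relation~\eqref{eq:f-property}, $f = (k_1 f/d_0)^\alpha$, and the parameter choice from Proposition~\ref{prop:final-param-ineq} must be invoked to cancel the factors $d_b/d_{1-b}$ against the appropriate powers of $f$, in direct analogy with the \DkS analysis of~\cite{BCCFV10} but now on one level of a non-conditioned LP.
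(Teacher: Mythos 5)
Your bound on $|S^\lambda_{t+1}|$ (condition $2'$) is correct and matches the paper exactly: since Step~\ref{step:large-degree} does not trigger, degrees in $G_t$ (and hence in $H^\lambda_t$) are at most $\tilde{O}(D) = \tilde{O}(n^\alpha)$ by Proposition~\ref{prop:final-param-ineq}, and $|S^\lambda_{t+1}| \le |S^\lambda_t|\cdot D = \tilde{O}(n^{\beta+\alpha})$.

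For condition $1'$ your overall strategy (contrapositive plus Lemma~\ref{lem:simplified-goal}) is the same as the paper's, but there is a computational error at the key step, and the final reduction is left incomplete. The paper never introduces the ratio $|T^\lambda_u(B_t)|/|T^\lambda_w(B_{t+1})|$: instead it works with the total \emph{relative LP weight of edges}, $\sum_{(u,w)\in E(H^\lambda_t)}y_{K_u\cup\{(u,w)\}}/y_\lambda = \tilde\Omega(\tfrac{d_0 d_1}{f}f^{\beta/\alpha})$, which by uniformity of $y_{K_{t+1}}$ over the bucket converts directly to a lower bound on $|E(H^\lambda_t)| y_{K_{t+1}}/y_\lambda$. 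Dividing that by the failing assumption $|S^\lambda_{t+1}| y_{K_{t+1}}/y_\lambda < 2\tfrac{d_b}{f}f^{(\beta+\alpha)/\alpha}$ immediately gives $\bar d_W = |E(H^\lambda_t)|/|W(H^\lambda_t)| \ge \tilde\Omega(d_{1-b}/f)$. This is much cleaner than tracking caterpillar-count ratios, and it avoids the pitfall you ran into: your claim that $|T^\lambda_u(B_t)|/|T^\lambda_w(B_{t+1})| = \tilde\Theta\!\big(\bar d_U/(d_b \bar d_W)\big)$ is off by a factor of $y_{K_{t+1}}/y_{K_t}$. Using $|T^\lambda(B_{t+1})| = \tilde\Theta(d_b y_{K_t}/y_{K_{t+1}})|T^\lambda(B_t)|$ together with $|T^\lambda(B_t)| \approx |U(H^\lambda_t)||T^\lambda_u(B_t)|$ and $|T^\lambda(B_{t+1})| \approx |W(H^\lambda_t)||T^\lambda_w(B_{t+1})|$, the correct identity is
\[
\frac{|T^\lambda_u(B_t)|}{|T^\lambda_w(B_{t+1})|} = \tilde\Theta\!\left(\frac{\bar d_U}{d_b\,\bar d_W}\cdot\frac{y_{K_{t+1}}}{y_{K_t}}\right) = \tilde\Theta\!\left(\frac{1}{\bar d_W}\right),
\]
where the last simplification uses $\bar d_U = \tilde\Theta(d_b\, y_{K_t}/y_{K_{t+1}})$, or equivalently $|T^\lambda_u(B_t)| = \tilde\Theta(|T^\lambda_e(B_{t+1})|)$. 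Your version implicitly assumes $y_{K_t}\approx y_{K_{t+1}}$, which need not hold. With the corrected ratio, your chain of inequalities yields $|S^\lambda_{t+1}|y_{K_{t+1}}/y_\lambda \geq \tilde\Omega(\tfrac{d_b d_{1-b}}{f\bar d_W}f^{\beta/\alpha})$, so the condition for $1'$ becomes simply $\bar d_W \le \tilde{O}(d_{1-b}/f)$ — not the $d_{1-b}\bar d_U/\bar d_W \ge \tilde\Omega(f)$ you state. Finally, you leave the ``bookkeeping'' to trigger Step~\ref{step:default} as a sketch; the paper closes it in one line via Corollary~\ref{cor:simplified-goal2}, using $\bar d_U \ge \tilde\Omega(d_b)\ge d_b/f$, the contrapositive bound $\bar d_W\ge\tilde\Omega(d_{1-b}/f)$, and the cardinality bound $|U(H^\lambda_t)|\le|S^\lambda_t|\le\tilde{O}(n^\beta)\le\tilde{O}(n^{1-\alpha})\le\tilde{O}(k_b f)$ coming from assumption~$2$ and Proposition~\ref{prop:final-param-ineq}; no invocation of~\eqref{eq:f-property} is actually needed at this point, since the $f^{\beta/\alpha}$ terms already cancel in the division.
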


We note that the last iteration (iteration $s$) of the algorithm is always a backbone step (by our construction). We will show, in the end, that at the last step we have $\beta=1-\alpha$, and thus (assuming we do not get a faithful factor $\widetilde{O}(f)$ rounding at any point), the above lemma guarantees that $|S^\lambda_{s+1}|y_{K_{s+1}}/y_{\lambda}\geq2d_{b}\cdot f^{(1-\alpha)/\alpha}=2d_{b}(k_{1-b}f/d_b)^{1-\alpha}=2d_{b}(k_{1-b}f/d_b)/f=2k_{1-b}$. Taking $K_{s+1}$ to be the caterpillar in $B^{\lambda}_{t+1}$ with the smallest $y_{K_{s+1}}$ value, we get that $$\sum_{u\in S^{\lambda}_{t+1}}y_{\lambda\cup\{u\}}/y_\lambda\geq|S^\lambda_{s+1}|y_{K_{s+1}}/y_{\lambda}\geq2k_{1-b},$$ contradicting Constraint~\eqref{eq:smes1}. Let us now prove the above lemma.

\begin{proof}[Proof of Lemma~\ref{lem:backbone}]
By our construction, we know that for a $\widetilde{\Omega}(1)$-fraction of vertices $u\in S^{\lambda}_t$, there is some caterpillar $K_u\in T^\lambda_u(B_t)$ such that $$\sum_{w\in\Gamma(u):K_u\cup\{(u,w)\}\in B_{t+1}}y_{K_u\cup\{(u,w)\}}=\widetilde{\Omega}(d_by_{K_u}).$$
In particular, this %
 gives the following bound on the total ``relative LP value'' of edges:
\begin{equation}\label{eq:backbone-proof}
\sum_{(u,w)\in E(H^\lambda_t)}y_{K_u\cup\{(u,w)\}}/y_{\lambda} = \widetilde{\Omega}({\textstyle\frac{d_0d_1}{f}}\cdot f^{\beta/\alpha}).
\end{equation}
By Lemma~\ref{lem:LP-greedy} and Proposition~\ref{prop:final-param-ineq}, we may assume that all the degrees in $G_t=(S_t,W_t,E_t)$, and thus also in $H^{\lambda}_t$, are at most $D=\widetilde{O}(nd_0/(k_1f^2))=\widetilde{O}(n^{\alpha})$, and therefore $|S_{t+1}|\leq |S_t|\cdot D=\widetilde{O}(Cn^{\beta+\alpha})$. Thus, in this case, we only need to prove that (there exists a faithful factor $f$ rounding or) $|S^\lambda_{t+1}|y_{K_{t+1}}/y_{\lambda}\geq2\cdot\frac{d_{b}}{f}\cdot f^{(\beta+\alpha)/\alpha}$. Suppose the latter condition is not satisfied. Then by~\eqref{eq:backbone-proof}, the average degree in $H^\lambda_t$ of vertices in $W(H^\lambda_t)$ is at least
$$\widetilde{\Omega}({\textstyle\frac{d_0d_1}{f}}\cdot f^{\beta/\alpha}/({\textstyle\frac{d_{b}}{f}}\cdot f^{(\beta+\alpha)/\alpha}))=\widetilde{\Omega}(d_{1-b}/f).$$ Since, as is easy to see, the average degree of vertices in $U(H^\lambda_t)$ is $\widetilde{\Omega}(d_b)\geq\widetilde{\Omega}(d_b/f)$, by Corollary~\ref{cor:simplified-goal2} it suffices to show that $|U(H^\lambda_t)|=\widetilde{O}(k_bf)$ (and then by the corollary, there is a faithful factor $f$ rounding). However, this follows directly:
\begin{align*}
|U(H^\lambda_t)|\leq|S^\lambda_t|\leq n^{\beta}
\leq \mathrm{polylog(n)}\cdot n^{\beta}
&\leq \mathrm{polylog(n)}\cdot n^{1-\alpha}\\
&\leq \mathrm{polylog(n)}\cdot k_bf^2/d_{1-b}&\text{by Proposition~\eqref{prop:final-param-ineq}}\\
&\leq \mathrm{polylog(n)}\cdot k_b f.
\end{align*}
\end{proof}

We also have the following lemma for hair steps:

\begin{lemma}\label{lem:hair} Suppose at iteration $t$ of Algorithm {\bf{Faithful-S$m$ES}} the vertices in $S_t$ are on side $b$ (where $b\in\{0,1\})$. Let $\lambda\in\widetilde{L}_t$. Then if for some $\beta\geq 1-\alpha$ the following two conditions hold
\begin{enumerate}
\item $|S^\lambda_t|y_{K_t}/y_{\lambda}=\widetilde{\Omega}(\frac{d_{1-b}}{f}\cdot f^{\beta/\alpha})$ (where $K_t$ is any caterpillar in $B_t$), and
\item $|S^\lambda_t|=\widetilde{O}(n^\beta)$,
\end{enumerate}
then either the conditions for Step~\ref{step:large-degree} hold, or the conditions for Step~\ref{step:default} hold, or the following conditions also hold for all leaf tuples $\lambda'\in L_{t+1}$ which extend $\lambda$:
\begin{enumerate}
\item $|S^{\lambda'}_{t+1}|y_{K_{t+1}}/y_{\lambda'}=\widetilde{\Omega}(\frac{d_{1-b}}{f}\cdot f^{(\beta-(1-\alpha))/\alpha})$ (where $K_{t+1}$ is any caterpillar in $B_{t+1}$), and \label{cond:hair-proof-LP-deg}
\item $|S^\lambda_{t+1}|\leq\frac12n^{\beta-(1-\alpha)}$.
\end{enumerate}
\end{lemma}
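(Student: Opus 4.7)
The proof structure will parallel Lemma~\ref{lem:backbone}, but with the crucial twist that in a hair step the rightmost backbone vertex is preserved while the leaf tuple grows. Concretely, for any $\lambda\in\widetilde L_t$ and newly added leaf $w\in W(H^\lambda_t)$, the extended tuple $\lambda'=\lambda\cup\{w\}\in L_{t+1}$ satisfies $S^{\lambda'}_{t+1}=\Gamma_{H^\lambda_t}(w)$. By the bucketing that produced $\widetilde B'''_t$ combined with Lemma~\ref{lem:uniform-cardinalities}, this common degree is uniform across the choice of $w$ up to polylogarithmic factors, so any bound I establish for a fixed $\lambda'$ transfers to every $\lambda'\in L_{t+1}$ extending $\lambda$, and it suffices to analyze a single representative.

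For condition~2, I first assume that Step~\ref{step:large-degree} does not trigger, so by Proposition~\ref{prop:final-param-ineq} every degree in $G_t$ is at most $D=\tO(n^\alpha)$. Since each $u\in U(H^\lambda_t)$ has degree $\widetilde\Theta(d_b)$ in $H^\lambda_t$ (built into the definition of $\widetilde L_t$), the common $W$-side degree equals $d_w=|E(H^\lambda_t)|/|W(H^\lambda_t)|=\widetilde\Theta(d_b|U(H^\lambda_t)|/|W(H^\lambda_t)|)=|S^{\lambda'}_{t+1}|$. Suppose $d_w>\tfrac12 n^{\beta-(1-\alpha)}$. I will then verify that the parameters $(k'_0,k'_1,m')$ produced by the algorithm on $H^\lambda_t$ satisfy the hypotheses of Corollary~\ref{cor:simplified-goal2} and so trigger Step~\ref{step:default}: the $U$-side average degree is $\widetilde\Theta(d_b)\geq d_b/f$, the $W$-side average degree $d_w$ exceeds $d_{1-b}/f$ after substituting~\eqref{eq:f-property} and Proposition~\ref{prop:final-param-ineq}, and the cardinality bound $|U(H^\lambda_t)|\leq|S^\lambda_t|\leq n^\beta\leq\tO(k_b f)$ follows again from Proposition~\ref{prop:final-param-ineq} together with $k_0\geq k_1$.

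For condition~1, I will track $|S^{\lambda'}_{t+1}|y_{K_{t+1}}/y_{\lambda'}$ versus $|S^\lambda_t|y_{K_t}/y_\lambda$ through three ratios: the cardinality drop $|S^{\lambda'}_{t+1}|/|S^\lambda_t|$, the caterpillar-value drop $y_{K_{t+1}}/y_{K_t}$, and the leaf-tuple amplification $y_\lambda/y_{\lambda'}$. The critical amplification comes from the last factor: applying Constraint~\eqref{eq:smes1} with $T=\lambda$ and restricting to $w$'s in the bucket (on which $y_{\lambda\cup\{w\}}=\widetilde\Theta(y_{\lambda'})$) yields $y_\lambda/y_{\lambda'}\geq\tOmega(|W(H^\lambda_t)|/k_{1-b})$. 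The remaining two factors are handled by the LP-weight identity $\sum_{K_{t+1}\in B_{t+1}}y_{K_{t+1}}=\widetilde\Theta(d_b\sum_{K_t\in B_t}y_{K_t})$ combined with the cardinality bookkeeping in Lemma~\ref{lem:uniform-cardinalities}. Translating via~\eqref{eq:f-property} and Proposition~\ref{prop:final-param-ineq} between $d_0,d_1,k_0,k_1$ and $f,n,\alpha$, the total loss collapses to $\tO(f^{(1-\alpha)/\alpha})$, matching exactly the $f^{-(1-\alpha)/\alpha}$ drop in the invariant's right-hand side.

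The main obstacle will be the parameter accounting in condition~1: the quantities $d_0,d_1,k_0,k_1,f,n$ and the exponent $\beta$ interact through~\eqref{eq:f-property} and Proposition~\ref{prop:final-param-ineq}, and obtaining exactly $f^{(1-\alpha)/\alpha}$ requires carefully balancing the polylog losses from bucketing against the LP-weight preservation. A second subtlety is the regime where $|W(H^\lambda_t)|$ is very small, so the $y_\lambda/y_{\lambda'}$ amplification is too weak to absorb the drops in the numerator. In that regime I will argue that the $W$-side density $d_w$ is forced to be correspondingly large, which is precisely the situation that triggers Step~\ref{step:default} via Corollary~\ref{cor:simplified-goal2}; the two conditions of the lemma and the Step~\ref{step:default} outcome therefore cover complementary cases.
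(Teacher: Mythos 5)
Your overall structure is on the right track: establish condition~1 via the LP-degree accounting, and show that if condition~2 fails then one of the algorithm's steps gives a faithful rounding. However, the argument for condition~2 contains a genuine gap that makes it fail exactly in the regime the hair lemma is designed to handle.

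You invoke Corollary~\ref{cor:simplified-goal2}, whose third condition requires $|U(H^\lambda_t)|\leq \tO(k_b f)$, and you claim this follows from $|U(H^\lambda_t)|\leq n^\beta$ together with Proposition~\ref{prop:final-param-ineq}. That chain works in Lemma~\ref{lem:backbone} because there the hypothesis is $\beta\leq 1-\alpha$, so $n^\beta\leq n^{1-\alpha}\approx k_bf^2/d_{1-b}\leq k_bf$. But the hair lemma assumes $\beta\geq 1-\alpha$, so $n^\beta$ can be \emph{much larger} than $n^{1-\alpha}$, and the bound $|U(H^\lambda_t)|\leq k_bf$ need not hold at all. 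The paper avoids this by using Corollary~\ref{cor:simplified-goal1} instead, whose third condition is the ratio bound $d'_{1-b}/|U(H^\lambda_t)|\geq d_0/(k_1f^2)$: a large $U$-side cardinality is acceptable there as long as the $W$-side degree $d'_{1-b}$ is proportionally large, and the assumed failure of condition~2 (i.e., $d'_{1-b}=\Omega(n^{\beta-(1-\alpha)})$) supplies exactly the compensating factor needed to cancel $n^\beta$ in the ratio. Your ``second subtlety'' remark, falling back on Corollary~\ref{cor:simplified-goal2} again, does not repair this: the problematic quantity is the $U$-side cardinality $|U(H^\lambda_t)|$, which is inherited from $|S^\lambda_t|\leq n^\beta$ and is insensitive to how large $d_w$ becomes.

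A secondary issue: you derive $d_w>d_{1-b}/f$ from $d_w>\tfrac12 n^{\beta-(1-\alpha)}$ ``after substituting~\eqref{eq:f-property} and Proposition~\ref{prop:final-param-ineq}.'' When $\beta$ is close to $1-\alpha$ this lower bound degenerates to $d_w>\tfrac12$, which gives nothing. The paper instead derives $d_w\geq\tilde\Omega(d_{1-b}/f)$ from the $W$-side LP-degree bound (your condition~1) combined with bucketing uniformity, which is why it establishes condition~1 \emph{before} handling condition~2. Reordering your proof so that condition~1 is proved unconditionally first (as you sketch in your third paragraph) and then used as an ingredient in the case analysis for condition~2, and replacing Corollary~\ref{cor:simplified-goal2} by Corollary~\ref{cor:simplified-goal1}, would close the gap.
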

\begin{proof}
As in the proof of Lemma~\ref{lem:backbone}, the bound on the total ``relative LP value'' of edges given by~\eqref{eq:backbone-proof} holds. By Constraint~\eqref{eq:smes1}, we also have $\sum_{w\in W(H^\lambda_t)}y_{\lambda\cup\{w\}}/y_{\lambda} \leq k_{1-b}$. Together, these inequalities give us the following lower bound on the average ``LP-degree'' of vertices $w\in W(H_t)$ (which by our construction is uniform up to a constant factor):
\begin{align}
\sum_{u:(u,w)\in E(H^\lambda_t)}y_{K_u\cup(u,w)}/y_{\lambda\cup\{w\}} &\geq {\textstyle\frac1{\mathrm{polylog(n)}}}\cdot{\textstyle\frac{d_0d_1}{fk_{1-b}}}\cdot f^{\beta/\alpha}\nonumber\\
&={\textstyle\frac1{\mathrm{polylog(n)}}}\cdot {\textstyle\frac{d_{1-b}}{f}}\cdot f^{(\beta-(1-\alpha))/\alpha}\cdot\left({\textstyle\frac{d_b}{k_{1-b}f}}\cdot f^{1/\alpha}\right)\label{eq:hair-proof}\\
&={\textstyle\frac1{\mathrm{polylog(n)}}}\cdot {\textstyle\frac{d_{1-b}}{f}}\cdot f^{(\beta-(1-\alpha))/\alpha}.\nonumber
\end{align}
Thus, condition~\ref{cond:hair-proof-LP-deg} always holds. Thus, we need to show that, if there is no faithful factor $f$ rounding, the average degree (in fact maximum degree, but these are only a constant factor apart) in $H^{\lambda}_t$ of vertices $w\in W(H^\lambda_t)$ is at most $\frac12n^{\beta-(1-\alpha)}$. Suppose not. That is, suppose the average degree of such vertices is $\Omega(n^{\beta-(1-\alpha)})$, and let us show that we can get a faithful rounding of factor $f$.

Note that by~\eqref{eq:hair-proof}, the average degree is at least $\widetilde{\Omega}(\frac{d_{1-b}}{f}\cdot f^{(\beta-(1-\alpha))/\alpha})\geq\widetilde{\Omega}(\frac{d_{1-b}}{f})$. Since, as before, it is easy to see that the average degree of vertices $u\in U(H^\lambda_t)$ is at least $\widetilde{\Omega}(d_b)\geq \widetilde{\Omega}(d_b/f)$, by Corollary~\ref{cor:simplified-goal1} we only need to show that $d'_{1-b}/|U(H^\lambda_t)| \geq d_0/(k_1f^2)$, where $d'_{1-b}$ denotes the average degree of vertices $w\in W(H^\lambda_t)$. Indeed, by our assumption about this average degree, we have
\begin{align*}
\frac{d'_{1-b}}{|U(H^\lambda_t)|}\geq\Omega\left(\frac{n^{\beta-(1-\alpha)}}{|U(H^\lambda_t)|}\right)\geq\Omega\left(\frac{n^{\beta-(1-\alpha)}}{|S^\lambda_t|}\right)=\widetilde{\Omega}\left(n^{-(1-\alpha)}\right)\geq\frac{d_0}{k_1f^2},
\end{align*}
where the last inequality follows from Proposition~\eqref{prop:final-param-ineq}.
\end{proof}

Combining these two lemmas, the main theorem easily follows:

\begin{theorem} (Assuming $f\leq d_0$,) Algorithm {\bf{Faithful-S$m$ES}} gives either a factor $f$ faithful rounding for \SmES, or it gives a skewed proportional rounding satisfying the conditions of Lemma~\ref{lem:simplified-goal}.
\end{theorem}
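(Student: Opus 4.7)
The plan is to prove the theorem by induction on the iteration counter $t$, maintaining a two-part invariant on the subgraphs produced by the bucketing procedure, and to show that whenever the invariant fails to propagate cleanly, one of Step~\ref{step:large-degree} (giving a faithful rounding outright) or Step~\ref{step:default} (giving a skewed proportional rounding satisfying Lemma~\ref{lem:simplified-goal}) must have already triggered. Concretely, for each $t\in\{1,\ldots,s\}$, let $b_t\in\{0,1\}$ be the side on which $S_t$ lies, and define a target exponent $\beta_t$ recursively by $\beta_1=0$, $\beta_{t+1}=\beta_t+\alpha$ if step $t$ is a backbone step, and $\beta_{t+1}=\beta_t-(1-\alpha)$ if it is a hair step. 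The invariant $I_t(\beta_t)$ asserts that for every surviving leaf tuple $\lambda\in\widetilde{L}_t$ (and every caterpillar $K_t\in B_t$, whose LP value is uniform by bucketing) we have both $|S^\lambda_t|=\widetilde{O}(n^{\beta_t})$ and $|S^\lambda_t|y_{K_t}/y_\lambda=\widetilde{\Omega}\bigl(\tfrac{d_{1-b_t}}{f}\cdot f^{\beta_t/\alpha}\bigr)$.

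The base case $I_1(0)$ holds by the initialization, since $|S_1|=\widetilde{\Theta}(k_0)$ with $\lambda=\emptyset$ and the LP values in $B_1$ are uniform by bucketing; a direct computation using Constraint~\eqref{eq:smes1}, the assumption $f\leq d_0\leq d_1\log n$, and the relation $k_0d_0=\widetilde\Theta(k_1d_1)$ yields the required lower bound (with the $n^0=1$ bound for the cardinality being trivial). The inductive step is exactly the content of Lemma~\ref{lem:backbone} for backbone iterations and Lemma~\ref{lem:hair} for hair iterations: each of these lemmas shows that starting from $I_t(\beta_t)$, either Step~\ref{step:large-degree} fires (the large-degree case, handled by Appendix~\ref{sec:max-deg-bound}), or Step~\ref{step:default} fires (the parameters $k'_0,k'_1,m'$ satisfy the conditions of Lemma~\ref{lem:simplified-goal} via Corollary~\ref{cor:simplified-goal1} or \ref{cor:simplified-goal2}), or else $I_{t+1}(\beta_{t+1})$ holds. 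In either of the first two cases the algorithm outputs the desired rounding and we are done.

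It therefore suffices to rule out the possibility that $I_t(\beta_t)$ propagates all the way through step $s$ without the algorithm producing an output. For this I would do the combinatorial accounting dictated by Definition~\ref{def:rs-caterpillar}: the $(r,s)$-caterpillar issues a hair at step $t$ exactly when $((t-1)r/s,tr/s)$ contains an integer, producing exactly $r-1$ hair steps among $t=1,\ldots,s-1$ (the integers $1,\ldots,r-1$), with step $s$ itself being a backbone step (since the integer $r$ lies on the boundary of the last interval, not inside). Hence the accumulated change over the first $s-1$ iterations is $(s-r)\alpha-(r-1)(1-\alpha)=1-\alpha$, so that $\beta_s=1-\alpha$ and step $s$ is a backbone step. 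Applying Lemma~\ref{lem:backbone} one more time, if neither Step~\ref{step:large-degree} nor Step~\ref{step:default} has fired, then $|S^\lambda_{s+1}|y_{K_{s+1}}/y_\lambda\geq 2\tfrac{d_{b_s}}{f}f^{1/\alpha}$. Using the defining identity $f=(k_{1-b_s}f/d_{b_s})^{\alpha}$ from Section~\ref{sec:params} to simplify $f^{(1-\alpha)/\alpha}=k_{1-b_s}/d_{b_s}$, this lower bound becomes $\geq 2k_{1-b_s}$. Selecting the caterpillar $K_{s+1}\in T^\lambda(B_{s+1})$ of minimal LP value and invoking the bucketing-uniformity of $y_{\lambda\cup\{u\}}$ for $u\in S^\lambda_{s+1}$, the left-hand side is $\widetilde{\Theta}\bigl(\sum_{u\in S^\lambda_{s+1}}y_{\lambda\cup\{u\}}/y_\lambda\bigr)$, which is bounded above by $k_{1-b_s}$ by Constraint~\eqref{eq:smes1} applied conditionally on $\lambda$. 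This contradicts the lower bound, so one of the two ``good'' cases must have occurred at or before iteration $s$.

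I expect the main obstacle to be the bookkeeping across the hair/backbone dichotomy, in particular ensuring that when a hair step extends $\lambda$ to $\lambda'=\lambda\cup\{w\}$, the invariant stated for the new tuple $\lambda'$ is exactly the one produced by Lemma~\ref{lem:hair} for every such extension (this is essentially a uniformity statement that follows from Lemma~\ref{lem:uniform-cardinalities}, but it has to be invoked at the right granularity). A secondary, more mechanical, difficulty is the arithmetic in the final contradiction: one must check that the identity $f=(k_{1-b}f/d_b)^\alpha$ coming from~\eqref{eq:f-property} applied with the correct side $b=b_s$ is consistent with Proposition~\ref{prop:final-param-ineq} and with the convention $k_0\geq k_1$, so that the constant $2$ in Lemma~\ref{lem:backbone}'s conclusion genuinely beats the $1$ in Constraint~\eqref{eq:smes1} after the polylogarithmic slack is absorbed into the $\widetilde\Theta(\cdot)$.
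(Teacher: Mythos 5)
Your approach matches the paper's almost exactly: you set up the same two-part invariant on $|S^\lambda_t|$ and $|S^\lambda_t|y_{K_t}/y_\lambda$, carry it forward via Lemma~\ref{lem:backbone} and Lemma~\ref{lem:hair}, track the exponent $\beta_t$ (which agrees with the paper's $\{(t-1)\alpha\}$ for $t\geq 2$), and derive the same contradiction with Constraint~\eqref{eq:smes1} at $t=s+1$. Your caterpillar bookkeeping ($r-1$ hair steps among $t=1,\ldots,s-1$, step $s$ a backbone step, accumulated shift $1-\alpha$) is also correct.

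The one real slip is the base case. You assert $I_1(0)$ holds, i.e.\ $|S^\lambda_1|=\widetilde{O}(n^0)=\widetilde{O}(1)$, and call this ``trivial,'' while in the same sentence noting $|S_1|=\widetilde\Theta(k_0)$; with $\lambda=\emptyset$ we have $S^\emptyset_1=S_1$, so this bound is simply false whenever $k_0$ is not polylogarithmic. The hypothesis of Lemma~\ref{lem:backbone} as written does require $|S^\lambda_t|=\widetilde{O}(n^\beta)$, so the lemma cannot be invoked at $t=1$ with $\beta_1=0$. The paper avoids this by starting the inductive claim at $t=2$: $I_2(\alpha)$ is established directly, since $|S^\lambda_2|$ is the $G_1$-neighborhood of a single vertex $\lambda$, hence at most $D=\widetilde{O}(n^\alpha)$ by the maximum-degree bound (enforced via Step~\ref{step:large-degree} and Lemma~\ref{lem:LP-greedy}), and the bucketing preserves LP weight $\widetilde\Omega(d_0)=\widetilde\Omega(\frac{d_{1-b_2}}{f}f^{\alpha/\alpha})$. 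Alternatively, one can note that the only use of the cardinality hypothesis inside Lemma~\ref{lem:backbone}'s proof is to deduce $|U(H^\lambda_t)|\leq k_bf$, and at $t=1$ this holds directly from $|S_1|\leq\widetilde{O}(k_0)\leq k_0f$; but as stated your $I_1(0)$ is not a valid base case. Either fix makes your argument line up with the paper's.
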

\begin{proof}
Denote by $b_t$ the side which contains the vertices in $S_t$. The theorem follows from the following simple claim, which can be proved directly by induction using the above lemmas (we use the notation $\{x\}=x+1-\lceil x\rceil$):
\begin{claim} Suppose the conditions for Step~\ref{step:large-degree} never hold. Then for all $t=2,\ldots,s+1$, either the conditions for Step~\ref{step:default} hold at at least one of the iterations through step $t-1$, or we have
\begin{enumerate}
\item $|S^\lambda_t|y_{K_t}/y_{\lambda}=\widetilde{\Omega}(\frac{d_{1-{b_t}}}{f}\cdot f^{\{(t-1)\alpha\}/\alpha})$ (for all $\lambda\in L_t$, and caterpillar $K_t\in T^\lambda(B_t)$), and
\item $|S^\lambda_t|=\widetilde{O}(n^{\{(t-1)\alpha\}})$.
\end{enumerate}
Moreover, if $\{(t-1)\alpha\}\geq\alpha$ (i.e.\ step $t-1$ was a backbone step), then $$\sum_{u\in S^\lambda_t}y_{\lambda\cup\{u\}}/y_\lambda\geq2\cdot \frac{d_{1-{b_t}}}{f}\cdot f^{\{(t-1)\alpha\}/\alpha}.$$
\end{claim}
As pointed out earlier, this last inequality yields a contradiction for $t=s+1$. Therefore one of the steps gives a factor-$f$ faithful rounding.
\end{proof}

\section{Discussion and Future Directions} \label{sec:discussion}

Some features of our techniques might be applicable to other problems.
Most obviously, this is perhaps the first time that LP hierarchies are applied
to ``local'' parts of an LP, rather than to the entire LP.
Can this approach be useful for other problems?
Currently, it is not clear to us how this approach fares against
one ``global'' application of an LP hierarchy to some basic relaxation:
 a global hierarchy could take advantage of non-locality in the constraints and solution, but on the other hand would not allow us to locally ``guess'' degrees (see e.g.~footnote~\ref{foot:guess}).

Persistent gaps in the approximability of other network design problems
naturally call for a judicious use of LP hierarchies in order
to obtain better approximation algorithms.
For example, the {\sc basic $k$-spanner} problem,
in which the goal is to construct a $k$-spanner with as few edges as possible,
is only known to admit approximation ratio $O(n^{\lceil 2/(k+1) \rceil})$~\cite{ADDJS93}, while the best hardness of approximation is $2^{(\log^{1-\eps} n) /k}$ for arbitrarily small constant $\eps > 0$~\cite{DKR12}.
An integrality gap that almost matches the upper bound (namely a gap of $n^{\Omega(1/k)}$) was recently shown by Dinitz and Krauthgamer~\cite{DK11a},
but stronger relaxations obtained via hierarchies
can possibly have smaller integrality gaps.  In particular, it is not at all clear what the best achievable approximation ratio is for the regime when $k$ is constant; perhaps hierarchies will finally allow upper bounds that beat~\cite{ADDJS93}.
Similarly, for directed $k$-spanner the known upper bound
is $\tO(\sqrt{n})$ \cite{BBMRY11},
and there is an $\tOmega(n^{1/3})$ integrality gap~\cite{DK11a},
but it only applies to a simple LP relaxation.
Yet other relevant problems are {\sc Directed Steiner Tree}
and {\sc Directed Steiner Forest}, see \cite{MKN09,BBMRY11} and references therein.
Perhaps hierarchies could help for any of these problems?

Finally, the connection we show between \LD2S and \SmES
suggests an intriguing possibility for conditional lower bounds.
The current hardness for \LD2S is only $\Omega(\log n)$,
while \SmES is basically as hard as \DkS, which is commonly thought to be
difficult to approximate well (say within a polylogarithmic factor, although current hardness results rely on various complexity assumptions
and give only a relatively small constant \cite{F02,Khot06}).
A reduction in the other direction, i.e.~from \SmES to \LD2S,
could give partial evidence that \LD2S cannot be approximated well,
and could possibly even match the upper bound that we prove here.
The same arguments about a formal connection to \DkS
obviously apply also to other network design problems,
such as {\sc basic $k$-spanner}.

{\small
\bibliographystyle{alphainit}
\bibliography{spanner,robi,drafts}
}

\appendix

\section{Handling small degrees}\label{sec:small-degrees}
Just as in~\cite{BCCFV10}, we may assume that the subgraph degrees (in the optimum, or according to the LP) are greater than or equal to our desired approximation ratio $f$. The reason is that there is always a simple $d_0$-approximation (recall our convention that $d_0\leq d_1$). Thus if $d_0\leq f$, we are done.

\begin{lemma}\label{lem:small-deg-algo}
There exists a faithful rounding with factor $\tO(d_0)$.
\end{lemma}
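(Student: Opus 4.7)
The plan is to use a straightforward truncated independent edge-sampling procedure. Given an LP solution $(y_T)$ to $\textbf{Bipartite-SmES-LP}_q(G,k_0,k_1,d_0,d_1)$ (with the convention $y_\emptyset=1$), the algorithm first includes each edge $e\in E$ independently into a preliminary set $E'$ with probability $y_e$, and lets $V'$ be the set of endpoints of $E'$. It then outputs $(V^*,E^*)=(V',E')$ provided $|V'|\le C\log n\cdot d_0\cdot(k_0+k_1)$ for a sufficiently large constant $C$, and $(V^*,E^*)=(\emptyset,\emptyset)$ otherwise. I claim this gives a faithful rounding with factor $f=Cd_0\log n=\tO(d_0)$.

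Of the four conditions in Definition~\ref{def:faithful}, the first three follow essentially by construction. Condition~\ref{def:faithful_prob_edge} is immediate since $\Pr[e\in E^*]\le\Pr[e\in E']=y_e$, and Condition~\ref{def:absoluteVE-bounds} follows from the truncation step combined with $\sum_v y_v=k_0+k_1$ (obtained by applying Constraint~\eqref{eq:smes1} with $T=\emptyset$). For Condition~\ref{def:faithful_prob_vertex}, the union bound gives $\Pr[u\in V^*]\le\sum_{v\in\Gamma(u)}y_{\{u,v\}}$; combining the upper bound of Constraint~\eqref{LP:d-bound} (taken with $T=\{v\}$) with Constraint~\eqref{eq:smes3} (which identifies $y_{\{v,\{u,v\}\}}$ with $y_{\{u,v\}}$) yields $\sum_{v\in\Gamma(u)}y_{\{u,v\}}\le d_b y_u$ for $u\in V_b$. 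Since the convention $k_0\ge k_1$ together with the near-regularity identity $k_0d_0=\widetilde\Theta(k_1d_1)$ implies $d_b\le O(d_0\log n)$, this is at most $fy_u$.

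The main obstacle is Condition~\ref{def:expectedE-LB}, since a priori the truncation could wipe out most of the expected edges. I would first bound $\sum_e y_e$ from both sides using Constraint~\eqref{LP:d-bound}: applying its lower bound with $T=\{v\}$ and summing over $v$ yields $\sum_e y_e\ge\tOmega(k_0d_0)$, while the corresponding upper bound gives $\E[|E'|]=\sum_e y_e\le\tO(k_0d_0)$. Since $|V'|\le 2|E'|$, we have $\E[|V'|]\le\tO(k_0d_0)$, which is smaller than the truncation threshold $C\log n\cdot d_0(k_0+k_1)$ by a factor of $\Theta(C\log n)$. Because $|E'|$ is a sum of independent Bernoullis, a Chernoff bound shows that the truncation event $\{|V'|>\text{threshold}\}\subseteq\{|E'|>\text{threshold}/2\}$ has probability $1/\mathrm{poly}(n)$, and a tail-sum estimate ($\E[|E'|\cdot\mathbf{1}(|E'|\ge t)]$ decays super-polynomially for $t$ many times larger than the mean) shows that its contribution to $\E[|E'|\cdot\mathbf{1}(\text{truncate})]$ is only $1/\mathrm{poly}(n)$. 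In the degenerate regime where $\sum_e y_e$ is extremely small, the event is anyway suppressed because it would require $|E'|$ to exceed a large absolute constant. In every case one concludes $\E[|E^*|]\ge(1-o(1))\sum_e y_e=\tOmega(\sum_e y_e)$.

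I expect the only real subtlety to be the choice of the constant $C$ (and the precise way of combining Markov, Chernoff and a second-moment estimate on the truncation tail), which must be large enough to make the truncation loss negligible while keeping $f=Cd_0\log n=\tO(d_0)$; the remaining details are routine.
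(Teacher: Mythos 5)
Your scheme breaks on the high-degree side of the bipartition. The pivotal claim that ``the convention $k_0\ge k_1$ together with $k_0d_0=\widetilde\Theta(k_1d_1)$ implies $d_b\le O(d_0\log n)$'' has the inequality reversed: those hypotheses give $d_1=\widetilde\Theta(k_0 d_0/k_1)\ge\widetilde\Omega(d_0)$, a \emph{lower} bound on $d_1$, and $d_1$ can be far larger than $d_0$ (take $k_0=n$, $k_1=\sqrt{n}$, $d_0=1$, $d_1\approx\sqrt{n}$). Consequently, for $u\in V_1$ your union bound combined with Constraints~\eqref{LP:d-bound} and~\eqref{eq:smes3} yields only $\Pr[u\in V^*]\le d_1\,y_u$, not $\tO(d_0)\,y_u$, so Condition~\ref{def:faithful_prob_vertex} of Definition~\ref{def:faithful} fails on the $V_1$ side. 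The truncation step does not help here: it controls $|V'|$ globally but cannot reduce the per-vertex inclusion probability of an individual high-degree $V_1$ vertex. Nor is a factor-$\tO(d_1)$ rounding an acceptable substitute, since the lemma is invoked precisely in the regime where we compare the target approximation $f$ against $d_0$, the \emph{smaller} of the two degrees.

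The deeper point is that independent edge-sampling is symmetric in the two sides, while the statement is inherently asymmetric. The paper's proof is asymmetric by design: it first buckets edges so that the LP values $y_e$, $y_{u_0}$, $y_{u_1}$ are each uniform within the chosen bucket $B$, then samples a uniformly random subset $U_1'\subseteq U_1$ of exactly $k_1$ vertices (so each $u_1\in U_1$ is chosen with probability $k_1/|U_1|\approx y_{u_1}$, i.e.\ the $V_1$ side is faithful with factor~$1$), and only afterward subsamples edges incident to $U_1'$. The $\tO(d_0)$ inflation then arises solely on the $V_0$ side, where each vertex has at most $D_0\approx d_0\cdot\frac{k_0|B|}{|U_0| m}$ incident bucketed edges, and the union bound gives $\Pr[u_0\in V^*]\le D_0\cdot y_e\approx d_0\,y_{u_0}$. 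To rescue your route you would need some analogous mechanism that deflates the $V_1$-side inclusion probabilities by a $d_0/d_1$ factor without destroying the expected edge count; independent edge-sampling plus truncation does not provide one.
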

\begin{proof}
Consider the following algorithm. Divide all edges in the graph into buckets by their LP-values, and also by the LP values of their endpoint vertices (so that all three parameters are uniform up to a constant factor within each bucket). Then there is some bucket $B$ for which $\sum_{e\in B}y_e=\widetilde{\Omega}(m)$. For $b=0,1$, let $U_b$ be the set of vertices on side $b$ which have at least one edge in $B$ incident to them. Now pick a subset $U'_1\subseteq U_1$ of size $k_1$ uniformly at random, and let $B(U'_1)$ be the set of edges in $B$ incident to vertices in $U'_1$. Finally, choose $B'\subseteq B(U'_1)$ of size 
\begin{equation}\label{eq:small-degree-proof1}
|B(U'_1)|\cdot\frac{m}{|B|}\cdot\frac{|U_1|}{k_1}
\end{equation}
uniformly at random. We return the graph induced by the edges in $B'$.

Note that by uniformity of LP values in $B$, every edge $e\in B$ has LP weight $y_e=\tilde{\Theta}(m/|B|)$. Moreover, since by Constraint~\eqref{LP:d-bound}, vertices on side $b$ (for $b=0,1$) all contribute LP degree (that is, the LP weight of its incident edges divided by its LP value) $\tilde{\Theta}(d_b)$, then the LP weight of every vertex $u\in U_b$ is $\tilde{\Theta}(k_b/|U_b|)$. For brevity, let us omit polylogarithmic factors in the remaining discussion. Note that since for any vertex $u$ and edge $e$ incident to $u$ we have $y_e\leq y_u$, this implies
\begin{equation}\label{eq:small-degree-proof2}
\frac{m}{|B|}\leq\frac{k_1}{|U_1|},
\end{equation}
which also shows that the quantity in~\eqref{eq:small-degree-proof1} is at most $|B(U'_1)|$.

For vertices $u_1\in U_1$, the probability that $u_1\in U'_1$ is $k_1/|U_1|=y_{u_1}$. This, together with the fact that $|U'_1|=k_1$ gives faithfulness for vertices in $U_1$ even for approximation factor $1$. Moreover, for any edge $e\in B$, the probability that $e\in B(U'_1)$ is also $k_1/|U_1|$. Conditioned on $e\in B$, the probability that $e$ is retained in $B'$ is $\frac{m}{|B|}\cdot\frac{|U_1|}{k_1}$, and thus the probability (a priori) that we will have $e\in B'$ is $m/|B|=y_e$, which gives faithfulness for edges.

Next, we would like to show that at most $m$ edges are chosen, which will also give the required bound on the number of vertices in $U_0$ chosen, since $m=d_0k_0$. Note that since the LP degree of every vertex $u_1\in U_1$ is at most $d_1$, by the LP values of edges and vertices in $B$, this implies that the graph degree of every vertex in $U_b$ (for $b=0,1$) is at most 
\begin{equation}\label{eq:small-degree-proof3}
D_b= d_b\cdot\frac{k_b/|U_b|}{m/|B|}.
\end{equation}
 Thus, the total number of edges in $B(U'_1)$ is always at most $|U'_1|\cdot d_1k_1|B|/(|U_1|m)$. By~\eqref{eq:small-degree-proof1}, the number of edges in $B'$ is then always at most $|U'_1|\cdot d_1=k_1d_1=m$.

It remains to analyze the probability that an individual vertex in $U_0$ is chosen. Since every vertex $u_0\in U_0$ is picked iff one of its incident edges is included in $B'$, and since the probability of each such event is at most $y_e$, by a union bound, and by~\eqref{eq:small-degree-proof3}, the probability that $u_0$ is picked is at most
$$D_0y_e=d_0\cdot\frac{k_0/|U_0|}{m/|B|}\cdot y_e=d_0\cdot\frac{k_0/|U_0|}{m/|B|}\cdot \frac{m}{|B|}=d_0\cdot\frac{k_0}{|U_0|}=d_0\cdot y_{u_0}.$$
\end{proof}

\section{Bounding the maximum degree}\label{sec:max-deg-bound}
It remains to show the correctness of Step~\ref{step:large-degree} in Algorithm {\bf{Faithful-S$m$ES}}. Before we do this, let us start with a simple claim.

\begin{claim}\label{claim:ave-is-max} In the graph $G_t$, the average degree and maximum degree of vertices in $W_t$ differ by at most a polylogarithmic factor (and the same holds for vertices in $S_t$).
\end{claim}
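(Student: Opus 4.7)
The plan is to show that both the maximum and average degree in $G_t$ of vertices in $W_t$ are controlled, up to polylog factors, by a single LP quantity, and similarly for $S_t$. Let $b\in\{0,1\}$ satisfy $W_t\subseteq V_{1-b}$ (so $S_t\subseteq V_b$), and let $\bar y_e$ (resp.\ $\bar y_w$, $\bar y_u$) denote the common LP value, up to $O(1)$, shared by edges in $E_t$ (resp.\ vertices in $W_t$, $S_t$). The existence of these uniform values is precisely the output of the bucketing step that defines $\widetilde B'_t$.

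For the maximum degree in $W_t$, I would apply Constraint~\eqref{LP:d-bound} with $T=\{w\}$ together with Constraint~\eqref{eq:smes3}: for any $w\in W_t$, $\sum_{u\in\Gamma(w)}y_{\{(u,w)\}}\leq d_{1-b}\,y_w$. Since $E_t\subseteq E(G)$ and $y_e=\Theta(\bar y_e)$ on $E_t$, this yields $\deg_{G_t}(w)\cdot\Omega(\bar y_e)\leq O(d_{1-b}\,\bar y_w)$, so $\deg_{G_t}(w)\leq O(d_{1-b}\,\bar y_w/\bar y_e)$. For the average degree $\Delta_{\mathrm{avg}}(W_t)=|E_t|/|W_t|$, I would apply Lemma~\ref{lem:regularity} to the caterpillar $K_{t+1}$ and the tree collection $B_{t+1}$ (which satisfies~\eqref{eq:lp-weight-preserve}). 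Choosing the distinguished position to be the last edge of $K_{t+1}$, the set of distinct images across trees in $B_{t+1}$ is precisely $E_t$, giving $\sum_{e\in E_t}y_e=\tilde\Theta(k_0d_0)$ and hence $|E_t|\,\bar y_e=\tilde\Theta(k_0d_0)$. Choosing the distinguished position to be the last vertex of $K_{t+1}$ (which lies in $V_{1-b}$), the set of distinct images is $W_t$, so $|W_t|\,\bar y_w=\tilde\Theta(k_{1-b})$. Combining these and using the near-regularity identity $k_0d_0=\tilde\Theta(k_{1-b}d_{1-b})$,
\[
\Delta_{\mathrm{avg}}(W_t)=\tilde\Theta\!\left(d_{1-b}\,\frac{\bar y_w}{\bar y_e}\right),
\]
which matches the upper bound on $\deg_{G_t}(w)$ up to polylog factors.

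The statement for $S_t$ then follows by the symmetric argument: the upper bound $\deg_{G_t}(u)\leq O(d_b\,\bar y_u/\bar y_e)$ comes from Constraint~\eqref{LP:d-bound} applied at a vertex $u\in S_t$, and the matching lower bound on $\Delta_{\mathrm{avg}}(S_t)$ comes from applying Lemma~\ref{lem:regularity} at the \emph{second-to-last} vertex of $K_{t+1}$ (the rightmost backbone vertex of $K_t$, which lies in $V_b$ and whose set of distinct images is $S_t$), yielding $|S_t|\,\bar y_u=\tilde\Theta(k_b)$ and hence $\Delta_{\mathrm{avg}}(S_t)=\tilde\Theta(d_b\,\bar y_u/\bar y_e)$. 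The main technical subtlety I anticipate is verifying that the distinct images of these three positions under trees in $B_{t+1}$ are exactly $E_t$, $W_t$, and $S_t$ respectively; this is essentially by definition, but requires some care because of the possibly degenerate (self-intersecting) caterpillars, and because Lemma~\ref{lem:regularity} is stated for bona fide trees — the losses from any such degeneracies are absorbed into the $\tilde\Theta$ notation exactly as in the earlier uses of this lemma.
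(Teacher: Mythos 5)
Your proof is correct and takes essentially the same route as the paper's: both combine Lemma~\ref{lem:regularity} (to pin down the average degree in $G_t$ up to polylog factors) with Constraint~\eqref{LP:d-bound} and the uniformity of LP values from bucketing (to cap the maximum degree by the same quantity). The paper phrases this through the ``LP degree'' $\sum_{u:(u,w)\in E_t} y_{\{u,w\}}/y_{\{w\}}$ of each vertex and shows that this quantity has average $\tilde\Omega(d_{1-b})$ and max at most $d_{1-b}$, whereas you expand that same calculation into explicit applications of Lemma~\ref{lem:regularity} at the relevant vertex and edge positions of $K_{t+1}$ in $B_{t+1}$ to get $|W_t|\,\bar y_w = \tilde\Theta(k_{1-b})$, $|S_t|\,\bar y_u = \tilde\Theta(k_b)$, and $|E_t|\,\bar y_e = \tilde\Theta(k_0 d_0)$, then divide; this is the same computation, just with the bookkeeping made explicit. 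You also correctly flag the one genuinely delicate point (that the distinct images of the three positions really do coincide with $E_t$, $W_t$, and $S_t$, and that degeneracies are absorbed into $\tilde\Theta$), which the paper handles implicitly in the same way.
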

\begin{proof}
Let us show this for $W_t$ (the proof for $S_t$ is identical). By Lemma~\ref{lem:regularity}, the LP degree of the average vertex $w\in W_t$ (that is, the quantity $\sum_{u:(u,w)\in E_t}y_{\{u,w\}}/y_{\{w\}}$) is $\Omega(d_1)$, and by Constraint~\eqref{LP:d-bound} no LP-degree (for vertices in $W_t$) can be more than $d_1$. Moreover, by uniformity of LP values, the LP-degrees are proportional to the graph degrees, which proves the claim.
\end{proof}

\begin{lemma}\label{lem:LP-greedy} If at any iteration $t$ Algorithm {\bf{Faithful-S$m$ES}}, the maximum degree in $G_t$ is at least $\widetilde\Omega\left(\frac{nd_0}{k_1f^2}\right)$, then Step~\ref{step:large-degree} gives a faithful factor-$f$ rounding.
\end{lemma}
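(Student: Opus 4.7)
The plan is to verify directly that Step~\ref{step:large-degree} satisfies the four conditions of weakly faithful rounding (Definition~\ref{def:weakly-faithful}) with $\varphi = 1/\polylog(n)$, and then invoke Lemma~\ref{lem:weakly-faithful} to lift this to a factor-$\tO(f)$ faithful rounding.

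First I would exploit the bucketing/regularity apparatus developed in Section~\ref{sec:SmES-details} to rewrite every LP value in terms of set cardinalities. By Lemma~\ref{lem:regularity} and the uniformity from bucketing, for $u \in S_t$, $w \in W_t$, and $e \in E_t$ one has $y_u = \widetilde{\Theta}(k_b/|S_t|)$, $y_w = \widetilde{\Theta}(k_{1-b}/|W_t|)$, and $y_e = \widetilde{\Theta}(k_0 d_0/|E_t|)$ (where $b\in\{0,1\}$ is such that $S_t\subseteq V_b$). I would also apply Claim~\ref{claim:ave-is-max} to convert the hypothesis on the maximum degree of $G_t$ into a bound on the average degree, which (WLOG assuming the max is attained on the $W_t$ side; the other case is symmetric) gives $|E_t| \geq \widetilde\Omega\bigl(|W_t|\cdot n d_0/(k_1 f^2)\bigr)$.

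Next I would verify that the per-edge retention probability $p_e := y_e/(y_u y_v f^2)$ used in Step~\ref{step:large-degree} is at most $1$ up to a polylog factor. Substituting the identities above and the lower bound on $|E_t|$ yields
\[
p_e \;=\; \widetilde\Theta\!\left(\frac{d_0\,|S_t|\,|W_t|}{k_1\,|E_t|\,f^2}\right) \;\leq\; \widetilde{O}\!\left(\frac{|S_t|}{n}\right) \;\leq\; \widetilde{O}(1).
\]
This is the step I expect to be the main obstacle, because it is the only place in the argument where the large-degree hypothesis is essentially used, and the resulting polylog slack must be absorbable into $\varphi$.

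The remaining conditions then drop out by routine counting. Each vertex in $S_t$ is retained with probability $k_0 f/|S_t| = \widetilde\Theta(f\,y_u)$, and similarly $k_1 f/|W_t| = \widetilde\Theta(f\,y_w)$ on the other side, giving Conditions~1 and~2 of Definition~\ref{def:weakly-faithful}. Because the two vertex subsets are sampled independently, each edge $e=(u,v)$ is retained with probability $(k_0 f/|S_t|)(k_1 f/|W_t|)\cdot p_e = \widetilde\Theta(y_e)$, yielding simultaneously the upper bound $\Pr[e\in E^*]\leq \varphi y_e$ and $\mathbb{E}[|E^*|] \geq \widetilde\Omega\bigl(\varphi \sum_e y_e\bigr)$. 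Finally, $|V^*|$ is deterministically $(k_0+k_1)f$, matching $\widetilde\Theta(f\sum_v y_v)$ up to constants since $\sum_v y_v = \widetilde\Theta(k_0+k_1)$ by Lemma~\ref{lem:regularity}. Choosing $\varphi = 1/\polylog(n)$ to absorb all polylog factors completes the verification of Definition~\ref{def:weakly-faithful}, and Lemma~\ref{lem:weakly-faithful} then upgrades this to the desired factor-$\tO(f)$ faithful rounding.
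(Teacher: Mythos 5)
Your proposal is correct and follows essentially the same route as the paper's proof: you rewrite the relevant LP values via Lemma~\ref{lem:regularity} and the bucketing uniformity, convert the maximum-degree hypothesis to an average-degree bound via Claim~\ref{claim:ave-is-max}, and show that the per-edge retention probability $p_e = y_e/(y_u y_w f^2)$ is $\widetilde{O}(1)$ --- which is precisely the chain of inequalities the paper uses to conclude that every edge $e\in E_t$ is picked with probability $\widetilde{\Theta}(y_e)$. Your only cosmetic difference is making the pass-through of polylogarithmic factors explicit by packaging the argument as a weakly faithful rounding and invoking Lemma~\ref{lem:weakly-faithful}, whereas the paper absorbs these into the $\widetilde{\Theta}(\cdot)$ notation.
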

\begin{proof}
Without loss of generality, assume $S_t\subseteq V_0$ (the case where $S_t\subseteq V_1$ is essentially the same). By Lemma~\ref{lem:regularity} and uniformity of LP values (which we get from bucketing), we have that for every vertex $u\in S_t$, $y_{\{u\}}\approx k_0/|S_t|$, for every vertex $w\in W_t$, $y_{\{w\}}\approx k_1/|W_t|$, and for every edge $e\in E_t$, $y_{\{e\}}\approx m/|E_t|$.

Thus, it is clear that the random sampling of vertices in Step~\ref{step:large-degree} is faithful with respect to vertices. It suffices, then,  to show that every edge $e\in E_t$ is picked with probability $\widetilde{\Theta}(y_{\{e\}})$ (this is sufficient since they have total LP weight $\widetilde{\Omega}(m)$). Indeed, since every vertex $v\in S_t\cup W_t$ is chosen with probability $\widetilde{\Theta}(y_{\{v\}} f)$, the probability that both endpoints of an edge in $e=(u,w)\in E_t$ are chosen is $\widetilde{\Theta}(y_{\{u\}}y_{\{w\}}f^2)$. Hence we only need to show that this probability is at least $y_{\{e\}}$. This follows since, by the maximum degree bound, and by Claim~\ref{claim:ave-is-max}, we have that (ignoring polylogarithmic factors):
$$|E_t|\geq |W_t|\cdot\frac{nd_0}{k_1f^2}\approx\frac{nd_0}{y_{\{w\}}f^2}\approx\frac{nm}{y_{\{w\}}f^2k_0}\geq\frac{|S_t|m}{y_{\{w\}}f^2k_0}\approx\frac{m}{y_{\{u\}}y_{\{w\}}f^2}\approx\frac{y_{\{e\}}|E_t|}{y_{\{u\}}y_{\{w\}}f^2}.$$
\end{proof}

\end{document}